\newtheorem{theorem}{Theorem}
\newtheorem{lemma}[theorem]{Lemma}
\newtheorem{corollary}[theorem]{Corollary}
\newtheorem{claim}[theorem]{Claim}
\newtheorem{proposition}[theorem]{Proposition}
\newtheorem{observation}[theorem]{Observation}
\theoremstyle{definition}
\newtheorem{example}{Example}
\newcommand{\var}{\mathsf{var}}
\newcommand{\calT}{\mathcal{T}}
\newcommand{\cc}{\mathsf{cc_{best}^{1/3}}}
\newcommand{\ccd}{\mathsf{cc}}
\newcommand{\tw}{\mathsf{tw}}
\newcommand{\twi}{\mathsf{tw_i}}
\newcommand{\twp}{\mathsf{tw_p}}
\newcommand{\twd}{\mathsf{tw_d}}
\newcommand{\wi}{\mathsf{wi}}
\newcommand{\pp}{\mathsf{p}}
\newcommand{\scw}{\mathsf{scw}}
\newcommand{\mtw}{\mathsf{mtw}}
\newcommand{\mimw}{\mathsf{mimw}}
\newcommand{\cw}{\mathsf{cw}}
\newcommand{\calC}{\mathcal{C}}
\newcommand{\constraint}[1]{\textbf{#1}}
\newcommand{\perm}{\constraint{PERM}$_n$}
\newcommand{\atmostone}{\constraint{AtMostOne}$_n$}
\title{Graph Width Measures for CNF-Encodings with Auxiliary Variables}
\author{Stefan Mengel\thanks{CRIL, CNRS} \and Romain Wallon\thanks{CRIL, Univ. Artois and CNRS}}
\begin{document}

\maketitle

\begin{abstract}
We consider bounded width CNF-formulas where the width is measured by popular graph width measures on graphs associated to CNF-formulas. Such restricted graph classes, in particular those of bounded treewidth, have been extensively studied for their uses in the design of algorithms for various computational problems on CNF-formulas. Here we consider the expressivity of these formulas in the model of clausal encodings with auxiliary variables. We first show that bounding the width for many of the measures from the literature leads to a dramatic loss of expressivity, restricting the formulas to those of low communication complexity. We then show that the width of optimal encodings with respect to different measures is strongly linked: there are two classes of width measures, one containing primal treewidth and the other incidence cliquewidth, such that in each class the width of optimal encodings only differs by constant factors. Moreover, between the two classes the width differs at most by a factor logarithmic in the number of variables. Both these results are in stark contrast to the setting without auxiliary variables where all width measures we consider here differ by more than constant factors and in many cases even by linear factors.
\end{abstract}

\section{Introduction}

Graph width measures like treewidth and cliquewidth have been studied extensively in the context of propositional satisfiability. The general idea is to assign graphs to CNF-formulas and compute their width with respect to different width measures. Then, if the resulting width is small, there are algorithms that solve SAT, but also more complex problems like \#SAT or MAX-SAT or even QBF efficiently, see e.g.~\cite{SamerS10,FischerMR08,SlivovskyS13,PaulusmaSS16,SaetherTV15,Chen04} for this line of work. There is also a considerable body of work on reasoning problems from artificial intelligence restricted to knowledge encoded by CNF-formulas with restricted underlying graphs: for example, treewidth restrictions have been studied for abduction, closed world reasoning, circumscription, disjunctive logic programming~\cite{GottlobPW10} and answer set programming~\cite{JaklPW09}.
There is thus by now a large body of work on how problems can be solved on bounded width CNF-formulas for different graph width measures.

Curiously, however, there seems to be very little work on the natural question of what we can actually encode with these restricted CNF-formulas. This question is pertinent because good algorithms for problems are less attractive if they cannot deal with interesting instances.  
We make two main contributions on the expressivity of bounded width CNF-formulas here.

As a first main contribution, we show, for a wide class of width measures, that one can give width lower bounds of any encoding of a function by means of communication complexity (Theorem~\ref{thm:abstract}). Such lower bounds were known for tree\-width~\cite{BriquelKM11}, but with our general approach, we extend them for many different width measures, in particular (signed and unsigned) cliquewidth~\cite{FischerMR08,SlivovskyS13}, modular treewidth~\cite{PaulusmaSS16} and MIM-width~\cite{SaetherTV15}. As a consequence, in a sense, for all these measures, formulas of bounded width can only encode simple functions.

All these lower bounds not only work for \emph{representations} of functions as CNF-formulas but also on \emph{clausal encodings}, i.e.~CNF-formulas using auxiliary variables. It is folklore that adding auxiliary variables can decrease the size of an encoding: for example the parity function has no subexponential CNF-representations but there is an easy linear size encoding using auxiliary variables. We here observe a similar effect for the example of treewidth: we show that any CNF-representation of the \atmostone-function of $n$ inputs without auxiliary variables has primal treewidth $n-1$ which is the highest possible. But when authorizing the use of auxiliary variables, \atmostone{} can be computed with formulas of bounded treewidth easily. This shows that lower bounds for clausal encodings are far stronger than those of CNF-representations. Considering that \atmostone{} is arguably a very easy function, we feel that encodings with auxiliary variables are the more interesting notion in our setting so we focus on them here.

We remark that this is of course not the first time that communication complexity has been used to show lower bounds on the size or width of representations for Boolean functions. In fact, this is one of the motivations for the development of the area and there is a large literature on this, see e.g.~the textbooks~\cite{KushilevitzN97,DBLP:books/daglib/0087324,DBLP:books/daglib/0028687}. In particular, there are many results for showing lower bounds on different forms of branching programs by means of communication complexity, see e.g.~\cite{Wegener00,DBLP:journals/iandc/DurisHJSS04}. More recently, this approach has been generalized to more general languages considered in knowledge compilation~\cite{PipatsrisawatD10,BovaCMS16}. However, beyond the already discussed lower bounds on treewidth in~\cite{BriquelKM11}, we are not aware of any use of communication complexity to show bounds on width measures of CNF-formulas.

In a second main contribution, we focus on the \emph{relative} expressive power of different graph width measures for clausal encodings. For the graph width measures studied in the literature, it is known that without auxiliary variables the expressivity of bounded width CNF-formulas is different for all notions and they form a partial order with so-called MIM-width as the most general notion, see e.g.~\cite[Section 5]{Brault-BaronCM14}. Somewhat surprisingly, the situation changes completely when one allows auxiliary variables: in this setting, the commonly considered width notions are all up to constant factors equivalent to either primal treewidth or to incidence cliquewidth (Theorem~\ref{thm:main}). This is true for every individual function. 
We remark that for the parameters primal treewidth, dual treewidth and incidence treewidth, it was already known that the width of encodings minimizing the respective width measures differs only by constant factors~\cite{SamerS10b,BriquelKM11,LampisMM18}. All other relationships are new.

We also show that, assuming that an optimal encoding of a function has at least primal treewidth $\log(n)$ where $n$ is the number of variables, incidence cliquewidth and primal treewidth differ exactly by a factor of $\Theta(\log(n))$ for optimal encodings. So, up to a logarithmic scaling, in fact all the width measures in~\cite{SamerS10,FischerMR08,SlivovskyS13,PaulusmaSS16,SaetherTV15} coincide when allowing auxiliary variables. Note that this scaling exactly corresponds to the runtime differences of many algorithms: while treewidth based algorithms often have runtimes of the type $2^{O(k)} n^c$ for treewidth~$k$ and a constant $c$, cliquewidth based algorithms typically give runtimes roughly $n^{O(k')}$ for cliquewidth $k'$. These runtimes coincide exactly when treewidth and cliquewidth differ by a logarithmic factor which, as we show here, they do generally for encodings with auxiliary variables. 

We finally use our main results for several applications. In particular, we answer an open question of~\cite{BriquelKM11} on the cliquewidth of the permutation function \perm{}
and generalize a classical theorem on planar circuits from~\cite{LiptonT80},
see Section~\ref{sec:examples} for details.

Most of our results use machinery recently developed in the area of knowledge compilation. In particular, we use a combination of the algorithm in~\cite{BovaCMS15}, the width notion for DNNF developed in~\cite{CapelliM18} and the lower bound techniques from~\cite{PipatsrisawatD10,BovaCMS16}. Relying on these building blocks, most of our proofs become rather simple.

\section{Preliminaries}

\subsection{CNF-Formulas and their Graphs}

We use standard notations for CNF-formulas as it can e.g.~be found in~\cite{Handbook09}.
Let~$X$ be a set of variables. A \emph{CNF-representation} of a Boolean function $f$ in variables~$X$ is a CNF-formula $F$ on the variable set~$X$ that has as models exactly the assignments on which $f$ evaluates to true. A \emph{clausal encoding} of $f$ is a CNF-formula $F'$ on a variable set $X\cup Y$ such that 
\begin{itemize}
 \item for every assignment $a:X\rightarrow \{0,1\}$ on which $f$ evaluates to true, there is an extension $a'$ of $a$ to $Y$ that is a model of $F'$, and 
 \item for every assignment $a:X\rightarrow \{0,1\}$ on which $f$ evaluates to false, no extension $a'$ of $a$  to $Y$ is a model of $F'$.
\end{itemize}
The variables in $Y$ are called \emph{auxiliary variables}. An auxiliary variable $y$ is called \emph{dependent} if and only if in the first item above all extensions $a'$ satisfying $F'$ take the same value on $y$~\cite{GiunchigliaMT02}. We say that a clausal encoding has \emph{dependent auxiliary variables} if all its auxiliary variables are dependent. Note that for such an encoding the extension $a'$ is unique.

We use standard notations from graph theory and assume the reader to have a basic background in the area~\cite{Diestel12}. By $N(v)$ we denote the open neighborhood of a vertex in a graph.

In some parts of this paper, we will also deal with Boolean circuits. We assume that the reader is familiar with basic definitions in the area. As it is common when considering circuits with structurally restricted underlying graphs, we assume that every input variable appears in only one input gate. This property is sometimes called the \emph{read-once property}.

To every CNF-formula $F$, we assign two graphs. The \emph{primal graph} of $F$ has as vertices the variables of $F$ and two variables $x,y$ are connected by an edge if and only if there is a clause $C$ such that a literal in $x$ and a literal in $y$ appear in $C$. The \emph{incidence graph} of $F$ has as vertex set the union of the variable set and the clause set of $F$. Edges in the incidence graph are exactly the pairs $x,C$ where $x$ is a variable and $C$ a clause that contains a literal in $x$.

\begin{example}
\label{ex:cnf-graphs}
Let us consider the clauses $C_1 := x_1 \vee \neg x_2$, $C_2 := x_2 \vee x_3 \vee \neg x_4 \vee \neg x_5$, $C_3 := \neg x_4 \vee x_5$ and
$C_4 := x_4 \vee x_5$, and let the CNF-formula $F$ be defined as $F := C_1 \wedge C_2 \wedge C_3 \wedge C_4$.
Its primal and incidence graphs are given in Figure~\ref{ex:fig:cnf-graphs}.

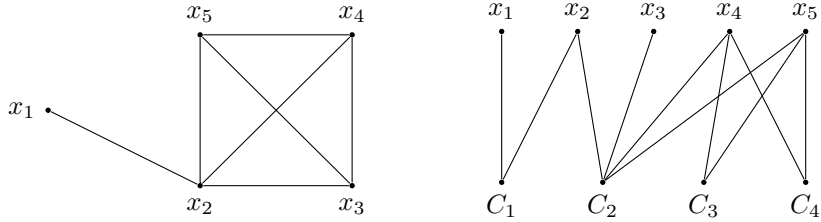
\begin{figure}
\begin{subfigure}{0.5\textwidth}
\centering
\begin{tikzpicture}

	\draw (0, 1) node (x1) [label=left:{$x_1$}, circle, fill=black, minimum size=2pt, inner sep=0pt] {};
	\draw (2, 0) node (x2) [label=below:{$x_2$}, circle, fill=black, minimum size=2pt, inner sep=0pt] {};
	\draw (4, 0) node (x3) [label=below:{$x_3$}, circle, fill=black, minimum size=2pt, inner sep=0pt] {};
	\draw (4, 2) node (x4) [label=above:{$x_4$}, circle, fill=black, minimum size=2pt, inner sep=0pt] {};
	\draw (2, 2) node (x5) [label=above:{$x_5$}, circle, fill=black, minimum size=2pt, inner sep=0pt] {};
	
	\draw (x1) edge (x2);
	
	\draw (x2) edge (x2);
	\draw (x2) edge (x3);
	\draw (x2) edge (x4);
	\draw (x2) edge (x5);
	
	\draw (x3) edge (x4);
	\draw (x3) edge (x5);
	
	\draw (x4) edge (x5);

\end{tikzpicture}
\end{subfigure}
\hfill
\begin{subfigure}{0.5\textwidth}
\centering
\begin{tikzpicture}

	\draw (0, 2) node (x1) [label=above:{$x_1$}, circle, fill=black, minimum size=2pt, inner sep=0pt] {};
	\draw (1, 2) node (x2) [label=above:{$x_2$}, circle, fill=black, minimum size=2pt, inner sep=0pt] {};
	\draw (2, 2) node (x3) [label=above:{$x_3$}, circle, fill=black, minimum size=2pt, inner sep=0pt] {};
	\draw (3, 2) node (x4) [label=above:{$x_4$}, circle, fill=black, minimum size=2pt, inner sep=0pt] {};
	\draw (4, 2) node (x5) [label=above:{$x_5$}, circle, fill=black, minimum size=2pt, inner sep=0pt] {};
	
	\draw (0,    0) node (C1) [label=below:{$C_1$}, circle, fill=black, minimum size=2pt, inner sep=0pt] {};
	\draw (1.33, 0) node (C2) [label=below:{$C_2$}, circle, fill=black, minimum size=2pt, inner sep=0pt] {};
	\draw (2.66, 0) node (C3) [label=below:{$C_3$}, circle, fill=black, minimum size=2pt, inner sep=0pt] {};
	\draw (4,    0) node (C4) [label=below:{$C_4$}, circle, fill=black, minimum size=2pt, inner sep=0pt] {};
	
	\draw (x1) edge (C1);
	\draw (x2) edge (C1);
	
	\draw (x2) edge (C2);
	\draw (x3) edge (C2);
	\draw (x4) edge (C2);
	\draw (x5) edge (C2);
	
	\draw (x4) edge (C3);
	\draw (x5) edge (C3);
	
	\draw (x4) edge (C4);
	\draw (x5) edge (C4);
\end{tikzpicture}
\end{subfigure}
\caption{Graphs associated to the CNF-formula $F$ in Example~\ref{ex:cnf-graphs}: primal graph (left) and incidence graph (right).}
\label{ex:fig:cnf-graphs}
\end{figure}
\end{example}

\subsection{Graph Width Measures} 

In this section, we will introduce several graph width measures we will consider throughout this paper.
A \emph{tree decomposition} $(T,(B_t)_{t\in V(T)})$ of a graph $G=(V,E)$ consists of a tree~$T$ and, for every node $t$ of $T$, a set $B_t\subseteq V$ called \emph{bag} such that:
\begin{itemize}
 \item $\bigcup_{t\in V(T)} B_t = V$,
 \item for every edge $uv\in E$, there is a bag $B_t$ such that $\{u,v\}\subseteq B_t$, and
 \item for every $v\in V$, the set $\{t\in V(T)\mid v\in B_t\}$ is connected in $T$.
\end{itemize}
The \emph{width} of a tree decomposition is defined as~$\max \{|B_t|\mid t\in V(T)\} -1$. The \emph{treewidth} $\tw(G)$ of $G$ is defined as the minimum width taken over all tree decompositions of $G$. The \emph{primal treewidth} $\twp(F)$ of a CNF-formula $F$ is defined as the treewidth of its primal graph and the \emph{incidence treewidth} $\twi(F)$ of $F$ is defined as that of the incidence graph.

\begin{example}
\label{ex:tw}
Let us again consider the formula $F$ of Example~\ref{ex:cnf-graphs}.
Figure~\ref{ex:fig:tw} shows a tree decomposition of the primal graph and the incidence graph of $F$.
Both of these decompositions are optimal: it is well-known that for every tree decomposition of a graph $G$, the vertices of every clique must be contained in a common bag. So, in this case, $x_2, x_3, x_4, x_5$ must be in one bag for every tree decomposition of the primal graph of $F$ and thus $\twp(F) \ge 3$ which shows that the decomposition of Figure~\ref{ex:fig:tw} is optimal and $\twp(F) = 3$. Concerning the treewidth of the incidence graph, remark that this graph has a cycle and is thus not a tree. Since trees are well-known to be the only graphs of treewidth $1$, it follows that $\twi(F) \ge 2$ and thus the decomposition in Figure~\ref{ex:fig:tw} is optimal and $\twi(F)=2$.

\begin{figure}
\centering
\begin{subfigure}{0.45\textwidth}
\centering
\begin{tikzpicture}[inner/.style={draw,inner sep=3pt, rounded corners}]
\node[inner] {$x_1, x_2$} 
	child {
		node[inner] {$x_2$, $x_3$, $x_4$, $x_5$}
	 };

\end{tikzpicture}
\end{subfigure}
\hfill
\begin{subfigure}{0.45\textwidth}
\centering
\begin{tikzpicture}[inner/.style={draw,inner sep=3pt, rounded corners},level 3/.style={sibling distance=5em}, level 4/.style={sibling distance=5em},level distance=3.5em]
\node[inner] {$x_1, C_1$} 
	child {
		node[inner] {$x_2, C_1$}
		child {
			node[inner] {$x_2, C_2$}
			child {
				node[inner] {$x_3, C_2$}
			}
			child {
				node[inner] {$x_4, x_5, C_2$}
				child {
					node[inner] {$x_4, x_5, C_3$}			
				}
				child {
					node[inner] {$x_4, x_5, C_4$}					
				}	
			}
		}
	 };
\end{tikzpicture}
\end{subfigure}
\caption{Tree decompositions of the graphs associated to the CNF-formula $F$ in Example~\ref{ex:tw}: primal graph (left) and incidence graph (right).}
\label{ex:fig:tw}
\end{figure}
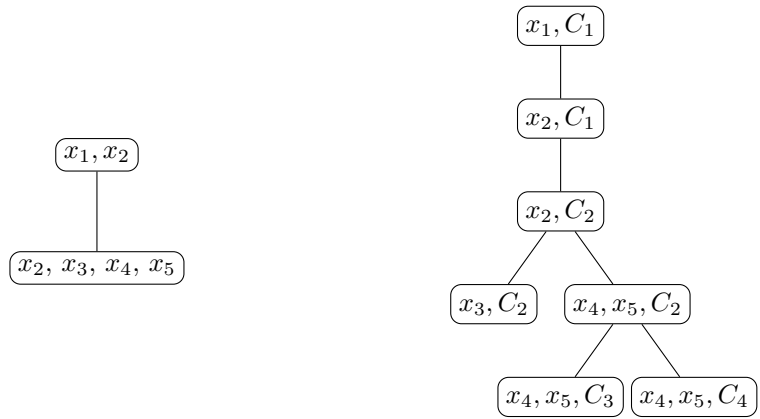
\end{example}

We say that two vertices $u$, $v$ in a graph $G=(V,E)$ have the same \emph{neighborhood type} if and only if $N(u)\setminus \{v\} = N(v)\setminus \{u\}$. It can be shown that having the same neighborhood type is an equivalence relation on $V$. 
A generalization of treewidth is \emph{modular treewidth} which is defined as follows: from a graph $G$ we construct a new graph $G'$ by contracting all vertices sharing a neighborhood type, i.e., from every equivalence class we delete all vertices but one. The modular treewidth of $G$ is then defined to be the treewidth of $G'$. The modular treewidth $\mtw(F)$ of a CNF-formula $F$ is defined as the modular treewidth of its incidence graph.

\begin{example}
\label{ex:mtw}
Let us consider again the formula $F$ from Example~\ref{ex:cnf-graphs}.
Figure~\ref{ex:fig:mtw} shows a contraction of all vertices sharing a neighborhood type in the incidence graph of $F$.
This contraction resulting in a tree, we have that $\mtw(F) = 1$.

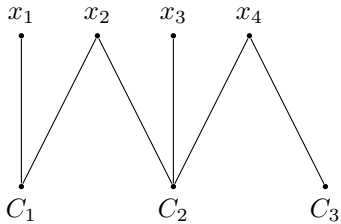
\begin{figure}
\centering
\begin{tikzpicture}[inner/.style={draw,inner sep=3pt, rounded corners}]
\draw (0, 2) node (x1) [label=above:{$x_1$}, circle, fill=black, minimum size=2pt, inner sep=0pt] {};
	\draw (1, 2) node (x2) [label=above:{$x_2$}, circle, fill=black, minimum size=2pt, inner sep=0pt] {};
	\draw (2, 2) node (x3) [label=above:{$x_3$}, circle, fill=black, minimum size=2pt, inner sep=0pt] {};
	\draw (3, 2) node (x45) [label=above:{$x_{4}$}, circle, fill=black, minimum size=2pt, inner sep=0pt] {};
	
	\draw (0, 0) node (C1) [label=below:{$C_1$}, circle, fill=black, minimum size=2pt, inner sep=0pt] {};
	\draw (2, 0) node (C2) [label=below:{$C_2$}, circle, fill=black, minimum size=2pt, inner sep=0pt] {};
	\draw (4, 0) node (C34) [label=below:{$C_{3}$}, circle, fill=black, minimum size=2pt, inner sep=0pt] {};
	
	\draw (x1) edge (C1);
	\draw (x2) edge (C1);
	
	\draw (x2) edge (C2);
	\draw (x3) edge (C2);
	\draw (x45) edge (C2);
	
	\draw (x45) edge (C34);
\end{tikzpicture}
\caption{A contraction of the incidence graph of the CNF-formula $F$ in Example~\ref{ex:mtw}. 
In the original graph, $x_4$ and $x_5$ have the same neighborhood type, as do $C_3$ and $C_4$.
We thus get the shown contraction by deleting $x_5$ and $C_4$.
Note that the obtained graph is a tree.}
\label{ex:fig:mtw}
\end{figure}
\end{example}

The \emph{cliquewidth} $\cw(G)$ of a graph $G$ is defined as the minimum number of labels needed to construct $G$ with the following operations:
\begin{itemize}
 \item creating a new vertex with label $i$,
 \item taking the disjoint union of two labeled graphs,
 \item joining all vertices with a label $i$ to all vertices with a label $j$ for $i\ne j$, and
 \item renaming a label $i$ to $j$ for $i\ne j$.
\end{itemize}
The \emph{incidence cliquewidth}~$\cw(F)$ of a formula $F$ is defined as the cliquewidth of the incidence graph of $F$~\cite{SlivovskyS13}.

Finally, we consider the adaption of cliquewidth to signed graphs. To this end, let us make some additional definitions. The \emph{signed incidence graph} $G'$ of a CNF-formula $F$ is the graph we get from the incidence graph $G=(V,E)$ by labeling the edges with $\{+,-\}$ as follows: 
\begin{itemize}
 \item every edge $xC$ such that $x$ appears positively in $C$ is labeled by $+$, and 
 \item every edge $xC$ such that $x$ appears negatively in $C$ is labeled by $-$.
\end{itemize}

The signed cliquewidth of a graph $G'$ is defined as the minimum number of labels needed to construct $G'$ with the following operations:
\begin{itemize}
 \item creating a new vertex with label $i$,
 \item taking the disjoint union of two labeled graphs,
 \item joining all vertices with a label $i$ to all vertices with a label $j$ for $i\ne j$ by an edge with label $+$,
 \item joining all vertices with a label $i$ to all vertices with a label $j$ for $i\ne j$ by an edge with label $-$, and
 \item renaming a label $i$ to $j$ for $i\ne j$.
\end{itemize}
The \emph{signed incidence cliquewidth} $\scw(F)$ of $F$ is defined as the signed cliquewidth of its signed incidence graph~\cite{FischerMR08}.

We will deal with several other graph width measures for a CNF-formula in the remainder of this paper, in particular dual treewidth~$\twd(F)$ and MIM-width $\mimw(F)$. Since for those notions we will only use some of their properties, we will refrain from overwhelming the reader by giving their definitions and refer to the literature, e.g.~\cite{SamerS10,FischerMR08,vatshelleThesis,SaetherTV15,SlivovskyS13}.

We also consider the treewidth $\tw(C)$ and the cliquewidth $\cw(C)$ of Boolean circuits $C$.

\subsection{Communication Complexity}

Here we give some very basic notions of communication complexity, focusing only on so-called combinatorial rectangles, which are an important object in the field.
For more details, the reader is referred to the very readable textbook~\cite{KushilevitzN97}. 

Let $X$ be a set of variables and $\Pi = (Y,Z)$ a partition of $X$. A \emph{combinatorial rectangle} respecting $\Pi$ is a Boolean function $r(X)$ that can be written as a conjunction $r(X) = r_1(Y) \land r_2(Z)$.
For a Boolean function $f$ on $X$, a \emph{rectangle cover of size $s$} respecting $\Pi$ is defined to be a representation \[f(X) = \bigvee_{i=1}^s r^i(X) = \bigvee_{i=1}^s r_1^i(Y) \land r_2^i(Z),\] 
where all  $r^i(X) = r_1^i(Y) \land r_2^i(Z)$ are combinatorial rectangles respecting $\Pi$.
The non-deterministic communication complexity $\ccd(f,\Pi)= \ccd(f, (Y, Z))$ of~$f$ is defined as $\log(s_{\min})$ where $s_{\min}$ is the minimum size of any rectangle cover of~$f$ respecting~$\Pi$.

\begin{example}
 By definition, all formulas in disjunctive normal forms are rectangle covers of the functions they compute respecting all possible partitions. For example, \[F= (\neg x\land \neg y \land z) \lor (x\land y \land z) \lor (x \land \neg y \land \neg z)\] is a rectangle cover of size $3$ respecting every partition of $\{x,y,z\}$. However, for example for the partition $(\{x,y\}, \{z\})$, there is the smaller rectangle cover 
 \[( ((\neg x \land \neg y) \lor (x \land y)) \land z) \lor (x \land \neg y \land \neg z)\] of size $2$. It is not hard to see that there is no smaller rectangle cover of $F$ for this partition.
\end{example}

The best-case non-deterministic communication complexity with $\frac 1 3$-balance $\cc(f)$ is defined as $\cc(f) := \min_\Pi(\ccd(f, \Pi))$ where the minimum is over all partitions $\Pi=(Y,Z)$ of $X$ with $\min(|Y|, |Z|) \ge |X|/3$.

\begin{example}
 Consider the function $\constraint{EQ}_n(x_1, \ldots x_n, y_1, \ldots, y_n)$ which is true if and only if for every $i\in [n]$ we have $x_i = y_i$. It is well-known that for the partition $\Pi_1 = (\{x_1, \ldots, x_n\}, \{y_1, \ldots, y_n\})$ we have $\ccd(\constraint{EQ}_n, \Pi_1)= n$, see e.g.~\cite[Chapter 2]{KushilevitzN97}. However, for the partition \[\Pi_2 = (\{x_1, y_1, \ldots, x_{\lceil n/2 \rceil}, y_{\lceil n/2 \rceil}\}, \{x_{\lceil n/2 \rceil+1}, y_{\lceil n/2 \rceil+1}, \ldots, x_n, y_n\})\] we have that \[\constraint{EQ}_n(x_1, \ldots x_n, y_1, \ldots, y_n)= \left(\bigwedge_{i=1}^{\lceil n/2 \rceil} x_i = y_i \right) \land \left( \bigwedge_{i = \lceil n/2 \rceil}^n x_{i} = y_{i} \right)\] is a rectangle cover of size $1$ respecting $\Pi_2$. Thus, we have $\cc(\constraint{EQ}_n) = \ccd(\constraint{EQ}_n, \Pi_2) = 0$.
\end{example}

\subsection{Structured Deterministic DNNF}

Out of the rich landscape of representations from knowledge compilation, see e.g.~\cite{DarwicheM02,PipatsrisawatD08}, we only introduce one that we will use in the remainder of this paper.
For all circuits in this section, we assume that $\land$-gates have exactly two inputs while the number of $\lor$-gates may be arbitrary.

A \emph{v-tree} $T$ for a variable set $X$ is a full binary tree whose leaves are in bijection with $X$. We call the variable assigned by this bijection to a leaf $v$ the \emph{label} of $v$. For a node $t\in T$, we denote by $T_t$ the subtree of $T$ that has $t$ as its root and by $\var(T_t)$ the variables that are labels of leaves in $T_t$.

\begin{example}\label{ex:vtree}
 We give a v-tree for the variable set $\{x,y,z\}$ on the left of Figure~\ref{ex:fig:dnnf}.
\end{example}

\begin{figure}
\centering
    \begin{tikzpicture}[]
  \node[draw,circle,label=right:{$a$}] (top) at (0, 0) {};
  \node[draw,circle] (i1) at (-.5, -1) {$x$};
  \node[draw,circle] (i11) at (1, -2) {$z$};
  \node[draw,circle] (i10) at (0, -2) {$y$};
  \node[draw,circle,label=right:{$b$}] (i2) at (.5, -1) {};
  \draw[<-] (i10)--(i2);
  \draw[<-] (i11)--(i2);
  \draw[<-] (i1)--(top);
  \draw[<-] (i2)--(top);
  
\node[draw,circle split,scale=.6, minimum size=1.2cm] (a1) at (5,0) {$\lor$ \nodepart{lower} $a$} ;
\node[draw,circle split,scale=.6, minimum size=1.2cm] (a2) at (4.5,-1) {$\land$ \nodepart{lower} $a$} ;
\node[draw,circle split,scale=.6, minimum size=1.2cm] (a3) at (5.5,-1) {$\land$ \nodepart{lower} $a$} ;

\node[draw,circle split,scale=.6, minimum size=1.2cm] (x1) at (3,-2) {$x$ \nodepart{lower} $x$} ;
\node[draw,circle split,scale=.6, minimum size=1.2cm] (x2) at (4,-2) {$\neg x$ \nodepart{lower} $x$} ;

\node[draw,circle split,scale=.6, minimum size=1.2cm] (b1) at (6,-2) {$\lor$ \nodepart{lower} $b$} ;
\node[draw,circle split,scale=.6, minimum size=1.2cm] (b2) at (7,-2) {$\lor$ \nodepart{lower} $b$} ;

\node[draw,circle split,scale=.6, minimum size=1.2cm] (b3) at (5.5,-3) {$\land$ \nodepart{lower} $b$} ;
\node[draw,circle split,scale=.6, minimum size=1.2cm] (b4) at (6.5,-3) {$\land$ \nodepart{lower} $b$} ;
\node[draw,circle split,scale=.6, minimum size=1.2cm] (b5) at (7.5,-3) {$\land$ \nodepart{lower} $b$} ;

\node[draw,circle split,scale=.6, minimum size=1.2cm] (y1) at (5,-4) {$y$ \nodepart{lower} $y$} ;
\node[draw,circle split,scale=.6, minimum size=1.2cm] (y2) at (6,-4) {$\neg y$ \nodepart{lower} $y$} ;
\node[draw,circle split,scale=.6, minimum size=1.2cm] (z1) at (7,-4) {$z$ \nodepart{lower} $z$} ;
\node[draw,circle split,scale=.6, minimum size=1.2cm] (z2) at (8,-4) {$\neg z$ \nodepart{lower} $z$} ;

\draw[->] (a2)--(a1);
\draw[->] (a3)--(a1);
\draw[->] (x1)--(a2);
\draw[->] (x2)--(a3);
\draw[->] (b1)--(a2);
\draw[->] (b2)--(a3);
\draw[->] (b3)--(b1);
\draw[->] (b4)--(b1);
\draw[->] (b5)--(b2);
\draw[->] (y1)--(b3);
\draw[->] (y2)--(b4);
\draw[->] (y2)--(b5);
\draw[->] (z1)--(b3);
\draw[->] (z2)--(b4);
\draw[->] (z2)--(b5);

\end{tikzpicture}
\caption{A v-tree on the left and a complete structured DNNF structured by this v-tree. For the internal nodes of the v-tree, we give node names on the right of the nodes whereas for leaves we assume that the name is the label.
All gates of the complete structured DNNF show the operation of the gate (on top) and the name $t$ of the node in the v-tree for which this gate is in $\mu(t)$ (on bottom).}
\label{ex:fig:dnnf}
\end{figure}
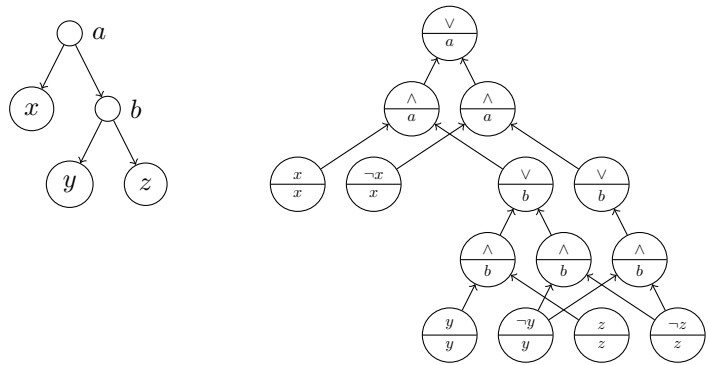

We give some definitions from~\cite{CapelliM18}.
A \emph{complete structured DNNF} $D$ structured by a v-tree $T$ is a Boolean circuit with the following properties: there is a labeling $\mu$ of the nodes in $T$ with subsets of gates of $D$ such that:
\begin{itemize}
 \item For every gate $g$ of $D$ there is a unique node $t_g$ of $T$ with $g \in \mu(t_g)$.
 \item If $t$ is a leaf labeled by a variable $x$, then $\mu(t)$ may only contain $x$ and $\neg x$. Moreover, for every input gate $g$, the node $t_g$ is a leaf.
 \item For every $\lor$-gate $g$, all inputs are $\land$-gates in $\mu(t_g)$.
 \item Every $\land$-gate $g$ has exactly two inputs $g_1, g_2$ that are both $\lor$-gates or input gates. Moreover, $t_{g_1}$ and $t_{g_2}$ are the children of $t_g$ in $T$ and in particular $t_{g_1}\ne t_{g_2}$.
\end{itemize}
The \emph{width} $\wi(D)$ of $D$ is defined as the maximal number of $\lor$-gates in any set $\mu(t)$.
We often speak of complete structured DNNF without mentioning the v-tree by which it is structured in cases where the form of the v-tree is unsubstantial.
Intuitively, a complete structured DNNF is a Boolean circuit in negation normal form in which the gates are organized into blocks $\lambda(t)$ which form a tree shape. In every block one then computes a 2-DNF whose inputs are gates from the blocks that are the children of $\lambda(t)$ in the tree shape.

\begin{example}
 On the right side of Figure~\ref{ex:fig:dnnf}, we give a complete structured DNNF structured by the v-tree of Example~\ref{ex:vtree}. There are $3$ $\lor$-gates in $\mu(b)$, so the width of the given complete structured DNNF is $3$.
\end{example}

A complete structured DNNF is called \emph{deterministic} if and only if for every assignment and for every $\lor$-gate, at most one input evaluates to true.
Note that we do not allow constant input gates here. We remark that if we allowed those, we could always get rid of them in the circuit by propagation without changing any other properties of the circuit, see~\cite[Section~4]{CapelliM18}.
We also remark that in a complete structured DNNF $D$, we can \emph{forget} a variable $x$, i.e., construct a complete structured DNNF $D'$ computing $\exists x D$, by setting all occurrences of $x$ and $\neg x$ to $1$ and propagating the constants in the obvious way. This operation does not increase the width, see~\cite{CapelliM18}. However, if $D$ is deterministic, this is generally not the case for $D'$.

\section{The Effect of Auxiliary Variables}

In this section, we will motivate the use of auxiliary variables when considering width measures of CNF-encodings. To this end, we will show with an example that auxiliary variables may arbitrarily reduce the treewidth of encodings. Note that this is not very surprising since it is not too hard to see that CNF-representations of, say, the parity function, are of high treewidth. However, in this case the \emph{size} of the representation is exponential, so in a sense parity is a hard function for CNF-representations anyway. Here we will show that even for functions that have small CNF-representations there can be a large gap between the treewidth of representations and clausal encodings with auxiliary variables. That is why we think it is useful to systematically study width measures for clausal encodings.

As an example for a function where auxiliary variables have a dramatic impact on width, consider the \atmostone-function on variables $x_1, \ldots, x_n$ which accepts exactly those assignments in which at most one variable is assigned to $1$. There is an obvious quadratic size representation as
\[\text{\atmostone}(x_1,\dots,x_n) = \bigwedge_{i,j\in [n], i < j} \neg x_i \lor \neg x_j.\]
However, this representation has as primal graph the clique $K_n$ which is of treewidth $n-1$. We will see that in fact there is no representation of \atmostone{} that is of smaller primal treewidth unless one adds auxiliary variables, in which case there is a simple encoding of primal treewidth~$2$.

\begin{theorem}
\label{thm:amo}
Any CNF-representation of the \atmostone-function of $n$ inputs without auxiliary variables has primal treewidth $n-1$.
However, there is a clausal encoding of \atmostone{} of primal treewidth~$2$.
\end{theorem}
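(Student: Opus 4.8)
The plan is to establish the two assertions separately, the lower bound for representations and the upper bound for encodings.

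For the lower bound I would show that the primal graph of \emph{any} auxiliary-variable-free CNF-representation $F$ of \atmostone{} is the complete graph $K_n$. Since $K_n$ has treewidth $n-1$ and every graph on $n$ vertices has treewidth at most $n-1$, this immediately yields $\twp(F)=n-1$. Thus the whole task reduces to proving that each pair of distinct variables $x_i,x_j$ co-occurs in some clause. I would argue by contradiction: fix $i\neq j$ and suppose no clause contains literals in both $x_i$ and $x_j$. Consider the assignment $a$ setting $x_i=x_j=1$ and all other variables to $0$; since \atmostone{} rejects $a$ and $F$ has no auxiliary variables, some clause $C$ of $F$ is falsified by $a$. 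Every literal of $C$ is then false under $a$, so $C\subseteq\{\neg x_i,\neg x_j\}\cup\{x_k: k\notin\{i,j\}\}$, and by assumption $C$ contains at most one of $\neg x_i,\neg x_j$. A short case analysis closes the argument: if $C$ contains $\neg x_i$ but not $\neg x_j$, then the accepting assignment that sets only $x_i=1$ still falsifies $C$ (the $\neg x_i$ literal is false and every remaining $x_k$ is $0$), contradicting that $F$ represents \atmostone{}; the case with $\neg x_j$ is symmetric; and if $C$ contains neither negative literal, then $C$ consists solely of positive literals $x_k$ and is falsified by the accepting all-zero assignment.

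For the upper bound I would exhibit the standard sequential ("ladder") encoding. Introduce auxiliary variables $s_1,\dots,s_n$ with intended meaning ``$s_i$ holds iff some $x_1,\dots,x_i$ is true'', and take the clauses $\neg x_i\lor s_i$ for all $i$, together with $\neg s_{i-1}\lor s_i$ and $\neg s_{i-1}\lor\neg x_i$ for $i=2,\dots,n$. Correctness is then checked directly: for an accepting input with at most one $x_k$ true, setting $s_i=0$ for $i<k$ and $s_i=1$ for $i\ge k$ (or all $s_i=0$ if no $x_k$ is true) satisfies all clauses; conversely, if $x_j=x_k=1$ with $j<k$, the clause $\neg x_j\lor s_j$ forces $s_j=1$, the chain of clauses $\neg s_{i-1}\lor s_i$ propagates this to $s_{k-1}=1$, and then $\neg s_{k-1}\lor\neg x_k$ is violated, so no extension exists. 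The primal graph consists, for each $i\ge 2$, of the triangle on $\{x_i,s_{i-1},s_i\}$ (from the three clauses indexed by $i$) together with the edge $x_1 s_1$. A path-shaped tree decomposition with bags $B_1=\{x_1,s_1\}$ and $B_i=\{x_i,s_{i-1},s_i\}$ for $i=2,\dots,n$ covers every edge, satisfies the running-intersection condition (each $s_i$ lives only in the consecutive bags $B_i,B_{i+1}$ and each $x_i$ in a single bag), and has maximum bag size $3$, giving width $2$. Hence this encoding has primal treewidth $2$ (exactly, since it contains a triangle).

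I expect the encoding and its tree decomposition to be entirely routine; the genuine content is the lower-bound direction, and there the one point requiring care is the exhaustiveness of the case analysis on the falsified clause $C$ — in particular verifying that in each case the modified accepting assignment really leaves $C$ unsatisfied, which crucially uses the structural restriction $C\subseteq\{\neg x_i,\neg x_j\}\cup\{x_k: k\notin\{i,j\}\}$ forced by falsification under $a$. Once this is in place, the reduction of $\twp(F)=n-1$ to the completeness of the primal graph is immediate from the standard fact that a clique must lie in a common bag.
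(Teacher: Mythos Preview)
Your proposal is correct and follows essentially the same approach as the paper. For the lower bound, both you and the paper show that the primal graph must be $K_n$ by arguing that every pair of variables co-occurs in a clause; the paper factors this through an intermediate claim (every non-tautological clause contains at least two negative literals) while you fold that observation into a single case analysis on a falsified clause, but the underlying idea is identical. For the upper bound, both use a ladder/sequential encoding with a path-shaped tree decomposition of width~$2$; your concrete clause set differs slightly from the paper's (which uses $y_0,\dots,y_n$ and the equivalence $x_i\leftrightarrow(\neg y_{i-1}\land y_i)$), but the structure and the decomposition are the same.
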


To prove Theorem~\ref{thm:amo}, we split the statement into two lemmas.

\begin{lemma}
Any CNF-representation of the \atmostone-function of $n$ inputs without auxiliary variables has primal treewidth $n-1$.
\end{lemma}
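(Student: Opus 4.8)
The plan is to show that the primal graph of any such representation $F$ must be the complete graph $K_n$ on the variable set $\{x_1,\dots,x_n\}$. Since a graph on $n$ vertices has treewidth at most $n-1$ (a single bag containing all vertices is a valid decomposition) and $K_n$ has treewidth exactly $n-1$, this immediately yields $\twp(F)=n-1$. Concretely, it suffices to prove that for every pair of distinct indices $i,j$ there is a clause of $F$ containing a literal of $x_i$ and a literal of $x_j$, i.e.\ that $x_i$ and $x_j$ are adjacent in the primal graph.

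To establish this, fix $i\ne j$ and consider the assignment $a$ that sets $x_i=x_j=1$ and all other variables to $0$. Since two variables are set to true, \atmostone{} rejects $a$, so $F(a)=0$ and hence some clause $C$ of $F$ is falsified by $a$ (every literal of $C$ is false under $a$). The heart of the argument is to show that $C$ must mention both $x_i$ and $x_j$. Suppose not; then, up to exchanging the roles of $i$ and $j$, the clause $C$ contains no literal of $x_j$. Let $a'$ be obtained from $a$ by flipping $x_j$ back to $0$, so that $a'$ sets only $x_i$ to $1$. Because $C$ does not mention $x_j$, it has the same truth value under $a$ and under $a'$, namely false; thus $a'$ also falsifies $F$ and so $F(a')=0$. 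But $a'$ sets at most one variable to $1$, so \atmostone{} accepts it and $F(a')=1$, a contradiction. Hence $C$ contains literals of both $x_i$ and $x_j$, and they are adjacent in the primal graph.

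As this holds for all pairs $i\ne j$, the primal graph contains $K_n$ and therefore equals it, giving $\twp(F)=n-1$, the upper bound being trivial since the graph has at most $n$ vertices. I expect the only delicate point to be the clause-omission step in the second paragraph: one must argue carefully that a clause falsified by the two-ones assignment, but not mentioning one of those two variables, remains a falsified clause under the corresponding one-one assignment, which is \emph{accepted} by the function. Everything else---the reduction to completeness of the primal graph and the treewidth of $K_n$---is standard. Note also that this argument requires no separate assumption that every variable occurs in $F$: the pairwise-adjacency conclusion forces each variable into some clause automatically.
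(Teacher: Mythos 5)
Your proof is correct, and it reaches the same intermediate goal as the paper---showing that every pair of variables $x_i, x_j$ must appear together in some clause, so that the primal graph is $K_n$ and thus has treewidth $n-1$---but the mechanism for that key step is genuinely different. The paper first proves a structural claim about the representation (every non-tautological clause contains at least two \emph{negated} input variables, established by testing against the all-zeros model and the single-one models), and then derives a contradiction by showing that if some pair $x_i, x_j$ shared no clause, the rejecting assignment setting exactly $x_i, x_j$ to $1$ would satisfy every clause. You instead start from a clause $C$ falsified by that rejecting assignment (which must exist since the assignment is rejected) and use a flipping argument: if $C$ failed to mention, say, $x_j$, then flipping $x_j$ back to $0$ would leave $C$ falsified while producing an assignment that \atmostone{} accepts, contradicting the representation property. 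Your route is more elementary---it needs no preliminary claim about clause polarity---and it establishes adjacency via literals of arbitrary polarity; the paper's route yields the slightly stronger structural fact that each pair of variables occurs \emph{negatively} in a common clause, though that extra strength is not needed for the treewidth conclusion. Your handling of the side points (the trivial upper bound $\twp(F)\le n-1$ and the observation that pairwise adjacency automatically forces every variable to occur in some clause) is also sound.
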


\begin{proof}
Let $x_1,\dots,x_n$ be the variables of \atmostone. We proceed with two claims.

\begin{claim}
\label{atleasttwoneg}
Every non-tautological clause $C$ of any CNF-representation of \linebreak\atmostone{} must contain at least
the negation of two variables from $x_1,\dots,x_n$.
\end{claim}

\begin{proof}
Suppose that a clause $C$ does not contain two such literals.
Then, there are two possible cases: either $C$ contains no negated variables, or exactly one.
In the first case, the model of \atmostone{} setting all variables to $0$ does not satisfy $C$,
so $C$ cannot be part of the CNF-representation.
In the second case, let $x_i$ be the (only) variable of \atmostone{} appearing negatively in $C$.
Then, the model of \atmostone{} setting only $x_i$ to $1$ and all other variables to $0$ does
not satisfy $C$, so $C$ cannot be part of the CNF-representation, either.
Hence, at least two negated variables must appear in $C$. 
\end{proof}

From Claim~\ref{atleasttwoneg}, we will deduce that all pairs of variables must appear conjointly in at least one
clause.

\begin{claim}\label{clm:pairs}
\label{allcouples}
For each pair of variables $x_i, x_j$ from $x_1,\dots,x_n$ with $i \neq j$, there is a clause
in the CNF-representation of \atmostone{} containing both $\neg x_i$ and $\neg x_j$.
\end{claim}

\begin{proof}
Suppose that, for a pair $x_i, x_j$, such a clause does not exist. Let $a$ be the assignment that sets exactly the variables $x_i, x_j$ to $1$ and all other variables to~$0$.
Let $C$ be a clause from the CNF-representation.
By our previous claim, $C$ contains two negated variables from $x_1,\dots,x_n$.
Because of our assumption, at least one of these literals is neither $\neg x_i$ nor
$\neg x_j$, and this literal is satisfied by $a$. Thus $C$ is satisfied by $a$.
Since this is true for every clause $C$, it follows that $a$ satisfies all the clauses of
the representation, so it is one of its models.
However, $a$ is not a model of \atmostone.
As a consequence, a clause containing both $\neg x_i$ and $\neg x_j$ must exist,
which is also true for every pair $x_i, x_j$.
\end{proof}

Claim~\ref{allcouples} shows that for each pair of variables, there is a clause containing
both of them.
It follows that all variables are connected to all other variables in the primal graph of the representation. So the primal graph is a clique which has treewidth $n-1$.
\end{proof}

We now prove the second part of Theorem~\ref{thm:amo}, which shows that if we allow the use of auxiliary variables, we may decrease the treewidth dramatically.

\begin{lemma}
 There is a clausal encoding of \atmostone{} of primal treewidth $2$.
\end{lemma}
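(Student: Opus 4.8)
The plan is to construct an explicit clausal encoding of \atmostone{} using auxiliary variables arranged in a chain, and then verify that its primal graph has treewidth~$2$. The standard idea is the \emph{sequential counter} (or \emph{ladder}) encoding: I would introduce auxiliary variables $y_1,\dots,y_{n-1}$, where $y_i$ is intended to be true exactly when at least one of $x_1,\dots,x_i$ has been set to~$1$. The propagation of this information along the chain is captured by clauses relating consecutive variables: a clause forcing $y_i$ to be true whenever $x_i$ is true, a clause propagating $y_{i-1}$ into $y_i$, and crucially a clause forbidding $x_i$ from being true when $y_{i-1}$ is already true (which is what enforces the ``at most one'' condition). Concretely I would write down, for each~$i$, the clauses $\neg x_i \lor y_i$, $\neg y_{i-1}\lor y_i$, and $\neg y_{i-1}\lor\neg x_i$, with appropriate boundary handling at $i=1$.

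First I would state the encoding precisely and argue correctness, i.e.\ that it is indeed a clausal encoding of \atmostone{} in the sense of the two bullet-point conditions in the preliminaries. For the completeness direction, given an accepting assignment~$a$ of the $x_i$ (at most one of them is~$1$), I would exhibit the unique extension setting $y_i=1$ iff some $x_j=1$ with $j\le i$, and check every clause is satisfied. For soundness, I would show that if two variables $x_i,x_j$ with $i<j$ are both~$1$, then the chain forces $y_{j-1}=1$, so the clause $\neg y_{j-1}\lor\neg x_j$ is violated and no extension can satisfy the formula. These are routine case checks that I would sketch rather than belabour.

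Second I would analyze the primal graph. The key point is that each clause involves at most two distinct variables from $X\cup Y$, namely pairs among $\{x_i,y_i,y_{i-1}\}$, so the primal graph edges are exactly $\{x_i,y_i\}$, $\{y_{i-1},y_i\}$, and $\{y_{i-1},x_i\}$. This yields a graph that is essentially a ``fan'' or triangulated path: consecutive auxiliary variables form a path $y_1-y_2-\cdots-y_{n-1}$, and each $x_i$ attaches to the edge $\{y_{i-1},y_i\}$, forming a triangle. To bound the treewidth I would exhibit a tree decomposition (in fact a path decomposition) whose bags are the triples $\{y_{i-1},y_i,x_i\}$, chained together; each bag has size~$3$, giving width~$2$. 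I would verify the three tree-decomposition axioms: every edge sits inside some bag, and the occurrences of each variable form a connected subtree (the $y_i$ appear in two consecutive bags, each $x_i$ in exactly one).

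I do not expect a serious obstacle here, since both the encoding and the decomposition are standard; the only mildly delicate point is getting the boundary clauses and the exact bag contents right so that the connectedness condition for the $y_i$ genuinely holds and no bag exceeds three variables. It is worth noting that one could equally appeal to the fact that these graphs are series-parallel (or $2$-trees), which are known to have treewidth~$2$, but giving the explicit path decomposition is the cleanest self-contained route and is what I would present.
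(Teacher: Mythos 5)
Your proposal is correct and follows essentially the same route as the paper: both use a chain of auxiliary variables recording whether some $x_j$ with $j\le i$ is already true (the paper cites this as the ladder encoding of Gent and Nieuwenhuis, with clauses $\neg y_{i-1}\lor y_i$ and $x_i \leftrightarrow (\neg y_{i-1}\land y_i)$), and both establish the width bound via the path decomposition with bags $\{y_{i-1},y_i,x_i\}$. The only differences are cosmetic (implications versus biconditionals in the chain clauses, and indexing of the $y_i$), so nothing further is needed.
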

\begin{proof}
 We use the well-known ladder encoding from~\cite{GentN04}, see also~\cite[Section 2.2.5]{Handbook09}. We introduce the auxiliary variables $y_0, \ldots, y_n$. The encoding consists of the following clauses, for every $i\in [n]$. :
 \begin{itemize}
  \item the validity clauses $\neg y_{i-1} \lor y_i$, and 
  \item clauses representing the constraint $x_i \leftrightarrow (\neg y_{i-1} \land y_i)$ 
 \end{itemize}
It is easy to see that this encoding is correct: the auxiliary variables $y_i$ encode if one of the variables $x_j$ for $j \leq i$ is assigned to $1$.
Concerning the treewidth bound, we construct for every index $i\in [n]$ the bag $B_i:=\{y_{i-1}, y_i, x_i\}$. Then $(P_n, (B_i)_{i\in [n]})$ where $P_n$ has nodes $[n]$ and edges $\{(i,i+1) \mid i \in [n-1]\}$ is a tree decomposition of the encoding of width $2$.
\end{proof}

\section{Width vs. Communication}
\label{sec:wicomm}

In this section, we show that from communication complexity we get lower bounds for the various width notions of Boolean functions. The main building block is the following result that is an application of the main result of~\cite{PipatsrisawatD10} to complete structured DNNF.

\begin{theorem}\label{thm:ijcai}
 Let $D$ be a complete structured DNNF structured by a v-tree $T$ computing a function $f$ in variables $X$. Let $t$ be a node of $T$ and let $Y:= \var(T_t)$ and $Z=X\setminus \var(T_t)$. Finally, let $\ell$ be the number of $\lor$-gates in $\mu(t)$. Then there is a rectangle cover of $f$ respecting $(Y,Z)$ of size at most $\ell$.
\end{theorem}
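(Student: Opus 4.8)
The plan is to exploit two structural features of a complete structured DNNF: \emph{decomposability} (each gate depends only on the variables below it in the v-tree) and the fact that the block $\mu(t)$ communicates with the rest of the circuit only through its $\lor$-gates. First I would record the standard observation that every gate $g\in\mu(s)$ computes a function of the variables $\var(T_s)$ only; this follows by induction from the leaves, since the two inputs of an $\land$-gate sit at the two children of its v-tree node, whose variable sets are disjoint and union to $\var(T_s)$. In particular the $\ell$ $\lor$-gates $h_1,\dots,h_\ell$ of $\mu(t)$ each compute a function over $Y=\var(T_t)$. Next I would observe that among all gates of $\mu(t)$ only these $\lor$-gates feed outside the subtree below $t$: an $\land$-gate of $\mu(t)$ feeds only $\lor$-gates of $\mu(t)$, whereas the upward consumers of gates of $\mu(t)$ are $\land$-gates at the parent of $t$, whose inputs are forced to be $\lor$-gates or input gates. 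Hence $h_1,\dots,h_\ell$ form the entire interface between the region $U:=\bigcup_{s\in T_t}\mu(s)$ and the rest of $D$.

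Treating $h_1,\dots,h_\ell$ as fresh inputs, the gates of $D$ outside $U$ then compute $f$ as a function $\Phi(h_1,\dots,h_\ell,Z)$. As these gates are $\land$/$\lor$-gates whose only literal inputs lie over $Z$ (leaves outside $T_t$) and the $h_k$ occur positively, $\Phi$ is \emph{monotone} in $h_1,\dots,h_\ell$, and $f(a)=\Phi(h_1(a_Y),\dots,h_\ell(a_Y),a_Z)$ for every assignment $a$ with restrictions $a_Y,a_Z$. I would then define the cover by setting $r_1^k(Y)$ to be the function computed by $h_k$ and $r_2^k(Z):=\Phi(e_k,Z)$, where $e_k$ forces $h_k$ to $1$ and all other interface inputs to $0$. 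Each $r^k=r_1^k\wedge r_2^k$ is a combinatorial rectangle respecting $(Y,Z)$, and there are $\ell$ of them; the claim is $f=\bigvee_{k=1}^{\ell} r_1^k(Y)\wedge r_2^k(Z)$.

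Soundness of the cover is immediate from monotonicity: if $h_k(a_Y)=1$ and $\Phi(e_k,a_Z)=1$, then the full interface vector $b:=(h_1(a_Y),\dots,h_\ell(a_Y))$ satisfies $b\ge e_k$, so $f(a)=\Phi(b,a_Z)\ge\Phi(e_k,a_Z)=1$. Completeness is the heart of the argument, and monotonicity alone does not suffice (for $\Phi=b_1\wedge b_2$ the analogous statement fails). Here I would use a certificate (a sub-proof-tree witnessing acceptance of the NNF circuit) for an $a$ with $f(a)=1$, together with the key \textbf{unique-entry lemma}: such a certificate enters $U$ through exactly one gate, necessarily some $h_k$. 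I would prove this by tracing, along the path $r=s_0,\dots,s_m=t$ in $T$, the unique branch heading toward $t$: the root gate is the single gate of the certificate in the root block, and at each step the chosen $\land$-gate of the current block has its two inputs at the two children of the current v-tree node, exactly one of which contains $t$; thus exactly one edge continues toward $t$ and lands on a single gate of the next block, and induction yields a unique entry gate $h_k$. The portion of the certificate inside $U$ then witnesses $h_k(a_Y)=1$, while the portion outside uses only $h_k$ among the interface inputs (and $Z$-literals satisfied by $a_Z$), hence witnesses $\Phi(e_k,a_Z)=1$, giving $r_1^k(a_Y)\wedge r_2^k(a_Z)=1$.

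The main obstacle is making the unique-entry lemma precise, and this is exactly where \emph{structuredness} is indispensable: because $h_1,\dots,h_\ell$ all live at the same v-tree node $t$, no two of them can be combined by an $\land$-gate (whose inputs must come from distinct children), and more strongly no single proof tree can reach two of them. I would finally remark on the degenerate case where $t$ is a leaf: then $\mu(t)$ holds only the two literal gates of its single variable and $\ell=0$, so the statement is intended (and applied) for internal $t$; the very same argument, with ``interface gates of $\mu(t)$'' in place of ``$\lor$-gates,'' also covers leaves, since for a fixed assignment at most one of the two literals is used.
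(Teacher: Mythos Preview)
Your proof is correct. The paper does not give its own proof of this theorem but attributes it to~\cite{PipatsrisawatD10} (and the discussion in~\cite[Section~5]{BovaCMS16}), remarking only that their argument specializes to the complete structured setting; your construction---taking $r_1^k$ to be the function computed at $h_k$ and $r_2^k=\Phi(e_k,Z)$---is precisely that standard decomposition, and your unique-entry lemma is exactly the fact the paper itself records just after defining proof trees, namely that every proof tree of a complete structured DNNF contains exactly one $\lor$-gate in each set $\mu(t)$.
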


Note that in~\cite{PipatsrisawatD10} the considered models are structured DNNF that are not necessarily complete, a slightly more general model than ours. Thus the statement in~\cite{PipatsrisawatD10} is slightly different. However, it is easy to see that in our restricted setting, their proof shows the statement we give above, see also the discussion in~\cite[Section 5]{BovaCMS16}.
Since Theorem~\ref{thm:ijcai} is somewhat technical, it will be more convenient here to use the following easy consequence.

\begin{proposition}\label{prop:treecut}
Let $D$ be a complete structured DNNF structured by a v-tree $T$ computing a function $f$ in variables $X$. Let $t$ be a node of $T$ and let $Y:= \var(T_t)$ and $Z=X\setminus \var(T_t)$. Then
 \[\log(\wi(D)) \ge \ccd(f, (Y,Z)).\] 
\end{proposition}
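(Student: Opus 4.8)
The plan is to derive this directly from Theorem~\ref{thm:ijcai}, which does essentially all the work. Let $\ell$ denote the number of $\lor$-gates in $\mu(t)$. By Theorem~\ref{thm:ijcai} applied to the node $t$, there is a rectangle cover of $f$ respecting the partition $(Y,Z)$ of size at most $\ell$. Consequently, the minimum size $s_{\min}$ of any rectangle cover of $f$ respecting $(Y,Z)$ satisfies $s_{\min}\le \ell$. Taking logarithms and using the definition of non-deterministic communication complexity, we get $\ccd(f,(Y,Z)) = \log(s_{\min}) \le \log(\ell)$.

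It remains to bound $\log(\ell)$ by $\log(\wi(D))$. Here I would simply invoke the definition of the width of a complete structured DNNF: $\wi(D)$ is the maximal number of $\lor$-gates in any set $\mu(s)$ over all nodes $s$ of $T$. In particular, for the specific node $t$ we have $\ell \le \wi(D)$, and since $\log$ is monotone, $\log(\ell) \le \log(\wi(D))$. Chaining the two inequalities yields
\[
\ccd(f,(Y,Z)) \le \log(\ell) \le \log(\wi(D)),
\]
which is exactly the claimed bound.

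I do not anticipate any real obstacle here, since the proposition is explicitly billed as an "easy consequence" of Theorem~\ref{thm:ijcai}. The only two points requiring minimal care are (i) correctly translating the rectangle-cover size bound into the logarithmic communication-complexity quantity via its definition, and (ii) the monotonicity step $\ell\le\wi(D)$, which is immediate from the definition of $\wi$ as a maximum over all v-tree nodes. One cosmetic subtlety is that $\ccd$ is defined as $\log(s_{\min})$ rather than $\log$ of the size of a particular cover, so I would phrase the argument through $s_{\min}\le\ell$ rather than claiming the exhibited cover is optimal; this avoids any sloppiness about whether the cover from Theorem~\ref{thm:ijcai} is the smallest one.
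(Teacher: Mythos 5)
Your proof is correct and matches the paper's own argument exactly: both invoke Theorem~\ref{thm:ijcai} at the node $t$ to get a rectangle cover of size at most $\ell$, bound $\ell$ by $\wi(D)$ via the definition of width, and take logarithms. Your version is in fact slightly more careful than the paper's, since you explicitly route the argument through $s_{\min}\le\ell$ rather than conflating the exhibited cover with an optimal one.
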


\begin{proof}
From Theorem~\ref{thm:ijcai} and the definition of width, it follows directly that the size of any rectangle cover of $f$ respecting $(Y, Z)$ is upper bounded by the width of $D$.
Taking the logarithm on both sides yields the claim.
\end{proof}

In many cases, instead of considering explicit v-trees, it is more convenient to simply use best-case communication complexity.

\begin{corollary}\label{cor:bestcase}
 Let $f$ be a Boolean function in variables $X$. Then, for every complete structured DNNF computing $f$, we have 
 \[ \wi(D)\ge 2^{\cc(f)}.\]
\end{corollary}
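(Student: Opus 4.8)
The plan is to reduce the statement to Proposition~\ref{prop:treecut} by exhibiting, inside the v-tree $T$ structuring $D$, a single node whose induced variable partition is $1/3$-balanced. Write $n := |X|$ and recall that, since $T$ is a full binary tree whose leaves are in bijection with $X$, it has exactly $n$ leaves. For a node $t$ of $T$ I would set $w(t) := |\var(T_t)|$, the number of leaves below $t$; thus $w$ equals $n$ at the root, equals $1$ at every leaf, and satisfies $w(t) = w(t_1) + w(t_2)$ whenever $t$ has children $t_1, t_2$. Note that in a full binary tree every internal node has two children each carrying at least one leaf, so $w$ strictly decreases as one descends.

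First I would locate a node $t^\star$ with $n/3 < w(t^\star) \le 2n/3$ by a greedy descent from the root. Starting at the root (where $w = n > 2n/3$), repeatedly move to the child carrying more leaves, breaking ties arbitrarily, and stop at the first node $t^\star$ with $w(t^\star) \le 2n/3$. This descent terminates since $w$ strictly decreases and every leaf has $w = 1 \le 2n/3$. Let $p$ be the parent of $t^\star$; by the stopping rule $w(p) > 2n/3$, and since $t^\star$ is the larger of the two children of $p$ we get $w(t^\star) \ge w(p)/2 > n/3$. Hence $n/3 < w(t^\star) \le 2n/3$, so setting $Y := \var(T_{t^\star})$ and $Z := X \setminus Y$ gives $\min(|Y|,|Z|) = \min(w(t^\star),\, n - w(t^\star)) \ge n/3$; that is, the partition $(Y,Z)$ is $1/3$-balanced.

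With such a node in hand the conclusion is immediate. Proposition~\ref{prop:treecut} applied at $t^\star$ yields $\log(\wi(D)) \ge \ccd(f, (Y,Z))$, and since $(Y,Z)$ is $1/3$-balanced, the right-hand side is at least $\min_\Pi \ccd(f, \Pi) = \cc(f)$, where the minimum ranges over all partitions with $\min(|Y|,|Z|) \ge |X|/3$. Exponentiating both sides gives $\wi(D) \ge 2^{\cc(f)}$, as claimed.

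I expect the only genuine content to be the greedy-descent step establishing the balanced node; everything else is a direct chaining of Proposition~\ref{prop:treecut} with the definition of $\cc$. The one point needing mild care is the boundary behaviour for very small $n$, where no $1/3$-balanced partition exists and $\cc(f)$ would be a minimum over an empty set; I would simply restrict to $n \ge 2$, the regime in which $\cc(f)$ is meaningfully defined and in which the descent always produces a valid balanced cut.
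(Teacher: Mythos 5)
Your proposal is correct and follows essentially the same route as the paper: find a node of the v-tree whose subtree contains between $|X|/3$ and $2|X|/3$ of the variables, apply Proposition~\ref{prop:treecut} there, and conclude via the definition of $\cc(f)$. The only difference is that you spell out the standard greedy-descent argument for the existence of such a balanced node, which the paper simply asserts as a known fact about v-trees.
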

\begin{proof}
 Note that for every v-tree with $X$ on the leaves, there is a node $t$ such that $|X|/3 \le |\var(T_t)| \le 2|X|/3$. Plugging this into Proposition~\ref{prop:treecut} directly yields the result.
\end{proof}

We will use Corollary~\ref{cor:bestcase} to turn compilation algorithms that produce complete structured DNNF based on a parameter of the input as in~\cite{AmarilliCMS18,BovaS17} into inexpressivity bounds based on this parameter. We first give an abstract version of this result that we will instantiate for concrete measures later on.

\begin{theorem}\label{thm:abstract}
 Let $\mathcal{C}$ be a (fully expressive) representation language for Boolean functions. Let $\pp$ be a parameter $\pp:\mathcal{C}\rightarrow \mathbb{N}$. Assume that there is for every Boolean function $f$ and every $C\in \mathcal{C}$ that encodes $f$ a complete structured DNNF with 
 \[\wi(D) \le 2^{\pp(C)}.\]
 Then we have
 \[\pp(C) \ge \cc(f).\]
\end{theorem}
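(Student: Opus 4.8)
The plan is to read this off almost immediately from Corollary~\ref{cor:bestcase}, which already packages the balanced-cut argument we would otherwise have to set up by hand. The hypothesis of the theorem furnishes, for the given $C\in\calC$ encoding $f$, a complete structured DNNF $D$ whose width is \emph{small}, namely $\wi(D)\le 2^{\pp(C)}$. On the other hand, Corollary~\ref{cor:bestcase} tells us that any complete structured DNNF computing $f$ must have width \emph{large}, namely $\wi(D)\ge 2^{\cc(f)}$. The whole proof is then to chain these two bounds and take a logarithm.

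Concretely, I would proceed as follows. Fix $f$ and $C$, and let $D$ be the complete structured DNNF supplied by the assumption, so that $\wi(D)\le 2^{\pp(C)}$. Applying Corollary~\ref{cor:bestcase} to $D$ gives $2^{\cc(f)}\le\wi(D)$. Combining the two inequalities yields
\[2^{\cc(f)}\le\wi(D)\le 2^{\pp(C)},\]
and since $t\mapsto\log t$ is monotone this is exactly $\cc(f)\le\pp(C)$, as claimed. Note that I do not need to exhibit a particular v-tree node or a particular balanced partition: Corollary~\ref{cor:bestcase} already selects, inside any v-tree, a node $t$ with $|X|/3\le|\var(T_t)|\le 2|X|/3$ and invokes Proposition~\ref{prop:treecut} there, so the $\tfrac13$-balance in the definition of $\cc$ is handled for us.

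The one place that requires care --- and the natural candidate for the main obstacle --- is the meaning of ``a complete structured DNNF'' in the hypothesis, specifically which function $D$ computes when $C$ is a genuine clausal encoding with a set $Y$ of auxiliary variables. Corollary~\ref{cor:bestcase} applies to a DNNF computing $f$ itself (on the variables $X$), whereas a compilation procedure applied to $C$ would naturally produce a DNNF for the function $g$ on $X\cup Y$ that $C$ represents, with $f=\exists Y\, g$. To bridge this gap one uses the forget operation recalled in the preliminaries: setting every occurrence of each $y\in Y$ and of $\neg y$ to $1$ and propagating turns a DNNF for $g$ into one for $\exists Y\, g=f$ without increasing the width. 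Thus the hypothesis should be read as providing (after forgetting, if necessary) a width-$2^{\pp(C)}$ DNNF for $f$, to which the corollary applies verbatim. Everything beyond this reading is routine monotonicity of the logarithm, so I expect no further difficulty.
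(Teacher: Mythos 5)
Your proof is correct and is essentially identical to the paper's: the paper also simply combines the hypothesis $\wi(D)\le 2^{\pp(C)}$ with Corollary~\ref{cor:bestcase} and takes logarithms. Your additional remark about forgetting the auxiliary variables of $C$ so that the DNNF computes $f$ itself (rather than the function on $X\cup Y$) is a sensible reading of the theorem's implicit hypothesis and matches how the paper later instantiates it (e.g.\ in Theorem~\ref{thm:reverse}), but it is not part of the paper's proof of this statement.
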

\begin{proof}
 From the assumption, we get $\pp(C)\ge \log(\wi(D))$. Then we apply Corollary~\ref{cor:bestcase} to directly get the result.
\end{proof}

Intuitively, it is exactly the algorithmic usefulness of parameters that makes the resulting instances inexpressive. Note that it is not surprising that instances whose expressiveness is severely restricted allow for good algorithmic properties. However, here we see that the inverse of this statement is also true in a quite harsh way: if a parameter has good algorithmic properties allowing efficient compilation into DNNF, then this parameter puts strong restrictions on the complexity of the expressible functions.

Note that instead of Corollary~\ref{cor:bestcase} we could have used Proposition~\ref{prop:treecut} in the proof of Theorem~\ref{thm:abstract} to get a slightly stronger result. We chose to go with a simpler statement here but note that we will use the extended strength of Proposition~\ref{prop:treecut} later on in Section~\ref{sec:examples}.

From Theorem~\ref{thm:abstract}, we directly get lower bounds for the width measures studied in~\cite{PaulusmaSS13,SamerS10,FischerMR08,SlivovskyS13,SaetherTV15}. The first result considers the parameters with respect to which SAT is fixed-parameter tractable.

\begin{corollary}\label{cor:twlower}
 There is a constant $b>0$ such that for every Boolean function $f$ and every CNF $C$ encoding $f$ we have
 \[\min\{\twi(C),\twp(C), \twd(C), \scw(C)\} \ge b\cdot \cc(f).\]
\end{corollary}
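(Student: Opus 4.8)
The plan is to invoke Theorem~\ref{thm:abstract} four times, once for each of the width measures $\twi$, $\twp$, $\twd$, and $\scw$, and then take the minimum. By the structure of Theorem~\ref{thm:abstract}, what I need to supply for each measure $\pp$ is a \emph{compilation result}: for every CNF-formula $C$ encoding a function $f$, a complete structured DNNF $D$ computing $f$ with $\wi(D) \le 2^{O(\pp(C))}$. The constant hidden in the $O(\cdot)$ is exactly what gives rise to the constant $b>0$ in the statement; since there are finitely many measures, taking the worst of the four constants yields a single uniform $b$ that works for the minimum.

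First I would handle primal treewidth. The key point is that a CNF-formula of primal treewidth $k$ can be compiled into a complete structured DNNF of width singly exponential in $k$; this is essentially the classical dynamic-programming-over-a-tree-decomposition construction, recast in the structured-DNNF formalism of~\cite{CapelliM18}. Here one uses the tree decomposition to build a v-tree and processes clauses bag by bag, so that the number of $\lor$-gates in each block $\mu(t)$ is controlled by the number of partial assignments to a bag, giving $\wi(D) \le 2^{O(\twp(C))}$. The incidence treewidth and dual treewidth cases should follow either by analogous direct compilation arguments or, more economically, by appealing to the known fact (cited in the introduction as~\cite{SamerS10b,BriquelKM11,LampisMM18}) that these three treewidth parameters of encodings differ only by constant factors, which lets me derive the bound for $\twi$ and $\twd$ from the one for $\twp$ up to changing the constant. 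The signed incidence cliquewidth $\scw$ is the subtlest: here I would rely on a compilation algorithm that takes a CNF of bounded signed incidence cliquewidth and produces a structured DNNF whose width is exponential in the cliquewidth, the kind of result furnished by the machinery of~\cite{BovaCMS15} that the introduction flags as the main technical engine of the paper. The important feature is that for cliquewidth the natural bound is also of the form $\wi(D)\le 2^{O(\scw(C))}$, so it fits the same template.

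Once each of the four compilation bounds is in place, the argument is immediate: Theorem~\ref{thm:abstract} applied to measure $\pp$ with the bound $\wi(D)\le 2^{c_{\pp}\cdot \pp(C)}$ yields $c_{\pp}\cdot\pp(C)\ge \cc(f)$, i.e. $\pp(C)\ge (1/c_{\pp})\cc(f)$. Setting $b:=\min\{1/c_{\twi}, 1/c_{\twp}, 1/c_{\twd}, 1/c_{\scw}\}$, every one of the four quantities $\twi(C),\twp(C),\twd(C),\scw(C)$ is at least $b\cdot\cc(f)$, and hence so is their minimum. Strictly speaking, to bound the minimum it suffices that \emph{each} measure obeys $\pp(C)\ge b\cdot\cc(f)$, which is exactly what I have, so no further work is needed for the $\min$.

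The main obstacle is not the final packaging but establishing the four compilation bounds with the right exponential-in-parameter form, and in particular the signed cliquewidth case, since cliquewidth-to-DNNF compilation is genuinely more delicate than the treewidth version. I expect the cleanest route is to cite the relevant algorithm from~\cite{BovaCMS15} (and the width notion of~\cite{CapelliM18}) as a black box, stating the precise exponent it guarantees, rather than reconstructing the dynamic program; the treewidth cases can then either be done directly or reduced to $\twp$ via the known constant-factor equivalences. The only genuinely mathematical care required is to keep track of the multiplicative constants so that the single constant $b$ emerges, but this is bookkeeping rather than a conceptual difficulty.
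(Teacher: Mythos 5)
Your proposal is correct and follows essentially the same route as the paper: the paper's proof of Corollary~\ref{cor:twlower} likewise just invokes Theorem~\ref{thm:abstract} together with the known compilation algorithms that turn a CNF of parameter value $k$ (for any of the four measures) into a complete structured DNNF of width $2^{O(k)}$, absorbing the hidden constants into $b$. Your extra bookkeeping with the constants $c_{\pp}$ and the remark that per-measure bounds suffice for the minimum are exactly the details the paper leaves implicit.
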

\begin{proof}
 This follows directly from Theorem~\ref{thm:abstract} and the fact that for all these parameters there are algorithms that, given an input CNF of parameter value $k$, construct an equivalent complete structured DNNF of width $2^{O(k)}$.
\end{proof}

Using the compilation algorithm from~\cite{AmarilliCMS18,AmarilliMS18}, we get essentially the same result for circuit representations.
\begin{corollary}\label{cor:twcircuits}
 There is a constant $b>0$ such that for every Boolean function $f$ and every circuit $C$ encoding $f$ we have
 \[\min\{\tw(C),\cw(C)\} \ge b\cdot \cc(f).\]
\end{corollary}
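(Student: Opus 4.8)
The plan is to reuse the abstract machinery of Theorem~\ref{thm:abstract} essentially verbatim, as was done for CNF-formulas in Corollary~\ref{cor:twlower}, but now instantiated with $\mathcal{C}$ taken to be the class of Boolean circuits (with the read-once property assumed in the preliminaries) and with the parameter $\pp$ a constant multiple of circuit treewidth, respectively circuit cliquewidth. A circuit $C$ on inputs $X\cup Y$ that computes a function $g(X,Y)$ is understood to \emph{encode} $f$ when $f(X)=\exists Y\, g(X,Y)$, so that circuits indeed form a fully expressive representation language to which Theorem~\ref{thm:abstract} applies.

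The only genuine input is a compilation guarantee. First I would invoke the compilation algorithms of~\cite{AmarilliCMS18,AmarilliMS18}: there are constants $c,c'>0$ such that any circuit $C$ computing $g$ with $\tw(C)=k$ can be turned into an equivalent complete structured DNNF $D_0$ with $\wi(D_0)\le 2^{c\cdot k}$, and likewise a circuit with $\cw(C)=k$ yields one with $\wi(D_0)\le 2^{c'\cdot k}$. These DNNF compute $g$ on all of $X\cup Y$, so next I would remove the auxiliary variables: using the forgetting operation for complete structured DNNF from~\cite{CapelliM18}, which does not increase the width, I obtain from $D_0$ a complete structured DNNF $D$ computing $f=\exists Y\, g$ with $\wi(D)\le \wi(D_0)$.

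With this in hand the corollary is immediate. Setting $\pp(C):=c\cdot\tw(C)$ we have $\wi(D)\le 2^{\pp(C)}$, so Theorem~\ref{thm:abstract} gives $c\cdot\tw(C)\ge \cc(f)$, i.e. $\tw(C)\ge \tfrac1c\,\cc(f)$. Taking instead $\pp(C):=c'\cdot\cw(C)$ yields $\cw(C)\ge \tfrac1{c'}\,\cc(f)$. Choosing $b:=\min\{1/c,\,1/c'\}$ then gives $\min\{\tw(C),\cw(C)\}\ge b\cdot\cc(f)$, as claimed.

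The step I expect to carry the real weight is the compilation guarantee of the second paragraph, since the rest is a mechanical reuse of Theorem~\ref{thm:abstract} and the forgetting lemma. In particular, I would need to check that the algorithms of~\cite{AmarilliCMS18,AmarilliMS18} output circuits that are complete structured DNNF in the precise sense of~\cite{CapelliM18} used here (the block labeling $\mu$ with $2$-DNF blocks and the leaf/gate conditions), and, crucially, that the \emph{cliquewidth} compilation attains width \emph{exponential} in the parameter rather than quasi-polynomial; this is what keeps $b$ a genuine constant independent of the number of variables and makes the statement for $\cw(C)$ parallel to the treewidth case.
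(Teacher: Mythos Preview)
Your proposal is correct and matches the paper's approach: the paper gives no explicit proof for this corollary, only the one-line remark that it follows from the compilation algorithms of~\cite{AmarilliCMS18,AmarilliMS18} in the same way as Corollary~\ref{cor:twlower}. Your write-up is simply a careful unpacking of that remark, and the explicit forgetting step you insert is exactly how the paper itself handles auxiliary variables elsewhere (see the proof of Theorem~\ref{thm:reverse}). Your closing caveat---that one must verify the cited compilation for circuit cliquewidth really produces width $2^{O(k)}$ rather than $n^{O(k)}$---is well placed; the paper takes this on faith from the cited references, and so must you.
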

We remark that for treewidth $1$ the circuits of Corollary~\ref{cor:twcircuits} boil down to so-called read-once functions which have been studied extensively, see e.g.~\cite{GolumbicG11}.

Finally, we give a version for parameters that allow polynomial time algorithms when fixed but no fixed-parameter algorithms.

\begin{corollary}\label{cor:cwlower}
 There is a constant $b>0$ such that for every Boolean function $f$ in $n$ variables and every CNF $C$ encoding $f$ we have
 \[\min\{\mimw(C),\cw(C), \mtw(C)\} \ge b\cdot \frac{\cc(f)}{\log(n)}.\]
\end{corollary}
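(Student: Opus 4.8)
The plan is to follow the proof of Corollary~\ref{cor:twlower} almost verbatim, changing only the compilation width that is fed into the abstract machinery. The three measures $\mimw$, $\cw$ and $\mtw$ are exactly those for which satisfiability and model counting are solvable in time $n^{O(k)}$ for parameter value $k$ (polynomial for each fixed $k$, but without a single-exponential dependence on $k$), in contrast to the $2^{O(k)}n^{c}$ behaviour of the measures in Corollary~\ref{cor:twlower}. First I would invoke the fact that this XP behaviour already holds at the level of knowledge compilation: for each of these measures there is an algorithm that, given a CNF $C$ with parameter value $k$, outputs an equivalent complete structured DNNF of width $n^{O(k)} = 2^{O(k\log n)}$, where $n$ is the number of variables of $C$. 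For $\cw$ this is the compilation of~\cite{BovaCMS15}; for $\mimw$ and $\mtw$ one uses the analogous compilations (e.g.~\cite{BovaS17} for $\mimw$, and the dynamic programs underlying~\cite{SaetherTV15,PaulusmaSS16}).

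Second, I would run this through the lower-bound engine. Applying the compilation to $C$ and then forgetting the auxiliary variables---which does not increase the width---produces a complete structured DNNF $D$ computing $f$ with $\wi(D) \le 2^{ck\log n}$, where $c$ is the constant hidden in the exponent. Corollary~\ref{cor:bestcase} gives $\wi(D) \ge 2^{\cc(f)}$, so that $2^{\cc(f)} \le 2^{ck\log n}$ and hence $k \ge \frac{1}{c}\cdot\frac{\cc(f)}{\log n}$. Equivalently, one applies Theorem~\ref{thm:abstract} to the rescaled parameter $\pp'(C):=\lceil c\,\pp(C)\log n\rceil$, for which the hypothesis $\wi(D)\le 2^{\pp'(C)}$ holds, and then divides the resulting bound $\pp'(C)\ge\cc(f)$ by $c\log n$. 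Taking $b$ to be the minimum of $1/c$ over the three measures yields the claimed bound on $\min\{\mimw(C),\cw(C),\mtw(C)\}$.

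The whole difficulty lies in the first step: securing a complete structured DNNF of width $n^{O(k)}$ for all three measures, not merely for cliquewidth. For $\cw$ this is precisely~\cite{BovaCMS15}; for $\mtw$ and $\mimw$ one must check that the corresponding $n^{O(k)}$ dynamic programs can be cast as a \emph{complete structured} DNNF in our exact sense (gates organised by a labelling $\mu$ along a single v-tree), or else route everything through $\mimw$ by using that it lower-bounds the other two up to the relevant factors. A secondary point demanding care is the bookkeeping of $n$: the compilation width is a polynomial in the variable count of $C$ with exponent $O(k)$, so one should make sure this count is the $n$ appearing in the statement. This logarithmic dependence on the formula size is exactly what is absent in Corollary~\ref{cor:twlower}, and it is the sole origin of the $1/\log(n)$ factor here.
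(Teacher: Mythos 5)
Your proposal is correct and takes essentially the same route as the paper: its proof likewise invokes the compilation of CNFs of bounded $\mimw$, $\cw$ and $\mtw$ into complete structured DNNF of size, hence width, $n^{O(k)}$ (citing~\cite{BovaCMS15} for all three measures at once) and then applies Theorem~\ref{thm:abstract} to get $k\log(n) \ge b'\cc(f)$ for each measure. Your additional steps---forgetting the auxiliary variables without width increase and the rescaling arithmetic---only make explicit what the paper's terser proof leaves implicit.
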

\begin{proof}
 All of the width measures in the statement allow compilation into complete structured DNNF of size -- and thus also width -- $n^{O(k)}$ for parameter value~$k$ and $n$ variables~\cite{BovaCMS15}. Thus, with Theorem~\ref{thm:abstract}, for each measure there is a constant $b'$ with $\log(n^k) = k\log(n) \ge b'\cc(f)$ which completes the proof.
\end{proof}

Note that the bounds of Corollary~\ref{cor:cwlower} are lower by a factor of $\log(n)$ than those of Corollary~\ref{cor:twlower}. We will see in the next section that in a sense this difference is unavoidable.

\section{Relations between Different Width Measures of Encodings}

In this section, we will show that the different width measures for optimal clausal encodings are strongly related.
To this end, in different subsections, we will show the relation of treewidth to all other width measures we consider.
We will then combine these relationships between treewidth and other width measures to analyze the relationships between all width measures we consider.

\subsection{From Treewidth to Modular Treewidth and Cliquewidth}

We will start by proving that primal treewidth bounds imply bounds for modular treewidth and cliquewidth.

\begin{theorem}\label{thm:cliquegood}
 Let $k$ be a positive integer and $f$ be a Boolean function of $n$ variables that has a CNF-encoding $F$ of primal treewidth at most $k\log(n)$. Then $f$ also has a CNF-encoding $F'$ of modular incidence treewidth and cliquewidth $O(k)$. Moreover, if $F$ has dependent auxiliary variables, then so has $F'$.
\end{theorem}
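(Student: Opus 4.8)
The plan is to route the whole argument through complete structured DNNF, using compilation in one direction and a Tseitin-style back-translation in the other. First I would compile the given encoding $F$ into a complete structured DNNF $D$: since $F$ has primal treewidth at most $k\log n$, the compilation algorithms already invoked in Corollary~\ref{cor:twlower} produce a complete structured DNNF computing the same function with $\wi(D) \le 2^{O(k\log n)} = n^{O(k)}$. I would fix the structuring v-tree $T$ together with the block labeling $\mu$, so that each block $\mu(t)$ contains at most $\wi(D) = n^{O(k)}$ $\lor$-gates plus the $\land$-gates joining them to the two child blocks. The point of passing through $D$ is that its shape is \emph{tree-like} (governed by $T$) while its width at every node is only polynomial in $n$, which is exactly the budget a bound of $O(k)$ on $\cw$ and $\mtw$ can afford.

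The second, and crucial, step is to turn $D$ back into a CNF-encoding $F'$ and to read off $\cw(F')$ and $\mtw(F')$ from this block structure. I would introduce one fresh variable per gate of $D$ together with the usual clauses enforcing the gate semantics and a unit clause fixing the output gate; projected to $X$, the models of $F'$ are exactly those of $f$. Building a cliquewidth expression (resp.\ a contraction witnessing small modular treewidth) by following $T$ bottom-up, the only information that must cross the edge above a node $t$ is the set of $\lor$-gates of $\mu(t)$ that feed $\land$-gates in the parent block. The naive accounting spends one label per such gate, giving $\cw(F') = O(\wi(D)) = n^{O(k)}$, and even a binary addressing of the gates only yields $O(\log \wi(D)) = O(k\log n)$ --- neither is good enough.

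The main obstacle, therefore, is the $\log n$ \emph{saving}: I must realise the $n^{O(k)}$ gates of a block using only $O(k)$ labels, i.e.\ $\cw(F') = O(\log \wi(D)/\log n) = O(k)$. The mechanism I would use is that $\wi(D) = n^{O(k)}$ factors into $O(k)$ ``digits'', each ranging over at most $n$ values, matching the $O(k)$ groups of $\log n$ variables in a bag of the underlying tree decomposition. Within one digit the $\le n$ relevant gates carry identical connection patterns up to which single child block they address, so in the cliquewidth expression a single join operation connects an entire label class at once (one label per digit, $O(k)$ in total), and in the unsigned incidence graph these gates share a neighborhood type and collapse under the contraction defining modular treewidth, leaving a contracted graph of treewidth $O(k)$. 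Exhibiting this digit-wise decoder and verifying that its joins and contractions are genuinely complete between label classes is where the real work lies; the scaling it produces is precisely the one forced by Corollaries~\ref{cor:twlower} and~\ref{cor:cwlower}, so both bounds land at $O(k)$.

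Finally, for the \emph{moreover}: the compilation algorithms can be taken to produce \emph{deterministic} complete structured DNNF, so in the back-translation every gate variable is functionally determined by the assignment to the original variables $X\cup Y$ of $F$. When $F$ has dependent auxiliary variables, $Y$ is in turn determined by $X$, so on each accepting $X$-assignment the extension to all auxiliary variables of $F'$ --- namely $Y$ together with the gate variables --- is unique. Hence $F'$ also has dependent auxiliary variables, which completes the argument.
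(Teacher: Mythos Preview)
Your detour through complete structured DNNF runs into a genuine obstacle that you yourself flag but do not resolve. Compiling $F$ gives a DNNF of width $n^{O(k)}$, and Tseitin-encoding it back introduces one variable per gate; Lemma~\ref{lem:dnnftotw} then only yields primal treewidth $O(\log(n^{O(k)}))=O(k\log n)$, which is exactly where you started. To save the extra $\log n$ factor you propose a ``digit-wise decoder'' in which the $n^{O(k)}$ gates of a block split into $O(k)$ digits whose members share neighborhood types. But for a general complete structured DNNF there is no reason the $\land$-gates of a block wire into the child $\lor$-gates in a product pattern respecting any digit decomposition; the bipartite connection between $\mu(t)$ and $\mu(t_1),\mu(t_2)$ can be arbitrary. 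Even if you restrict to the particular DNNF produced by the compilation of $F$, states at node $t$ correspond to assignments to the bag $B_t$, and consistency with the parent bag is governed by agreement on $B_t\cap B_{t'}$, which is not a coordinate-wise product over your digits. So the step ``within one digit the relevant gates carry identical connection patterns'' is unsupported, and without it your bound stays at $O(k\log n)$.

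The paper avoids the DNNF round-trip entirely and works directly on the tree decomposition of $F$. The key idea you are missing is a \emph{coloring of variables}: the at most $k\log n+1$ variables in each bag are split into $k+1$ color classes of size $\le\log n$, consistently along $T$. One then makes a copy $x_t$ of each variable per bag, and for each edge $tt'$ of $T$ and each color $i$ writes a batch of clauses on the $\le 2\log n$ variables of that color enforcing $x_t=x_{t'}$; similarly each original clause is decomposed color by color with indicator variables $y_{C,i}$ and a short clause $\bigvee_i y_{C,i}$. The point is that every such batch of clauses has \emph{identical variable sets}, hence forms a module in the incidence graph; contracting these modules leaves a graph whose tree decomposition has bags of size $O(k)$ (one representative per color and per batch). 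The cliquewidth expression follows the same grouping, using one label per color rather than per variable or per gate. This is precisely the $\log n$ saving, obtained combinatorially at the CNF level rather than by any structural property of a DNNF. Your ``moreover'' argument via determinism is fine in spirit, but in the paper's construction dependence is immediate: the copies $x_t$ are forced equal to $x$, and each $y_{C,i}$ is defined by the assignment to its color class.
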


Before we prove Theorem~\ref{thm:cliquegood}, let us here discuss this result a little. It is well known that the modular treewidth and the cliquewidth of a CNF formula can be much smaller than its treewidth~\cite{SlivovskyS13}. Theorem~\ref{thm:cliquegood} strengthens this by saying essentially that for \emph{every} function we can gain a factor logarithmic in the number of variables.

In particular, this shows that the lower bounds we can get from Corollary~\ref{cor:cwlower} are the best possible: the maximal lower bounds we can show are of the form $n/\log(n)$ and since there is always an encoding of every function of treewidth~$n$, by Theorem~\ref{thm:cliquegood} there is always an encoding of cliquewidth roughly $n/\log(n)$. Thus the maximal lower bounds of Corollary~\ref{cor:cwlower} are tight up to constants.

Note that for Theorem~\ref{thm:cliquegood}, it is important that we are allowed to change the encoding. For example, the primal graph of the formula $F= \bigwedge_{i,j\in [n]}(x_{i,j}\lor x_{i+1, j}) \land (x_{i,j}\lor x_{i, j+1})$ has the $n\times n$-grid as a minor and thus treewidth $n$, see e.g.~\cite[Chapter 12]{Diestel12}. But the incidence graph of $F$ has no modules and also has the $n\times n$-grid as a minor, so $F$ has modular incidence treewidth at least $n$ as well. So we gain nothing by going from primal treewidth to modular treewidth without changing the encoding. What Theorem~\ref{thm:cliquegood} tells us is that there is a different formula $F'$ that encodes the function of $F$, potentially with some additional variables, such that the treewidth of $F'$ is at most $O(n/ \log(n))$.

Let us note that encodings with dependent auxiliary variables are often useful, e.g. when considering counting problems. In fact, for such clausal encodings, the number of models is the same as for the function they encode. It is thus interesting to see that dependence of the auxiliary variables can be maintained by the construction of Theorem~\ref{thm:cliquegood}. We will see that this is also the case for most other constructions we make.

\begin{proof}[Proof (of Theorem~\ref{thm:cliquegood})]
 The basic idea is that we do not treat the variables in the bags of the tree decomposition individually but organize them in groups of size $\log(n)$. We then simulate the clauses of the original formula by clauses that work on the groups. Since for every group there are only a linear number of assignments, all encoding sizes stay polynomial. We now give the details of the proof.
 
 Let $(T,(B_t)_{t\in T})$ be a tree decomposition of $F$ of width at most $k\log(n)$. For every clause $C$ of $F$ there is a bag $\lambda(C)$ that contains the variables of $C$. By adding some copies of bags, we may assume w.l.o.g.~that for every bag $B$ there is at most one clause with $\lambda(C)= B$ and call this clause $\lambda^{-1}(B)$. 
 
 In a first step, we construct a coloring $\mu: \var(F) \rightarrow [k+1]$ such that in every bag there are at most $\log(n)$ variables of every color. This can be done iteratively as follows: first split the bag $B_r$ at the root $r$ into color classes as required. Since there are at most $k\log(n)+1$ variables in $B_r$ by assumption, we can split them into $k+1$ color classes of size at most $\log(n)$ arbitrarily. Now let $t$ be a node of $T$ with parent $t'$. By the coloring of the variables in $B_{t'}$, some of the variables in $B_t$ are already colored. We simply add the variables not appearing in $B_{t'}$ arbitrarily to color classes such that no color class is too big. Again, since $B_t$ contains at most $k\log(n)+1$ variables, this is always possible. Moreover, due to the connectivity condition, there is for every variable $x$ a unique node $t_x$ that is closest to the root under the bags containing $x$. Consequently, we can make no contradictory decisions during this coloring process, so $\mu$ is well-defined.
 
 We now construct $F'$.
 To this end, we first introduce for every variable $x$ and every node $t$ such that $x\in B_t$ a new variable $x_t$. Now for every node $t$ with parent $t'$ and every color $i$, we add a set $\calC_{t', t, i}$ of clauses in all variables $x_t, x_{t'}$ with $\mu(x)=i$. We construct these clauses in such a way that they are satisfied by exactly the assignments in which for each pair $x_t, x_{t'}$ such that both these variables exist, both variables take the same value. Note that the clauses in $C_{t, t', i}$ have at most $2\log(n)$ variables, so there are at most $n^2$ of them. Moreover, they contain all the same variables. The result is a formula in which all $x_t$ for a variable $x$ take the same value in all satisfying assignments.
 
 In a next step, we do for each clause $C$ the following: let $t = \lambda(C)$. For every color $i$, we define $X_{i, t}$ to be the set of variables $x_t$ such that $\mu(x) = i$. We add a fresh variable $y_{C,i}$ and clauses $\calC_{C,i}$ in the variables $X_{i,t}\cup \{y_{C,i}\}$ that accept exactly the assignments $a$ with 
 \begin{itemize}
  \item $a(y_{C,i})=1$ and there is an $x_t\in X_{i,t}$ such that setting  $x$ to $a(x_t)$ satisfies $C$, or
  \item $a(y_{C,i})=0$ and there is no $x_t\in X_{i,t}$ such that setting  $x$ to $a(x_t)$ satisfies $C$.
 \end{itemize}
Next, we add the clause $C' = \bigvee_{i\in [k+1]} y_{C,i}$. Finally, for every variable $x$, rename one arbitrary variable $x_t$ to $x$. This completes the construction of $F'$.

We claim that $F'$ is an encoding of $f$. To see this, first note that, as discussed before, for every variable $x$ of $F$, in the satisfying assignments of $F'$, all $x_t$ and $x$ take the same value. So, we define for every assignment $a$ of $F$ a partial assignment $a'$ of $F'$ as an extension of $a$ by setting $a'(x_t) = a(x)$ for every $x_t$. $a$ satisfies a clause $C$ if and only if there is at least one variable $x$ of $C$ such that $a(x)$ makes $C$ true. Let $\mu(x)=i$, then $a$ satisfies $C$ if and only if $\calC_{C,i}$ is satisfied by the extension of $a'$ that sets $y_{C,i}$ to $1$. So $a$ satisfies $C$ if and only if there is an extension of $a'$ that satisfies $\calC_{C,i}$. Consequently, $a$ satisfies $F$ if and only if there is an extension $a''$ of $a$ that satisfies $F'$, so $F'$ is an encoding of $f$ as claimed.

To see that the construction maintains dependence of auxiliary variables, observe first that the auxiliary variables already present in $F$ are still in~$F'$ and they are still dependent. We claim that all the new variables depend on those of~$F$. For the variables $x_t$, this is immediate since they must take the same value as $x$ in every model. Moreover, the variables $y_{C,i}$ depend on the $x_t$ by definition. As a consequence, all auxiliary variables are dependent

We now show that the modular treewidth of $F'$ is at most $O(k)$. First note that all sets $X_{i,t}$ are modules as are the clause sets $\calC_{t,t',i}$ and $\calC_{C,i}$. W.l.o.g.~we may assume that for every $t$, there is at most one clause $C$ with $\lambda(C)=t$ and that $T$ is a binary tree. We construct a tree decomposition $(T, (B_t)_{t\in V(T)})$ as follows: we put a representant of $X_{i,t}$, $\calC_{t,t', i}$, $\calC_{t', t,i}$ and $\calC_{C,i}$ into $B'_t$. Moreover, we add $y_{C,i}$ and $C'$ to $B'_t$. It is easy to see that constructed like this, $(T, (B_t)_{t\in V(T)})$ is a tree decomposition of width at most $O(k)$.

Finally, we will show that the incidence graph of the formula $i$ can be constructed with $O(k)$ labels. In this construction, the relabeling operation will only ever be used to \emph{forget} labels, i.e., we change a label $i$ into a global dummy label $d$ such that vertices labeled by $d$ are never used in joining operations. 

In a first step, we color $T$ with $4$ colors such that for every node $t$, the node $t$, its at most two children and its parent all have different colors. We denote the color of $t$ by $\eta(t)$. Then, for every $t$ individually, we create the nodes in $\calC_{C,i}$, $X_{t,i}$ where $C$ is such that $\lambda(C) = t$. The clauses in $\calC_{C,I}$ get label $(i,\eta(t),0)$ and the variables in $X_{t,i}$ get label $(i, \eta(t), 1)$. By joining the vertices with labels $(i, \eta(t), 0)$ with those with $(i, \eta(t), 1)$, we connect the variables in $X_{t,i}$ with the clauses in $\calC_{C, i}$. We then create the $y_{C,i}$, each with individual labels and connect them to the clauses with label $(i, \eta, 0)$. Finally, we create the clause vertex $C'$ with an individual label and connect it to the $y_{C,i}$. We then forget the labels of all vertices except the $X_{t,i}$. We call the resulting graph $G_t$.

Note that at this point, the only thing that remains to do is to introduce the clauses in the $\calC_{t, t', i}$ and connect them to the variables in $G_t$ and $G_{t'}$. To do so, we work in a bottom-up fashion along $T$. For the leaves of $T$, there is nothing to do. So let $t$ be an internal node of $T$ with children $t_1, t_2$; the case in which $t$ only has one child is treated analogously. By induction, we assume that we have graphs $G'_{t_1}$ and $G'_{t_2}$ containing $G_{t_1}$ and $G_{t_2}$ as respective subgraphs such that:
\begin{itemize}
 \item all variables appearing in $G'_{t_j}$ are already connected to all clauses, except the variables in the $X_{t_j, i}$ which are not yet connected to the clauses $\calC_{t, t_1, i}$,
 \item all vertices in $G'_{t_j}$ except for those in the $X_{t_i}$ have the dummy label $d$.
\end{itemize}
We proceed as follows: we make a disjoint union of $G_t$, $G'_{t_1}$ and $G'_{t_2}$. Then we create nodes for all clauses in the $\calC_{t, t_1, i}$ giving them the label $(i, \eta(t), 2)$. Then we connect all nodes with label $(i, \eta(t_1), 1)$ to those with label $(i, \eta(t), 2)$, i.e., we connect the nodes in $X_{t_1, i}$ with the clauses in $\calC_{t, t_1, i}$. Then we connect all nodes with label $(i, \eta(t), 1)$ to those with label $(i, \eta(t), 2)$, i.e., we connect the nodes in $X_{t, i}$ with the clauses in $\calC_{t, t_1, i}$. We proceed analogously with $t_2$. Finally, we forget all labels but those for the $X_{t, i}$. This completes the construction. 

Verifying the clauses in $F'$, one can see that the resulting graph is indeed the incidence graph of $F'$. Moreover, we have only used $O(k)$ clauses by construction. This completes the proof.
\end{proof}

\subsection{Back to Treewidth}

We now show that the reverse of Theorem~\ref{thm:cliquegood} is also true: upper bounds for many width measures imply also bounds for the primal treewidth of clausal encodings. Note that this is at first sight surprising since without auxiliary variables many of those width measures are known to be far stronger than primal treewidth.

\begin{theorem}\label{thm:reverse}
 Let $f$ be a Boolean function of $n$ variables. 
 \begin{enumerate} 
  \item[a)] If $F$ has a clausal encoding of modular treewidth, cliquewidth or mim-width $k$ then $f$ also has a clausal encoding $F'$ of primal treewidth $O(k\log(n))$ with $O(k n\log(n))$ auxiliary variables and $n^{O(k)}$ clauses.
  \item[b)] If $F$ has a clausal encoding of incidence treewidth, dual treewidth, or signed incidence cliquewidth $k$, then $f$ also has a clausal encoding $F'$ of primal treewidth $O(k)$ with $O(nk)$ auxiliary variables and $2^{O(k)} n$ clauses.
 \end{enumerate}
\end{theorem}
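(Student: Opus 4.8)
The plan is to prove both parts by one uniform two-step scheme, the only difference being the width of an intermediate DNNF; that difference is exactly what produces the $O(k\log n)$ versus $O(k)$ gap between the two parts. First I would compile the given encoding into a complete structured DNNF and then translate that DNNF back into a CNF whose primal treewidth is only \emph{logarithmic} in the DNNF's width. The reason for routing through a DNNF is that its block structure along a v-tree is intrinsically tree-like, and, crucially, the computation of the DNNF on a fixed input can be certified by recording at each v-tree node only \emph{which single} $\lor$-gate is ``used'' (that is, $O(\log\wi)$ bits) rather than the full state of the block.

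For Step~1, let $F$ be the given clausal encoding of $f$ on $X\cup Y$ with $|X|=n$ and $N:=|X\cup Y|$. For part a), the measures modular treewidth, cliquewidth and mim-width all admit compilation of $F$ into a complete structured DNNF $D$ with $\wi(D)\le N^{O(k)}$ (the compilation used in Corollary~\ref{cor:cwlower}); for part b), incidence treewidth, dual treewidth and signed incidence cliquewidth admit compilation with $\wi(D)\le 2^{O(k)}$, independent of $N$ (the compilation used in Corollary~\ref{cor:twlower}). I would then \emph{forget} all of $Y$, obtaining a complete structured DNNF $D'$ over $X$ computing $\exists Y\,F=f$; forgetting does not increase the width~\cite{CapelliM18}, so $\wi(D')\le\wi(D)$, and the v-tree $T'$ of $D'$ has only $O(n)$ nodes. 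For part a) I assume, as is standard for encodings, that $N=n^{O(1)}$, so that $\log\wi(D')=O(k\log n)$; for part b) no such assumption is needed, since $\wi(D')=2^{O(k)}$ regardless of $N$.

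For Step~2, write $w:=\wi(D')$. For every node $t$ of $T'$ I introduce a block $Z_t$ of $\lceil\log w\rceil$ fresh auxiliary variables whose values index the $\lor$-gates in $\mu(t)$ (respectively the literal gates $x,\neg x$ at a leaf); intuitively $Z_t$ \emph{names the gate selected at $t$}. I then add, at each internal node $t$ with children $t_1,t_2$, clauses over $Z_t\cup Z_{t_1}\cup Z_{t_2}$ asserting that the gate named by $Z_t$ has an input $\land$-gate whose two inputs are exactly the gates named by $Z_{t_1}$ and $Z_{t_2}$; at each leaf $t$ labelled $x$, clauses over $Z_t\cup\{x\}$ asserting that the named literal is satisfied by $x$; and a unit clause on $Z_r$ fixing the selection at the root $r$ to the output gate. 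Call the resulting CNF $F'$. Correctness follows by induction along $T'$: any model of $F'$ forces, bottom-up, each selected gate to evaluate to true, so the output gate is true and $f(a)=1$; conversely, if $f(a)=1$ an accepting computation of $D'$ yields one selected gate per node (completeness of $D'$ guarantees every node is reached), giving a model of $F'$. Note that determinism of $D'$ is never used, which is precisely what lets us forget $Y$ freely.

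For Step~3, the decisive point is that $F'$ has small primal treewidth. Taking $T'$ itself as decomposition tree with bag $B_t:=Z_t\cup Z_{t_1}\cup Z_{t_2}$ at each internal node (and $B_t:=Z_t\cup\{x\}$ at a leaf) gives a valid tree decomposition: every clause lies inside one bag, and each block $Z_t$ occurs only in $B_t$ and in the bag of its parent, which are adjacent in $T'$. Its width is at most $3\lceil\log w\rceil-1=O(\log w)$, giving primal treewidth $O(k\log n)$ in part a) and $O(k)$ in part b). The variable count is $\sum_t|Z_t|=O(n\log w)$, i.e.\ $O(kn\log n)$ and $O(nk)$ respectively, and since each node contributes at most $2^{3\lceil\log w\rceil}=w^{O(1)}$ clauses over $O(n)$ nodes, the clause counts are $n^{O(k)}$ and $2^{O(k)}n$, as claimed. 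The main obstacle, and the heart of the argument, is Step~2: one must see that a single $\lceil\log w\rceil$-bit ``which gate is used'' pointer per v-tree node suffices to certify acceptance, since this is exactly what converts the width $w$ of the DNNF into only $\log w$ in the treewidth. The compilation and forgetting of Step~1 are used as black boxes, and the only delicate bookkeeping is the $N=n^{O(1)}$ size assumption needed to state part~a) in terms of $n$ rather than $N$.
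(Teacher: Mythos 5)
Your proposal is correct and follows essentially the same route as the paper: compile the encoding into a complete structured DNNF via~\cite{BovaCMS15} (width $N^{O(k)}$ for the measures in part a), $2^{O(k)}$ for those in part b)), forget the auxiliary variables without increasing the width~\cite{CapelliM18}, and then turn the DNNF into a CNF of primal treewidth $O(\log \wi)$ by guessing, with logarithmically many fresh bits per v-tree node, which gate of each block is used in a proof tree --- this last step is exactly the paper's Lemma~\ref{lem:dnnftotw}, your only (immaterial) deviation being that you name just the $\lor$-gate of each block and existentially quantify the connecting $\land$-gate inside the constraint, whereas the paper names both gates explicitly. The $N$-versus-$n$ subtlety you flag in part a) is real --- the width bound from~\cite{BovaCMS15} is in terms of the size of $F$, not of $n$ --- and the paper simply elides it by writing $n^{O(k)}$; your explicit polynomial-size proviso (unneeded in part b), where $2^{O(k)}$ is size-independent) is thus, if anything, more careful than the paper's own argument.
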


To show Theorem~\ref{thm:reverse} and several similar results for other width measures in this section, we make a detour through DNNF.
The idea is to show that from certain DNNF representations of functions, we can get clausal encondings of primal treewidth strongly related to the width of the DNNF.
Since many width measures can be used to construct small width DNNFs, we get small width clausal encodings for these width measures.
We now give a precise statement of the relation between DNNF and treewidth of clausal encodings.

\begin{lemma}\label{lem:dnnftotw}
 Let $f$ be a Boolean function in $n$ variables that is computed by a complete structured DNNF of width $k$. Then $f$ has a clausal encoding $F$ of primal treewidth $9\log(k)$ with $O(n \log(k))$ variables and $O(n k^3)$ clauses. Moreover, if $D$ is deterministic then $F$ has dependent auxiliary variables.
\end{lemma}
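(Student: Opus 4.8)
The plan is to convert the complete structured DNNF $D$ into a clausal encoding that guesses and then verifies, along the v-tree $T$, a single accepting ``certificate'' of $f$, using a \emph{binary} encoding at each node so that the treewidth comes out logarithmic rather than linear in $k$. Concretely, for each node $t$ of $T$ I would introduce $\lceil\log_2 k\rceil$ fresh Boolean variables forming a block $s_t$; their joint value is meant to index one $\lor$-gate (or, at a leaf, one input literal) of $\mu(t)$, which we think of as the \emph{active} gate $g_t$ at $t$. Since $\wi(D)\le k$, there are at most $k$ such gates in each $\mu(t)$, so $\lceil\log_2 k\rceil$ bits suffice. The original variables $X$ are kept and sit, as before, at the leaves of $T$.

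Next I would add the clauses. At an internal node $t$ with children $t_1,t_2$ I add clauses over $s_t\cup s_{t_1}\cup s_{t_2}$ enforcing the purely structural relation that the gate indexed by $s_t$ has some input $\land$-gate whose two children are the gates indexed by $s_{t_1}$ and $s_{t_2}$; this relation is a subset of $[k]^3$ and, being a Boolean relation over the $\le 3\lceil\log_2 k\rceil$ bits involved, can be written as a CNF with at most $2^{3\lceil\log_2 k\rceil}=O(k^3)$ clauses. At a leaf $t$ labelled $x$ I add clauses over $s_t\cup\{x\}$ forcing the literal indexed by $s_t$ to be satisfied by $x$, and at the root $r$ I fix $s_r$ to the output gate. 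Correctness follows by upward propagation: if all selected leaf literals are true under an assignment $a$ and every internal relation holds, then every active gate is true under $a$, so the output is true and $f(a)=1$; conversely, if $f(a)=1$ the top-down accepting computation of $D$ supplies a consistent choice of active gates. Counting gives $O(n\log k)$ auxiliary variables (over the $O(n)$ nodes of $T$) and $O(nk^3)$ clauses, as required.

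For the treewidth bound I would use $T$ itself as the decomposition tree, putting into the bag $B_t$ the variables $s_t\cup s_{t_1}\cup s_{t_2}$ at an internal node and $s_t\cup\{x\}$ at a leaf. Every clause lives inside a single such bag, and each block $s_t$ occurs only in $B_t$ and in the bag of the parent of $t$, which are adjacent in $T$; thus the connectivity and edge-coverage conditions hold. Each bag has at most $3\lceil\log_2 k\rceil$ elements, and using $\lceil\log_2 k\rceil\le 3\log k$ this is at most $9\log k$, giving $\twp(F)\le 9\log k$.

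Finally, to handle dependence when $D$ is deterministic, I would argue that the certificate is forced by $a$ from the root down. All active gates are true under $a$ by the propagation argument, hence the $\land$-gate witnessing each transition is true; determinism says each $\lor$-gate has at most one true input $\land$-gate, so this witness is unique, and therefore the active gates $g_{t_1},g_{t_2}$ at the children are determined by $g_t$. Since $g_r$ is fixed, induction down $T$ shows the whole certificate, and hence every $s_t$, is uniquely determined by $a$, so all auxiliary variables are dependent. I expect the main obstacle to be exactly this interplay between local and global consistency: the clauses only encode a \emph{local}, assignment-independent relation at each node, so the delicate points are verifying that local consistency together with correct leaf literals genuinely certifies acceptance (the upward propagation) and that under determinism it pins down a \emph{unique} certificate; getting the bound logarithmic rather than linear hinges on indexing the single active gate in binary instead of recording one variable per gate.
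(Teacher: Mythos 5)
Your proposal is correct and takes essentially the same route as the paper: guess a proof tree of $D$ via $O(\log k)$ binary index variables per v-tree node, verify it with local clauses ($O(k^3)$ per node), use the v-tree itself as the tree decomposition with bags formed by a node's block and its children's blocks, and derive dependence from determinism forcing a unique accepting proof tree. The only (harmless) difference is that you index just the $\lor$-gate at each node and existentially quantify the witnessing $\land$-gate inside the local relation, whereas the paper explicitly guesses both an $\lor$- and an $\land$-gate per node (using that each block can be assumed to contain at most $k^2$ $\land$-gates), which is why your bags come out even slightly smaller than the paper's $9\log k$ bound.
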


The proof of Lemma~\ref{lem:dnnftotw} will rely on so-called proof trees in DNNF, a concept that has found wide application in circuit complexity and in particular also in knowledge compilation. To this end, we make the following definition:
a \emph{proof tree} $\calT$ of a complete structured DNNF $D$ is a circuit constructed as follows:
\begin{enumerate}
 \item The output gate of $D$ belongs to $\calT$.
 \item \label{enu:or} Whenever $\calT$ contains an $\lor$-gate, we add exactly one of its inputs.
 \item Whenever $\calT$ contains an $\land$-gate, we add both of its inputs.
 \item No other gates are added to $\calT$.
\end{enumerate}
Note that the choice in Step~\ref{enu:or} is non-deterministic, so there are in general many proof trees for $D$. Observe also that due to the structure of $D$ given by its v-tree, every proof tree is in fact a tree which justifies the name. Moreover, letting $T$ be the v-tree of $D$, every proof tree of $D$ has exactly one $\lor$-gate and one $\land$-gate in the set $\mu(t)$ for every non-leaf node $t$ of $T$. For every leaf $t$, every proof tree contains an input gate $x$ or $\neg x$ where $x$ is the label of $t$ in $T$.

The following simple observation that can easily be shown by using distributivity is the main reason for the usefulness of proof trees.

\begin{observation}
 Let $D$ be a complete structured DNNF and $a$ an assignment to its variables. Then $a$ satisfies $D$ if and only if it satisfies one of its proof trees. Moreover, if $D$ is deterministic, then every assignment $a$ that satisfies $D$ satisfies exactly one proof tree of $D$.
\end{observation}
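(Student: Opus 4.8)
The plan is to prove a single pointwise identity by structural induction along the v-tree $T$ and then read off both statements. For every gate $g$ of $D$, let $\mathcal{P}_g$ denote the set of proof trees of the sub-DNNF rooted at $g$ (defined exactly as in the main definition but starting from $g$ instead of the output gate), and for a proof tree $\calT$ write $\calT(a)\in\{0,1\}$ for the value $\calT$ takes under $a$. Since every $\lor$-gate kept in $\calT$ retains a single input and every $\land$-gate both inputs, $\calT$ computes the conjunction of its leaf literals, so $\calT(a)=1$ holds exactly when every literal of $\calT$ is satisfied by $a$. The key claim is that for every gate $g$ and every assignment $a$,
\[
 g(a)=\bigvee_{\calT\in\mathcal{P}_g}\calT(a),
\]
where $g(a)$ denotes the value $g$ takes under $a$. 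Applying this to the output gate immediately gives the first statement, since $D(a)=1$ if and only if some proof tree evaluates to $1$, i.e.\ if and only if $a$ satisfies one of the proof trees.

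First I would handle the base case, where $g$ is an input gate labelled by a literal: then $\mathcal{P}_g$ has exactly one element, namely $g$ itself, whose value is precisely $g(a)$, so the identity is immediate. For an $\land$-gate $g$ with inputs $g_1,g_2$, I would observe that the proof trees in $\mathcal{P}_g$ are in bijection with pairs $(\calT_1,\calT_2)\in\mathcal{P}_{g_1}\times\mathcal{P}_{g_2}$, the corresponding value being $\calT_1(a)\land\calT_2(a)$. The induction hypothesis gives $g_i(a)=\bigvee_{\calT_i\in\mathcal{P}_{g_i}}\calT_i(a)$, and distributing the conjunction over the two disjunctions yields $g(a)=g_1(a)\land g_2(a)=\bigvee_{(\calT_1,\calT_2)}\calT_1(a)\land\calT_2(a)$, which is exactly $\bigvee_{\calT\in\mathcal{P}_g}\calT(a)$. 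For an $\lor$-gate $g$ with inputs $g_1,\dots,g_m$, a proof tree in $\mathcal{P}_g$ is obtained by choosing one input $g_i$ together with a proof tree in $\mathcal{P}_{g_i}$, and its value equals that of the chosen subtree; hence $g(a)=\bigvee_i g_i(a)=\bigvee_i\bigvee_{\calT_i\in\mathcal{P}_{g_i}}\calT_i(a)=\bigvee_{\calT\in\mathcal{P}_g}\calT(a)$ by the induction hypothesis. This closes the induction and establishes the first part.

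For the determinism statement I would prove, again by induction along $T$, the sharper invariant that for deterministic $D$ at most one proof tree in $\mathcal{P}_g$ is satisfied by $a$. The base case is trivial as $|\mathcal{P}_g|=1$. At an $\land$-gate, a satisfied proof tree corresponds to a pair of satisfied proof trees of the two inputs; by induction each input has at most one such, so the pair, and hence the proof tree of $g$, is unique whenever it exists. At an $\lor$-gate this is precisely where determinism enters: the determinism condition guarantees that under $a$ at most one input $g_i$ evaluates to $1$, and by the identity of the first part any satisfied proof tree of $g$ must be routed through an input $g_i$ with $g_i(a)=1$; since by induction that single true input carries at most one satisfied proof tree, so does $g$. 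Combining this uniqueness with the first statement applied to the output gate --- which guarantees at least one satisfied proof tree when $a$ satisfies $D$ --- shows that every assignment satisfying a deterministic $D$ satisfies exactly one proof tree.

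The main obstacle, such as it is, is purely bookkeeping: one must set up the notion of proof trees rooted at an arbitrary gate and the value $\calT(a)$ cleanly, and verify the claimed correspondences between $\mathcal{P}_g$ and the product of the children's proof-tree sets (for $\land$) respectively their tagged disjoint union (for $\lor$). Once these correspondences are in place, the first part follows from distributivity and the second from the single-true-input property of deterministic $\lor$-gates; no genuinely difficult step arises.
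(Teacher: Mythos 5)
Your proof is correct and is essentially the paper's own argument: the paper leaves this observation unproved, remarking only that it ``can easily be shown by using distributivity,'' and your structural induction along the v-tree (with distributivity at the $\land$-gates and the single-true-input property of deterministic $\lor$-gates for uniqueness) is precisely the formalization of that remark.
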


\begin{proof}[Proof (of Lemma~\ref{lem:dnnftotw})]
Let $D$ be the complete structured DNNF computing $f$ and let $T$ be the v-tree of $D$.
 The idea of the proof is to use auxiliary variables to ``guess'' for every $t$ an $\lor$-gate and an $\land$-gate. Then we use clauses along the v-tree~$T$ to verify that the guessed gates in fact form a proof tree and check in the leaves of $T$ if the assignment to the variables of $f$ satisfies the encoded proof tree. We now give the details of the construction.
 
 We first note that, as shown in~\cite{CapelliM18}, in complete structured DNNF of width $k$, one may assume that every set $\mu(t)$ contains at most $k^2$ $\land$-gates so we assume this to be the case for $D$. For every node $t$ of $T$, we introduce a set $X_t$ of $3\log(k)$ auxiliary variables to encode one $\lor$-gate and one $\land$-gate of $\mu(t)$ if $t$ is an internal node. If $t$ is a leaf, $X_t$ encodes one of the at most $2$ input gates in~$\mu(t)$. We now add clauses that verify that the gates chosen by the variables $X_t$ encode a proof tree by doing the following for every $t$ that is not a leaf: first, add clauses in $X_t$ that check if the chosen $\land$-gate is in fact an input of the chosen $\lor$-gate. Since $X_t$ has at most $3\log(k)$ variables, this introduces at most $k^3$ clauses. Let $t_1$ and $t_2$ be the children of $t$ in $T$. Then we add clauses that verify if the $\land$-gate chosen in $t$ has as input either the $\lor$-gate chosen in $t_1$ if $t_1$ is not a leaf, or the input gate chosen in $t_1$ if $t_1$ is a leaf. Finally, we add analogous clauses for $t_2$. Each of these clause sets is again in $3\log(k)$ variables, so there are at most $2k^3$ clauses in them overall. The result is a CNF-formula that accepts an assignment if and only if it encodes a proof tree of $D$.
 
 We now show how to verify if the chosen proof tree is satisfied by an assignment to $f$. To this end, for every leaf $t$ of $T$ labeled by a variable $x$, add clauses that check if an assignment to $x$ satisfies the corresponding input gate of $D$. Since $\mu(t)$ contains at most $2$ gates, this only requires at most $4$ clauses. This completes the construction of the clausal encoding. Overall, since $T$ has $n$ internal nodes, the CNF has $n (3\log(k) + 1)$ variables and $3n k^3 + 4n$ clauses.
 
 It remains to show the bound on the primal treewidth. To this end, we construct a tree decomposition $(T,(B_t)_{t\in V(T)})$ with the v-tree $T$ as underlying tree as follows: for every internal node $t\in V(T)$, we set $B_t := X_t \cup X_{t_1} \cup X_{t_2}$ where $t_1$ and $t_2$ are the children of~$t$. Note that for every clause that is used for checking if the chosen nodes form a proof tree, the variables are thus in a bag~$B_t$. For every leaf~$t$, set $B_t := X_t\cup \{x\}$ where $x$ is the variable that is the label of~$t$. This covers the remaining clauses. It follows that all edges of the primal graph are covered. To check the third condition of the definition of a tree decomposition, note that every auxiliary variable in a set $X_t$ appears only in $B_t$ and potentially in $B_{t'}$ where $t'$ is the parent of $t$ in $T$. Thus $(T,(B_t)_{t\in V(T)})$ constructed in this way is a tree decomposition of the primal graph of $C$. Obviously, the width is bounded by $9\log(k)$ since every $X_t$ has size $3\log(k)$, which completes the proof.
\end{proof}

\begin{proof}[Proof (of Theorem~\ref{thm:reverse})]
 We first show a). By~\cite{BovaCMS15}, whenever the function $f$ has a clausal encoding $F$ with one of the width measures from this statement bounded by $k$, then there is also a complete structured DNNF $D$ of width $n^{O(k)}$ computing~$F$. Now forget all auxiliary variables of $F$ to get a DNNF representation $D'$ of~$f$. Note that since forgetting does not increase the width, see~\cite{CapelliM18}, $D'$ also has width at most $n^{O(k)}$. We then simply apply Lemma~\ref{lem:dnnftotw} to get the result.
 
 To see b), just observe that, following the same construction, the width of $D$ is $2^{O(k)}$ for all considered width measures~\cite{BovaCMS15}.
\end{proof}

Remark that the construction of Theorem~\ref{thm:reverse} has a surprising property: the size and the number of auxiliary variables of the constructed encoding $F'$ does \emph{not} depend on the size of the initial encoding at all. Both depend only on the number of variables in $f$ and the width.

To maintain dependence of the auxiliary variables in the above construction, we have to work some more than for Theorem~\ref{thm:reverse}.
We start with some definitions.
We call a complete structured DNNF \emph{reduced} if from every gate there is a directed path to the output gate. Note that every complete structured DNNF can be turned into a reduced DNNF in linear time by a simple graph traversal and that this transformation maintains determinism and structuredness by the same v-tree. The following property will be useful.

\begin{lemma}\label{lem:extend}
 Let $D$ be a reduced complete structured DNNF and let $g$ be a gate in $D$. Let $a_g$ be an assignment to $\var(g)$, the variables in the subcircuit rooted in $g$, that satisfies $g$. Then, $a_g$ can be extended to an assignment $a$ that satisfies $D$.
\end{lemma}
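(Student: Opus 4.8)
The plan is to prove Lemma~\ref{lem:extend} by induction on the structure of the v-tree $T$, working from the gate $g$ upward toward the output gate along a path guaranteed to exist because $D$ is reduced. The key observation is that in a complete structured DNNF the gates are organized in the blocks $\mu(t)$ following the tree shape of $T$, so ``extending'' an assignment naturally corresponds to climbing the v-tree one node at a time. First I would fix the node $t_g$ of $T$ with $g\in\mu(t_g)$ and let $P = (t_g = s_0, s_1, \dots, s_m)$ be the path in $T$ from $t_g$ to the root, where $s_m$ is the root of the whole v-tree. Since $D$ is reduced, there is a directed path from $g$ to the output gate, and because inputs of gates in $\mu(s)$ come only from $\mu$ of the children of $s$, this directed path respects the tree structure and passes through exactly one gate of each $\mu(s_i)$ along $P$.

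The main inductive claim I would set up is the following: for each $i$, there is a gate $h_i\in\mu(s_i)$ lying on the directed path from $g$ to the output, together with an assignment $b_i$ to $\var(s_i) := \var(T_{s_i})$ that satisfies $h_i$ and agrees with $a_g$ on $\var(g)$. The base case $i=0$ is just $h_0 = g$ and $b_0 = a_g$. For the inductive step, suppose $h_i\in\mu(s_i)$ satisfies $b_i$. The gate $h_{i+1}\in\mu(s_{i+1})$ on the directed path takes $h_i$ as one of its inputs (possibly through an intermediate $\lor$/$\land$ pairing inside the block). If $h_{i+1}$ is an $\lor$-gate, then since one of its inputs $h_i$ evaluates to true under $b_i$, the gate $h_{i+1}$ is satisfied and I extend $b_i$ trivially. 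If $h_{i+1}$ is an $\land$-gate with inputs coming from the two children $s_i$ and $s'$ of $s_{i+1}$, then I need to satisfy the \emph{other} input $g'\in\mu(s')$, where $s'$ is the sibling subtree. Here is where I invoke the reducedness and completeness: since $D$ has no constant inputs and every $\lor$-gate has only $\land$-gate inputs each of which decomposes over the two children, I can recursively apply the same argument (or a simpler ``every gate is satisfiable'' fact) inside the subtree $T_{s'}$ to find an assignment $b'$ to $\var(s')$ satisfying $g'$. Because $\var(s_i)$ and $\var(s')$ are disjoint, $b_{i+1} := b_i \cup b'$ is a well-defined assignment to $\var(s_{i+1})$ satisfying the $\land$-gate $h_{i+1}$.

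The hard part will be establishing that the ``other'' input $g'$ of each $\land$-gate encountered is itself satisfiable by \emph{some} assignment, so that the recursive step can go through. This is really the crux: it amounts to the statement that every gate in a reduced complete structured DNNF with no constant inputs computes a nonconstant-false function, equivalently is satisfiable. I would prove this as a preliminary sublemma, again by structural induction down the v-tree. A leaf block contains only literals $x$ or $\neg x$, which are trivially satisfiable; an $\lor$-gate is satisfiable iff one of its $\land$-gate inputs is; and an $\land$-gate is satisfiable iff both its inputs, living in disjoint variable sets, are satisfiable, whence the product assignment works. Reducedness is not strictly needed for satisfiability itself, but it guarantees that the specific path from $g$ to the output exists so that the upward induction terminates at the genuine output gate, giving a model of all of $D$.

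Once the upward induction reaches $s_m$, the gate $h_m$ is the output gate and the assignment $b_m$ on $\var(s_m) = X$ (all variables) satisfies $D$ while agreeing with $a_g$ on $\var(g)$; setting $a := b_m$ completes the proof. I would close by remarking that the only facts used are the syntactic restrictions of complete structured DNNF from the definition (each $\land$-gate has two inputs whose v-tree nodes are the children, each $\lor$-gate has $\land$-gate inputs), the absence of constant inputs, and reducedness, so no additional machinery is required.
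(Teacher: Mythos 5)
Your proof is correct, and its skeleton---use reducedness to obtain a directed path from $g$ to the output gate, climb this path block by block along the v-tree, and fill in the sibling subtree at each $\land$-gate with an arbitrarily chosen satisfying assignment (well-defined because the two children span disjoint variable sets)---is exactly the idea behind the paper's proof. The difference is packaging: the paper compresses the whole argument into the proof-tree formalism it has already set up, taking a proof tree $\calT_g$ witnessing that $a_g$ satisfies $g$, adding the path from $g$ to the output, and completing it to a proof tree of $D$ by arbitrary choices at $\lor$-gates; the assignment witnessed by the completed proof tree is then the desired extension of $a_g$. Your version unfolds this into an explicit two-layer induction whose inner layer is the sublemma that every gate of a constant-free complete structured DNNF is satisfiable. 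What the paper's route buys is brevity (three sentences) and reuse of the Observation that satisfying assignments correspond to proof trees; what yours buys is self-containedness and an explicit accounting of where structuredness (disjointness of $\var(T_{t_1})$ and $\var(T_{t_2})$) and reducedness actually enter. Two cosmetic points: the directed path meets each internal block $\mu(s_i)$ in two gates (an $\land$-gate and then an $\lor$-gate), not exactly one, but your case distinction already covers this; and your satisfiability sublemma silently assumes every $\lor$-gate has at least one input---an assumption the paper's ``arbitrary choices at $\lor$-gates'' makes just as silently, so nothing is lost relative to the paper's own proof.
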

\begin{proof}
 We use the fact that an assignment to $D$ is satisfying if and only if there is a proof-tree that witnesses this. So let $\calT_g$ be a proof tree that witnesses $a_g$ satisfying $g$. We extend it to a proof tree for an extension $a$ of $a_g$ as follows: first add a path from $g$ to the output gate to $\calT_g$ and then iteratively add more gates as required by the definition of proof trees where the choices in $\lor$-gates are performed arbitrarily. The result is an extension $\calT$ of $\calT_g$ which witnesses that an assignment $a$ that extends $a_g$ satisfies $D$.
\end{proof}

Let $f$ be a function in variables $X\cup \{z\}$. We say that $z$ is \emph{definable} in $X$ with respect to~$f$ if there is a function $g$ such that for all assignments $a$ with $f(a)=1$ we have $a(z) = g(a|_{X})$ where $a|_{X}$ is the restriction of $a$ to $X$. 
\begin{lemma}\label{lem:forget}
Let $f$ be a function in variables $X\cup \{z\}$ such that $z$ is definable in $X$ with respect to~$f$. Let $D$ be a reduced complete structured deterministic $DNNF$ computing $f$. Then the complete structured DNNF $D'$ we get from $D$ by forgetting $z$ is deterministic as well.
\end{lemma}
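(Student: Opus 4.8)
The plan is to argue by contradiction, assuming $D'$ has an $\lor$-gate that violates determinism, and to trace that gate back to $D$. Recall that forgetting $z$ leaves the gate structure essentially intact (see~\cite{CapelliM18}): every $\lor$-gate of $D'$ comes from an $\lor$-gate $v$ of $D$ sitting at some node $t$ of the v-tree $T$, and it computes the projection $\exists z\,\phi_v$ of the function $\phi_v$ computed by $v$, with its inputs being the corresponding projections of the $\land$-gate inputs of $v$. If $z\notin\var(T_t)$ then forgetting does not touch $v$ at all, so the determinism of $v$ in $D$ transfers verbatim; hence it suffices to treat a node $t$ with $z\in\var(T_t)$. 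Write $X_t:=\var(T_t)\setminus\{z\}$. Suppose then that for some assignment $a_t$ to $X_t$ two distinct $\land$-gate inputs $h_1,h_2$ of $v$ become simultaneously true after forgetting; by the semantics of projection this means there are bits $b_1,b_2$ with $h_1$ true under $(a_t,z=b_1)$ and $h_2$ true under $(a_t,z=b_2)$.

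First I would dispose of the easy case $b_1=b_2$: here $h_1$ and $h_2$ are both true under the single assignment $(a_t,z=b_1)$ to $\var(T_t)$, and since these gates depend only on $\var(T_t)$ this already contradicts the determinism of the $\lor$-gate $v$ in $D$. The interesting case is $b_1\neq b_2$, say $h_1$ true under $(a_t,z=0)$ and $h_2$ true under $(a_t,z=1)$, i.e.\ the two inputs are ``separated by $z$''. The goal is to convert this situation into two satisfying assignments of $f$ that coincide on all of $X$ but disagree on $z$, which is exactly what the definability of $z$ forbids.

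To build them I would exploit that $D$ is reduced. Since $h_1$ is an input of $v$ and is true under $(a_t,z=0)$, the gate $v$ is satisfied by $(a_t,z=0)$, so by Lemma~\ref{lem:extend} this assignment extends to a full assignment $\alpha_1$ satisfying $D$; moreover, inspecting the proof of that lemma, the witnessing proof tree $\calT_1$ can be taken to use $v$ at node $t$ with $h_1$ as its chosen input. Now I would perform surgery on $\calT_1$: keep every gate it chooses at the nodes \emph{outside} $T_t$ unchanged, but inside $T_t$ replace the sub-proof-tree below $v$ by one that realises $v$ through its other input $h_2$ under $(a_t,z=1)$. Because $h_2$ is also an input of the very same $\lor$-gate $v$, the junction at $t$ remains legal and the resulting $\calT_2$ is a valid proof tree of $D$. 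The two proof trees agree on all leaves outside $T_t$, so the full assignments read off their leaves agree on $X\setminus X_t$; they also agree on $X_t$ (both equal $a_t$) and differ only in that $\alpha_1(z)=0$ while the assignment $\alpha_2$ of $\calT_2$ has $\alpha_2(z)=1$. By the proof-tree characterisation of satisfying assignments, both $\alpha_1$ and $\alpha_2$ satisfy $D$, hence $f$. As $\alpha_1|_X=\alpha_2|_X$, definability gives $0=\alpha_1(z)=g(\alpha_1|_X)=g(\alpha_2|_X)=\alpha_2(z)=1$, the desired contradiction.

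The main obstacle is precisely this proof-tree surgery: it is essential that the two full assignments agree on \emph{all} of $X$ and not merely on $X_t$, for otherwise definability yields nothing. Achieving this agreement rests on two points I would verify carefully — that reducedness lets me freeze an arbitrary satisfying context outside $T_t$ through Lemma~\ref{lem:extend}, and that the two offending inputs $h_1,h_2$ are inputs of one and the same $\lor$-gate $v$, so that swapping between them does not disturb the frozen context above $t$. A secondary, bookkeeping-level point is to make the correspondence between the $\lor$-gates of $D'$ and those of $D$ precise through the constant propagation of the forgetting step, so that ``two inputs of a gate of $D'$ are true'' genuinely pulls back to ``two inputs of $v$ are true'' in the sense used above.
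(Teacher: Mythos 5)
Your proof is correct and takes essentially the same route as the paper's: a proof by contradiction that uses Lemma~\ref{lem:extend} (i.e.\ reducedness) to turn the local determinism violation into two satisfying assignments of $D$ that agree on $X$, and then uses definability of $z$ to force a contradiction. The only difference is organizational — the paper applies Lemma~\ref{lem:extend} to $D'$ and derives the final contradiction against determinism of $D$, whereas you apply it to $D$ together with an explicit proof-tree surgery and contradict definability directly; your surgery step in fact spells out the paper's terse claim that both $z$-extensions satisfy $D$.
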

\begin{proof}
 By way of contradiction, assume this were not the case. Then there is an $\lor$-gate $g$ in $D'$ and an assignment $a'$ to $X$ such that two children $g_1$ and $g_2$ are satisfied by $a'$. By Lemma~\ref{lem:extend}, we may assume that $a'$ satisfies $D'$. Then there are extensions $a_1$ and $a_2$ of $a$ that assign a value to $z$ such that $a_1$ satisfies $g_1$ and $a_2$ satisfies $g_2$ in $D$. Note that both $a_1$ and $a_2$ satisfy $D$ and thus, by definability, $a_1$ and $a_2$ assign the same value to $z$. So $a_1=a_2$ and hence $a_1$ satisfies both $g_1$ and $g_2$ in $D$ which contradicts the determinism of $D$.
\end{proof}

\begin{theorem}\label{thm:reversedet}
 Let $f$ be a Boolean function of $n$ variables. 
 \begin{enumerate} 
  \item[a)] If $F$ has a clausal encoding with dependent auxiliary variables of modular treewidth, cliquewidth or mim-width $k$ then $f$ also has a clausal encoding $F'$ with dependent auxiliary variables of primal treewidth $O(k\log(n))$ with $O(k n\log(n))$ auxiliary variables and $n^{O(k)}$ clauses.
  \item[b)] If $F$ has a clausal encoding with dependent auxiliary variables of incidence treewidth, dual treewidth, or signed incidence cliquewidth $k$, then $f$ also has a clausal encoding $F'$ with dependent auxiliary variables of primal treewidth $O(k)$ with $O(nk)$ auxiliary variables and $2^k n$ clauses.
 \end{enumerate}
\end{theorem}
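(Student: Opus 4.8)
The plan is to re-run the proof of Theorem~\ref{thm:reverse} while ensuring that the intermediate DNNF fed into Lemma~\ref{lem:dnnftotw} is \emph{deterministic}, since by that lemma determinism is precisely what forces the output encoding to have dependent auxiliary variables. So let $F$ be the given clausal encoding of $f$ on variables $X\cup Y$, where the variables in $Y$ are the dependent auxiliary variables. As before, I would first invoke~\cite{BovaCMS15} to obtain a complete structured DNNF $D$ computing the formula $F$ on $X\cup Y$, of width $n^{O(k)}$ in case a) and $2^{O(k)}$ in case b). The key additional observation is that the compilation of~\cite{BovaCMS15} in fact produces a \emph{deterministic} DNNF, and that it can be turned into a reduced one without losing determinism or structuredness (as noted before Lemma~\ref{lem:extend}). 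It then suffices to forget the variables in $Y$ to obtain a reduced complete structured DNNF $D'$ computing $\exists Y\, F = f$ of width no larger than that of $D$, and to apply Lemma~\ref{lem:dnnftotw} to $D'$: the bounds on primal treewidth ($9\log(\wi(D'))$, i.e.\ $O(k\log n)$ and $O(k)$ respectively), on the number of auxiliary variables and on the number of clauses then follow exactly as in Theorem~\ref{thm:reverse}.

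The crux is to forget $Y$ without destroying determinism, which in general fails. This is exactly where dependence is used. Since every $y\in Y$ is a dependent auxiliary variable of $F$, its value in any model of $F$ is determined by the restriction of that model to $X$; in the terminology of Lemma~\ref{lem:forget}, each $y\in Y$ is definable in $X$ with respect to the function computed by $F$. I would forget the variables of $Y$ one at a time in some order $y_1,y_2,\dots$, re-reducing between steps, and maintain the invariant that each not-yet-forgotten $y_i$ is still definable in $X$ with respect to the current function $\exists\{y_1,\dots,y_{i-1}\}\,F$. This invariant is preserved: any model $b$ of $\exists\{y_1,\dots,y_{i-1}\}\,F$ extends to a model $a$ of $F$ with $a|_X=b|_X$ and $a(y_i)=b(y_i)$, so $b(y_i)$ is fixed by $b|_X$. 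As $X$ is contained in the remaining variable set minus $y_i$, Lemma~\ref{lem:forget} applies at each step and forgetting $y_i$ preserves determinism, while re-reducing preserves determinism and structuredness as well. Iterating over all of $Y$ yields a deterministic $D'$.

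Applying Lemma~\ref{lem:dnnftotw} to the deterministic $D'$ then produces a clausal encoding $F'$ of $f$ with \emph{dependent} auxiliary variables and the stated parameters, completing both a) and b).

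I expect the determinism-preservation step to be the main obstacle: the only reason $Y$ can be forgotten without breaking determinism is that its variables are dependent, and one must check carefully that dependence, equivalently definability in $X$, survives each successive existential projection so that Lemma~\ref{lem:forget} can legitimately be invoked at every step rather than only for the first variable.
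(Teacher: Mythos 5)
Your proposal is correct and follows essentially the same route as the paper's proof: observe that the compilation of~\cite{BovaCMS15} yields a \emph{deterministic} complete structured DNNF, use Lemma~\ref{lem:forget} to forget the dependent auxiliary variables without destroying determinism, and then apply Lemma~\ref{lem:dnnftotw} to obtain an encoding with dependent auxiliary variables. In fact, your careful treatment of the iterated forgetting---checking that definability in $X$ survives each successive existential projection and that re-reducing preserves determinism and structuredness, so that Lemma~\ref{lem:forget} can be invoked at every step---makes explicit a point the paper's proof passes over silently.
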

\begin{proof}
 The proof is essentially the same as that of Theorem~\ref{thm:reverse} with some additional twists. First observe that the complete structured DNNF $D$ constructed with~\cite{BovaCMS15} is deterministic. Then we use Lemma~\ref{lem:forget} when forgetting the auxiliary variables and get a $D'$ that is deterministic without increasing the width. Then, since $D'$ is deterministic, we can construct a clausal encoding with dependent auxiliary variables using Lemma~\ref{lem:dnnftotw}.
\end{proof}

Next we will show that signed incidence cliquewidth is linearly related to primal treewidth when allowing auxiliary variables. We will state a result similar to Lemma~\ref{lem:dnnftotw}.

To do so, we will start with a special case for which we introduce some more definitions: a \emph{special tree decomposition} of a graph $G$ is defined as a tree decomposition $(T, (B_t)_{t\in V(T)})$ in which for every vertex $x\in V(G)$ the set $\{t\in V(T) \mid x\in B_t\}$ lies on a leaf-root path in $T$~\cite{Courcelle12}. The \emph{special treewidth} is defined as the smallest width of any special tree decomposition of $G$. Finally, we define the primal special treewidth of a CNF-formula as the special treewidth of its primal graph.

\begin{lemma}\label{lem:special}
Every CNF-formula of primal special treewidth $k$ has signed incidence cliquewidth at most $k+1$.
\end{lemma}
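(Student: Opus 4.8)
The plan is to start from a special tree decomposition $(T,(B_t)_{t\in V(T)})$ of the primal graph of $F$ of width $k$, so that every bag has size at most $k+1$, fix a root of $T$, and then construct the \emph{signed} incidence graph of $F$ by a bottom-up clique-width expression that uses the labels $\{1,\dots,k+1\}$ as ``slots'' for the currently live variables. The one structural fact I would extract from specialness is that, since for every variable $x$ the set of bags containing $x$ lies on a single leaf-root path, no variable can occur in the bags of two distinct subtrees hanging below a common node; equivalently, the subtrees rooted at two siblings share no variable. This is exactly what allows me to combine the contributions of the children of a node by a plain disjoint union, the operation on which clique-width constructions hinge, without ever having to ``merge'' two copies of a vertex, which is impossible in clique-width and is the reason general treewidth forces many labels. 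I would also first assign each clause $C$ to a node $\lambda(C)$ whose bag contains all of $\var(C)$; such a node exists because $\var(C)$ is a clique in the primal graph and cliques lie in a common bag. After making the decomposition nice, I may assume each node carries at most one clause.

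The clean part of the construction is the labeling. I would precompute a global slot assignment $\sigma:\var(F)\to\{1,\dots,k+1\}$ such that any two variables sharing a bag receive distinct slots: each variable corresponds to a path in $T$, two variables share a bag exactly when their paths meet, and the intersection graph of subtrees of a tree is chordal with chromatic number equal to its clique number, which here is at most $k+1$. Thus every variable can keep one \emph{fixed} slot throughout its life, and variables that are simultaneously live always sit on different slots; combined with the sibling-disjointness above, this removes any need to reconcile labels at branch nodes. Processing $T$ bottom-up, at each node I take the disjoint union of the children's expressions, introduce the variables new at the node on their slots, and, if a clause $C$ is assigned there, create its vertex and, for each $x\in\var(C)$ on slot $\ell$, perform a join with label $+$ or a join with label $-$ between the clause's label and slot $\ell$ according to the sign of $x$ in $C$; this realizes precisely the signed star around~$C$. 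Finally I rename the clause vertex and every variable leaving the bag to a common ``inactive'' label so that their slots are freed. Bipartiteness of the incidence graph guarantees that the only joins ever performed create variable--clause edges, and the fixed-slot invariant guarantees that each join reaches exactly the intended variable.

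The hard part will be staying within $k+1$ labels. The tension is sharp: a full bag already occupies all $k+1$ slots with live variables, and a clause may contain all of them, so at the moment its vertex is created there is no manifestly free label to host it, and no manifestly free label on which to park the already-forgotten variables and finished clause vertices so that a later join to a reused slot does not accidentally hit them. The crux is therefore the accounting that lets a single reused label play these roles at once: because the graph is bipartite and every join runs from a freshly created clause to active variable slots, one label can safely hold all retired clauses and all forgotten variables \emph{provided it is never itself an endpoint of a later join}. I would make this precise by scheduling the introduce/connect/forget steps around each $\lambda(C)$ so that the label used to wire $C$ is a slot momentarily vacated there, and then argue that, with the fixed coloring $\sigma$ in hand, the operations can be ordered so that at most $k+1$ labels are ever simultaneously in use and no spurious variable--variable or clause--clause edge is created. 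Establishing that this scheduling is always possible, and checking it against the three tree-decomposition conditions as realized by $\sigma$, is where essentially all of the work lies.
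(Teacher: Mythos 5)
Your construction, up to the point where you stop, is exactly the paper's: root a special tree decomposition, assign each clause $C$ to a node $\lambda(C)$ whose bag contains $\var(C)$, fix once and for all a coloring of the variables by $\{1,\dots,k+1\}$ that is injective on every bag, and build the signed incidence graph bottom-up, using the fact that specialness makes the subgraphs produced below siblings vertex-disjoint (this is precisely why plain disjoint union suffices and why the paper works with special rather than ordinary tree decompositions); clauses are wired by signed joins against the variable slots, and finished vertices are renamed to an inactive label. So far you and the paper agree completely.

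The divergence is in what you call the hard part, and here you should stop: the scheduling you hope for does not exist, because the stated bound $k+1$ is itself too strong. Take $F$ to be the single clause $x_1\lor\neg x_2$. Its primal graph is a single edge, so its primal special treewidth is $k=1$, but its signed incidence graph is the path $x_1\,$--$\,C\,$--$\,x_2$ with one $+$ edge and one $-$ edge, and this needs three labels: the two edges must be created by distinct join operations since they carry different signs, and at the moment the second of them is performed the remaining vertex already exists, so with only two labels it shares a label with one endpoint of that join, forcing either a spurious $x_1x_2$ edge or a second, oppositely signed edge on $x_1C$. The paper does not solve this either; it spends a dedicated clause color $k+2$ and a graveyard label $d$ (which must differ from all of $1,\dots,k+2$, since it may not be an endpoint of any later join), so its construction as written uses $k+3$ labels, and even its closing claim of ``at most $k+2$ labels'' is inconsistent with the $k+1$ in the statement. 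The right move for you is therefore to abandon the tight accounting, reserve one extra label for the clause currently being wired and one permanent graveyard label, and prove $\scw(F)\le k+3$. That is what your construction already gives with no further work, it is what the paper's proof actually establishes, and the additive constant is irrelevant everywhere the lemma is used (Lemma~\ref{lem:dnnftoscw}, Corollary~\ref{cor:scw} and Theorem~\ref{thm:main} only require $\scw=O(\twp)$).
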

\begin{proof}
 Let $(T, (B_t)_{t\in V(T)})$ be a special tree decomposition of the primal graph of $F$. It is well known that for every clause $C$ there is a node $t=\lambda(C)$ of $T$ such that all variables of $C$ are in $B_t$. By adding copies of some bags $B_t$ along a root-leaf path in $T$, we may assume that $\lambda(C)\ne \lambda(C')$ for every pair $C, C'$ of clauses with $C\ne C'$.
 
 We will show how to construct the signed incidence graph $G'$ of $F$ with the operations in the definition of signed cliquewidth along the tree $T$. In a first step, we label every variable $x$ of $F$ with a color $\mu(x)$ from $\{1, \ldots, k+1\}$ such that in every bag $B_t$ there are no two variables with the same label $\mu(x)$. This can be done similarly to the first step of the proof of Theorem~\ref{thm:cliquegood} by descending from the root to the leaves and labeling the variables in the bags along this way. The label $\mu(x)$ will be the label that the variable gets when it is created in the construction of $G'$. As in the proof of Theorem~\ref{thm:cliquegood}, the only renamings of labels that we will perform will be forget operations, i.e., renaming a label to a dummy label $d$.
 
 For the construction of $G'$, we will iteratively construct for every $t\in V(T)$ a graph $G_t$ that contains all variables in $S_t:=\bigcup_{t'\in V(T_t)} B_t$ where $T_t$ is the subtree of $T$ rooted in $t$. Moreover, $G_t$ contains all clauses such that $\lambda(C)$ lies in $T_t$ and all signed edges connecting them to their variables.
 
 If $t$ is a leaf, then we create all variables in $B_t$ and if there is a clause $C$ with $\lambda(C)=t$, we introduce it with color $k+2$. Since all variables of $C$ have different colors, we can then introduce all signed edges individually. This completes the construction for the leaf case.
 
 Let now $t$ be an internal node with children $t_1, \ldots , t_\ell$. By assumption, we have already constructed $G_{t_1}, \ldots, G_{t_\ell}$. Note that for every $i$ the variables in $G_{t_i}$ that are not in $B_t$ are by construction already connected to all their clauses in $G_{t_i}$, so we can safely forget their label in a first step. Now we take the disjoint union of all $G_{t_i}$. Note that this union is in fact disjoint, because, since we start from a special tree decomposition, no node appears in more than one $G_{t_i}$. Now we create the variables which appear in $B_t$ but not in any $G_{t_i}$. Note that at this point the vertices with non-dummy labels are exactly those in $B_t$. If there is no clause $C$ with $\lambda(C)=t$, we are done. Otherwise, we create $C$ and connect it to all its variables by signed edges as in the leaf case. This completes the construction of $G_t$.
 
 For the root $r$ of $T$ we have $G_r = G'$ by definition. Moreover, we have used at most $k+2$ labels. This completes the proof.
\end{proof}

With Lemma~\ref{lem:special}, we can give a version of Lemma~\ref{lem:dnnftotw} for signed incidence cliquewidth easily.

\begin{lemma}\label{lem:dnnftoscw}
 Let $f$ be a Boolean function in $n$ variables that is computed by a structured DNNF of width $k$. Then $f$ has a clausal encoding $F$ of signed incidence cliquewidth and primal special treewidth $O(\log(k))$ with $O(n \log(k))$ variables and $O(n k^3)$ clauses. Moreover, if $D$ is deterministic then $F$ has dependent auxiliary variables.
\end{lemma}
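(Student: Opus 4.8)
The plan is to follow essentially the same strategy as the proof of Lemma~\ref{lem:dnnftotw}, but instead of bounding the \emph{primal treewidth} of the resulting encoding directly, I would first bound its \emph{primal special treewidth} by $O(\log(k))$ and then invoke Lemma~\ref{lem:special} to transfer this bound to signed incidence cliquewidth (paying only an additive constant). Concretely, starting from the width-$k$ complete structured DNNF $D$ computing $f$, I would reuse the auxiliary-variable encoding of proof trees from Lemma~\ref{lem:dnnftotw}: for every node $t$ of the v-tree $T$ introduce the set $X_t$ of $3\log(k)$ variables guessing one $\lor$-gate and one $\land$-gate of $\mu(t)$, add the clauses checking local consistency within $\mu(t)$ and compatibility with the children $t_1,t_2$, and add the leaf clauses checking that the chosen input gate agrees with the assignment to the labelled variable $x$. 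Correctness and the bounds $O(n\log(k))$ variables and $O(nk^3)$ clauses then follow verbatim from the proof of Lemma~\ref{lem:dnnftotw}, and the dependence of auxiliary variables in the deterministic case is inherited exactly as there.

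The one genuinely new point is the width bound, and this is where the work lies. In Lemma~\ref{lem:dnnftotw} the tree decomposition uses the v-tree $T$ as underlying tree with bags $B_t := X_t \cup X_{t_1} \cup X_{t_2}$; I need to check that this decomposition is in fact a \emph{special} tree decomposition, i.e.\ that for every variable of the encoding the set of bags containing it lies on a single leaf-root path of $T$. Every variable in a set $X_t$ occurs only in the bag $B_t$ and in the bag $B_{t'}$ of the parent $t'$ of $t$, and the edge $tt'$ of $T$ lies on the leaf-root path from $t$ (or from any leaf below $t$) to the root; since $T$ is rooted, the two bags $B_t$ and $B_{t'}$ always lie on a common root-leaf path. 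The same holds for a leaf variable $x$, which appears only in its own leaf bag. Hence the occurrences of every variable lie on a root-leaf path, so the decomposition is special, and its width is still bounded by $9\log(k) = O(\log(k))$ because each bag is a union of three sets $X_t$ each of size $3\log(k)$.

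Having established primal special treewidth $O(\log(k))$, applying Lemma~\ref{lem:special} gives signed incidence cliquewidth at most (primal special treewidth) $+1 = O(\log(k))$, which is exactly the claimed bound. I would then remark that the same encoding simultaneously witnesses both bounds, so there is no tension between the two measures in the statement. The main obstacle I anticipate is purely the verification that the decomposition inherited from Lemma~\ref{lem:dnnftotw} is genuinely special rather than merely a tree decomposition; once that observation is in place the rest is a direct reduction to the two lemmas already proved, and indeed the proof can be written quite compactly by simply citing the construction of Lemma~\ref{lem:dnnftotw}, noting its output is a special tree decomposition, and chaining with Lemma~\ref{lem:special}.
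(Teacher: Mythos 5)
Your proposal is correct and follows exactly the paper's own proof: the paper likewise observes that the tree decomposition constructed in Lemma~\ref{lem:dnnftotw} is already a \emph{special} tree decomposition (each $X_t$ occurs only in $B_t$ and the parent bag $B_{t'}$) and then chains this with Lemma~\ref{lem:special}. The only difference is that you spell out the verification of specialness, which the paper states as a one-line observation.
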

\begin{proof}
 We only have to observe that in fact the tree decomposition in the proof of Lemma~\ref{lem:dnnftotw} is special and apply Lemma~\ref{lem:special}.
\end{proof}

\begin{corollary}\label{cor:scw}
 Let $f$ be a function with a CNF-representation of primal tree\-width~$k$. Then $f$ has a clausal encoding of signed incidence cliquewidth and special treewidth~$O(k)$.
\end{corollary}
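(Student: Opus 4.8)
The plan is to obtain this as a direct composition of two ingredients already developed in the paper: the compilation of bounded-primal-treewidth CNF-formulas into complete structured DNNF, and the translation from such DNNF back to clausal encodings of small signed incidence cliquewidth and primal special treewidth given by Lemma~\ref{lem:dnnftoscw}. The key quantitative point is that the compilation produces width only \emph{singly exponential} in $k$, while Lemma~\ref{lem:dnnftoscw} loses only a \emph{logarithm} of the DNNF width, so the exponential and the logarithm cancel and leave a linear dependence on $k$.

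First I would invoke the compilation algorithm underlying Corollary~\ref{cor:twlower}: from the given CNF-representation of $f$ of primal treewidth $k$ one obtains a complete structured DNNF $D$ computing the same function $f$ with $\wi(D) \le 2^{O(k)}$. Since here $f$ is presented by a CNF-representation \emph{without} auxiliary variables, $D$ computes $f$ itself and no forgetting step is required (in contrast to the proof of Theorem~\ref{thm:reverse}, where one first forgets the auxiliary variables of the input encoding).

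Next I would feed $D$ into Lemma~\ref{lem:dnnftoscw}. Writing $k' := \wi(D) = 2^{O(k)}$ for the DNNF width, the lemma yields a clausal encoding $F$ of $f$ whose signed incidence cliquewidth $\scw(F)$ and primal special treewidth are both $O(\log(k')) = O\!\left(\log\!\left(2^{O(k)}\right)\right) = O(k)$, which is exactly the bound claimed. (If one additionally started from a deterministic compilation of $f$, the same argument together with the determinism clause of Lemma~\ref{lem:dnnftoscw} would give dependent auxiliary variables, but this is not needed for the statement as phrased.)

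I do not expect a genuine obstacle here, as the corollary is really just the concatenation of these two established steps; the only thing that must be checked carefully is the \emph{exponent}. It is precisely the guarantee of width $2^{O(k)}$ for primal-treewidth-$k$ CNF — rather than some larger bound such as $n^{O(k)}$ — that makes the logarithm in Lemma~\ref{lem:dnnftoscw} collapse back to $O(k)$ instead of $O(k\log n)$; a weaker compilation bound would spoil the linear relationship and only recover the $O(k\log n)$-type bounds seen in Theorem~\ref{thm:reverse}\,a).
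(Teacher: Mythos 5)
Your proposal is correct and matches the paper's intended argument: the corollary is stated without explicit proof precisely because it is the immediate composition of the compilation of a primal-treewidth-$k$ CNF into a complete structured DNNF of width $2^{O(k)}$ (the same fact underlying Corollary~\ref{cor:twlower} and Theorem~\ref{thm:reverse}~b)) with Lemma~\ref{lem:dnnftoscw}, whose logarithm cancels the exponential. Your observations that no forgetting step is needed for a representation without auxiliary variables, and that the $2^{O(k)}$ (rather than $n^{O(k)}$) width bound is what makes the linear relationship work, are exactly the right points to check.
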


\subsection{Putting Things Together}

We can now state the main result of this section.

\begin{theorem}
\label{thm:main}
 Let $A=\{\twp, \twd, \twi, \scw\}$ and $B= \{\mtw, \cw, \mimw\}$. Let $f$ be a Boolean function in $n$ variables.
 \begin{itemize}
  \item[a)] Let $w_1\in A$ and $w_2\in B$. Then there are constants $c_1$ and $c_2$ such that the following holds: let $F_1$ and $F_2$ be clausal representations for $f$ with minimal $w_1$-width and $w_2$-width, respectively. Then \[ w_1(F_1) \le k\log(n) \Rightarrow w_2(F_2) \le c_1 k \] and \[w_2(F_2) \le k \Rightarrow w_1(F_1) \le c_2 k\log(n).\]
  \item[b)] Let $w_1\in A$ and $w_2\in A$ or $w_1 \in B$ and $w_2\in B$. Then there are constants $c_1$ and $c_2$ such that the following holds: let $F_1$ and $F_2$ be clausal representations for $f$ of minimal $w_1$-width and $w_2$-width, respectively. Then \[ w_1(F_1) \le k \Rightarrow w_2(F_2) \le c_1 k \] and \[w_2(F_2) \le k \Rightarrow w_1(F_1) \le c_2 k.\]
 \end{itemize}
\end{theorem}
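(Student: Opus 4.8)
The plan is to assemble Theorem~\ref{thm:main} purely from the one-directional transfer results already proved in this section, treating the two classes $A$ and $B$ as equivalence classes under ``differing by at most a constant factor'' (for the intra-class case) and under ``differing by at most a logarithmic factor'' (for the cross-class case). The key observation is that primal treewidth $\twp$ acts as a hub: every measure in $A$ is linearly related to $\twp$ of optimal encodings, and every measure in $B$ is related to $\twp$ up to a $\log(n)$ factor, so all comparisons factor through $\twp$.

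First I would establish the cross-class claim a). Fix $w_1 \in A$ and $w_2 \in B$. For the implication $w_1(F_1) \le k\log(n) \Rightarrow w_2(F_2) \le c_1 k$, I note that $\twp(F_1)$ is linearly related to $w_1(F_1)$: since $\twp \in A$, the two are equivalent up to constants by part b) of the theorem (applied within $A$), so $F$ has an encoding of primal treewidth $O(k\log(n))$. Then Theorem~\ref{thm:cliquegood} immediately yields an encoding of modular incidence treewidth and cliquewidth $O(k)$; since $\mtw, \cw \in B$ and all of $B$ is equivalent up to constants, the minimal-$w_2$ encoding $F_2$ satisfies $w_2(F_2) \le c_1 k$. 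For the reverse implication $w_2(F_2) \le k \Rightarrow w_1(F_1) \le c_2 k\log(n)$, I use Theorem~\ref{thm:reverse}a): a $w_2$-encoding of width $k$ (for any $w_2 \in \{\mtw, \cw, \mimw\}$) yields an encoding of primal treewidth $O(k\log(n))$, and since $\twp$ and $w_1$ are equivalent up to constants within $A$, this bounds $w_1(F_1)$ by $c_2 k\log(n)$.

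Next I would handle the intra-class claim b). When both $w_1, w_2 \in A$, I use Corollary~\ref{cor:scw} together with Theorem~\ref{thm:reverse}b): starting from a $\twp$-encoding of width $k$, Corollary~\ref{cor:scw} gives a signed incidence cliquewidth encoding of width $O(k)$, and conversely Theorem~\ref{thm:reverse}b) converts any encoding of incidence treewidth, dual treewidth, or signed incidence cliquewidth $k$ back to primal treewidth $O(k)$; chaining these establishes that $\twp$, $\twi$, $\twd$, and $\scw$ are pairwise equivalent up to constants. (The equivalences among $\twp, \twd, \twi$ were already known by~\cite{SamerS10b,BriquelKM11,LampisMM18}; I would cite these and only need $\scw$ to join the class via Corollary~\ref{cor:scw} and Theorem~\ref{thm:reverse}b).) When both $w_1, w_2 \in B$, I transfer through $\twp$ in two steps: Theorem~\ref{thm:reverse}a) converts a $w_1$-encoding of width $k$ into primal treewidth $O(k\log(n))$, and then Theorem~\ref{thm:cliquegood} converts this back down into $w_2$-width $O(k)$; the $\log(n)$ factor introduced by the first step is exactly cancelled by the second, leaving a constant-factor relationship within $B$.

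The main obstacle is bookkeeping rather than mathematical depth: I must verify that the $\log(n)$ factors genuinely cancel in the intra-$B$ argument and do not silently accumulate, and that using $\twp$ as the hub is legitimate for \emph{every} pair simultaneously, since the constants $c_1, c_2$ depend on the chosen pair of measures. I would also need to be careful that the theorem speaks of \emph{minimal}-width encodings $F_1, F_2$ for each measure separately, so the chain of constructions must always begin from an optimal encoding of the source measure and end by comparing against the optimal encoding of the target measure — the existence of the intermediate encodings suffices to bound the optimum. Once this is laid out cleanly, the proof reduces to invoking Theorem~\ref{thm:cliquegood}, Theorem~\ref{thm:reverse}, and Corollary~\ref{cor:scw} in the right order, so I expect the final write-up to be quite short.
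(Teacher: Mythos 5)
Your overall strategy---using $\twp$ as a hub, establishing the intra-class equivalences first, and then composing them with Theorem~\ref{thm:cliquegood} and Theorem~\ref{thm:reverse} for the cross-class statement---is exactly the structure of the paper's own proof, and for every target measure except $\mimw$ your chains of reductions are correct, including the point you were worried about: the $\log(n)$ factor introduced by Theorem~\ref{thm:reverse}~a) is indeed absorbed by the hypothesis of Theorem~\ref{thm:cliquegood}, so nothing accumulates inside class $B$.

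There is, however, one step that fails as written: you never establish any route that produces an upper bound \emph{on} mim-width. Theorem~\ref{thm:cliquegood} only yields encodings of small modular treewidth and cliquewidth; it says nothing about $\mimw$, so the claim that it ``converts this back down into $w_2$-width $O(k)$'' is false when $w_2 = \mimw$. Consequently your intra-$B$ argument does not prove the equivalence for the pair $(w_1, \mimw)$, and your cross-class argument, which appeals to ``all of $B$ is equivalent up to constants'' in order to pass from $\cw$ to $\mimw$, relies on exactly the piece of part~b) that remains unproven. Note that all of your tools (Theorem~\ref{thm:reverse}~a), Theorem~\ref{thm:cliquegood}, Corollary~\ref{cor:scw}) either take a $\mimw$ bound as a hypothesis or produce treewidth/cliquewidth-type bounds as conclusions, so no composition of them can bound $\mimw$ from above. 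The paper closes this gap with one additional graph-theoretic fact: for every graph $G$ one has $\mimw(G) \le c \cdot \cw(G)$ for an absolute constant $c$ (see~\cite[Section 4]{vatshelleThesis}), so the cliquewidth-$O(k)$ encoding produced by Theorem~\ref{thm:cliquegood} itself already has mim-width $O(k)$. With that single citation added, your proof goes through and coincides with the paper's.
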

\begin{proof}
 Assume first that $w_1=\twp$. For a) we get the second statement directly from Theorem~\ref{thm:reverse} a). For $\cw$ and $\mtw$ we get the first statement by Theorem~\ref{thm:cliquegood}. For $\mimw$ it follows by the fact that for every graph $\mimw(G) \le c \cdot \cw(G)$ for some absolute constant~$c$, see~\cite[Section 4]{vatshelleThesis}.
 
 For b), the second statement is Theorem~\ref{thm:reverse} b). Since for every formula $F$ we have $\twi(F) \le \twp(F)+1$, see e.g.~\cite{FischerMR08}, the first statement for $\twi$ is immediate. For $\scw$ it is shown in Corollary~\ref{cor:scw}, while for $\twd$ it can be found in~\cite{SamerS10b}.
 
 All other combinations of $w_1$ and $w_2$ can now be shown by an intermediate step using $\twp$.
\end{proof}

\section{Applications}\label{sec:examples}

\subsection{Cardinality Constraints}

In this section, we consider cardinality constraints, i.e., constraints of the form $\sum_{i\in [n]} x_i \le k$ in the Boolean variables $x_1, \ldots, x_n$. The value $k$ is commonly called the \emph{degree} or the \emph{threshold} of the constraint. Let us denote by $C_n^k$ the cardinality constraint with $n$ variables and degree $k$. Cardinality constraints have been studied extensively and many encodings are known, see e.g.~\cite{Sinz05}. Here we add another perspective on cardinality constraint encodings by determining their optimal treewidth. We remark that we could have studied cardinality constraints in which the relation is $\ge$ instead of $\le$ with essentially the same results.

We start with an easy observation:

\begin{observation}\label{obs:cardinality}
 $C_n^k$ has an encoding of primal treewidth $O(\log(\min(k, n-k)))$
\end{observation}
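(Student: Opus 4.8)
The plan is to construct an explicit encoding of $C_n^k$ and bound its primal treewidth, using the symmetry $\sum x_i \le k \Leftrightarrow \sum (1-x_i) \ge n-k$ to reduce to the case where the threshold is at most $n/2$. So without loss of generality assume $k \le n/2$, and set $m := k = \min(k, n-k)$; I will aim for primal treewidth $O(\log m)$.

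The natural tool is a \emph{sequential counter} encoding built along a path $x_1, x_2, \dots, x_n$. The idea is to maintain, after having read the first $i$ inputs, a running count of how many of $x_1,\dots,x_i$ have been set to $1$, but to \emph{cap} this count at $m+1$ (since any count exceeding $m$ already violates the constraint, we never need to distinguish values above $m$). This capped partial sum takes only $m+2$ possible values $\{0,1,\dots,m,m+1\}$, so I can encode it in binary using $b := \lceil \log_2(m+2) \rceil = O(\log m)$ auxiliary variables. Concretely, I would introduce for each $i \in \{0,1,\dots,n\}$ a block $S_i$ of $b$ auxiliary variables encoding the capped count $s_i = \min(m+1, \sum_{j\le i} x_j)$, together with clauses enforcing the initialization $s_0 = 0$, the transition relation $s_i = \min(m+1, s_{i-1} + x_i)$ for each $i \in [n]$, and the final acceptance condition $s_n \le m$. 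Each transition clause set involves only the variables of $S_{i-1}$, $S_i$, and the single input $x_i$ — that is, at most $2b+1$ variables — and these are exactly the ``local gadgets'' whose constraints can be written out as a CNF over $O(\log m)$ variables.

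For the treewidth bound I would exhibit a path decomposition directly. Take the path $P_{n+1}$ with nodes $0,1,\dots,n$, and put in bag $B_i$ (for $i \in [n]$) the variables $S_{i-1} \cup S_i \cup \{x_i\}$. Every clause of the transition gadget at step $i$ has all its variables inside $B_i$, the initialization and final clauses sit in the end bags, and each auxiliary variable of $S_i$ appears only in the two consecutive bags $B_i$ and $B_{i+1}$, so the connectivity condition holds. Each bag has size $2b + 1 = O(\log m)$, giving primal treewidth $O(\log m) = O(\log(\min(k, n-k)))$ as claimed.

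The only genuine thing to verify is \textbf{correctness} of the gadget, namely that the capped running count really certifies the cardinality constraint: an extension satisfying all clauses exists precisely when $\sum_i x_i \le k$, and the capping at $m+1$ never loses information because once the true count exceeds $m$ it can never come back down. This is the step I would write with a little care, but it is routine: a straightforward induction on $i$ shows the clauses force $s_i = \min(m+1,\sum_{j\le i}x_j)$, after which the final clause $s_n \le m$ is satisfiable iff the true sum is at most $k$. I do not expect any real obstacle here — the main (minor) subtlety is just bookkeeping the binary encoding of the ``capped increment'' relation so that it is expressible by clauses on the $O(\log m)$ variables of a single bag, which is immediate since any Boolean relation on $O(\log m)$ variables has a CNF of size $2^{O(\log m)} = m^{O(1)}$.
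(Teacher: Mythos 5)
Your proposal is correct and follows essentially the same route as the paper's proof: a sequential counter whose capped partial sums are encoded in binary over $O(\log(\min(k,n-k)))$ auxiliary bits per step, with locality of the transition clauses yielding a path decomposition of width $O(\log(\min(k,n-k)))$, and the case $k > n/2$ handled by counting zeros instead of ones. The only difference is presentational: you spell out the path decomposition and the correctness induction explicitly, where the paper leaves these as "easily seen."
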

\begin{proof}
First assume that $k < n/2$. We iteratively compute the partial sums of $S_j:=\sum_{i\in [j]} x_i$ and encode their values in $\log(k)+1$ bits $Y^j:=\{y_1^j, \ldots , y_{\log(k)+1}^j\}$. We cut these sums off at $k+1$ (if we have seen at least $k+1$ variables set to $1$, this is sufficient to compute the output). In the end we encode a comparator comparing the last sum $S_n$ to $k$.

Since the computation of $S_{j+1}$ can be done from $S_j$ and $x_{j+1}$, we can compute the partial sums with clauses containing only the variables in $Y^j \cup Y^{j+1} \cup \{x_{j+1}\}$, so $O(\log(k))$ variables. The resulting CNF-formula can easily be seen to be of treewidth $O(\log(k))$.

If $k > n/2$, we proceed similarly but count variables assigned to $0$ instead of those set to $1$.
\end{proof}

We remark that our construction is that described as the basic approach in~\cite[Section 8.6.7]{Handbook09}. It has some similarity with the sequential counter introduced in~\cite{Sinz:2005:TOC:3102585.3102662}.
The main difference is that we encode the partial sums $S_j$ in binary whereas in the sequential counter, they are encoded in unary.
This latter encoding has better properties with respect to unit propagation, whereas our encoding has smaller treewidth, which is the parameter we are optimizing for.
We now show that Observation~\ref{obs:cardinality} is essentially optimal. 

\begin{proposition}Let $k< n/2$. Then
 \[\cc(C_n^k) = \Omega(\log(\min(k, n/3))).\]
\end{proposition}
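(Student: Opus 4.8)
The plan is to lower-bound the best-case communication complexity $\cc(C_n^k)$ by exhibiting, for \emph{every} balanced partition $\Pi = (Y,Z)$, a large ``fooling set'' or a direct rectangle-cover size bound, and then take the minimum over $\Pi$. Recall $\cc(C_n^k) = \min_\Pi \ccd(C_n^k, \Pi)$ over partitions with $\min(|Y|,|Z|)\ge n/3$, so I must show that no such balanced partition admits a small rectangle cover. First I would fix an arbitrary balanced partition $\Pi=(Y,Z)$ of the $n$ input variables; by balance, both sides contain at least $n/3$ variables. The key observation is that $C_n^k$ behaves like a ``threshold'' version of disjointness/equality once we restrict attention to inputs that split their weight across the two sides.

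The main combinatorial step is to find many inputs that must lie in distinct rectangles of any cover. I would consider assignments where exactly $k$ of the variables are set to $1$, distributed as $j$ ones in $Y$ and $k-j$ ones in $Z$, for $j$ ranging over a suitable interval around $k/2$ (or around $\min(k,|Y|)$ if the partition is skewed). The point is that an accepting input with $j$ ones on the $Y$-side and an accepting input with $j'$ ones on the $Z$-side, when ``crossed'' in a combinatorial rectangle $r_1(Y)\land r_2(Z)$, yield a mixed input whose total weight is $j + (k-j') $; if this exceeds $k$, the rectangle would wrongly accept a rejecting input. This is the standard rectangle argument: a fooling set consists of accepting inputs $(y^{(s)}, z^{(s)})$ such that for $s\ne s'$ at least one of the cross combinations $(y^{(s)}, z^{(s')})$ or $(y^{(s')}, z^{(s)})$ is rejected by $C_n^k$. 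Choosing the $y^{(s)}$ to have exactly $s$ ones (within $Y$) and the matching $z^{(s)}$ to have exactly $k-s$ ones (within $Z$), a cross term $(y^{(s)}, z^{(s')})$ with $s > s'$ has total weight $s + (k-s') = k + (s-s') > k$, hence is rejected. This gives a fooling set of size roughly $\min(k, |Y|, |Z|) = \Omega(\min(k, n/3))$, forcing any rectangle cover to have at least that many rectangles, so $\ccd(C_n^k,\Pi) = \log(s_{\min}) \ge \log\Omega(\min(k,n/3))$.

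Since this lower bound holds for \emph{every} balanced $\Pi$ (the fooling-set construction only needs each side to be able to hold enough ones, which the balance condition $\min(|Y|,|Z|)\ge n/3$ guarantees, and since $k<n/2$ we have $\min(k,n/3)$ worth of room on each side), taking the minimum over $\Pi$ preserves it, yielding $\cc(C_n^k) = \Omega(\log(\min(k,n/3)))$ as claimed. I would double-check the boundary bookkeeping: when $k \le n/3$ the fooling set has size $\Theta(k)$ and lives comfortably inside both sides; when $n/3 < k < n/2$ the binding constraint is the $n/3$ room on the smaller side, giving $\Theta(n/3)$, which matches the $\min(k,n/3)$ in the statement.

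The main obstacle I anticipate is handling \emph{arbitrarily skewed-within-balance} partitions cleanly: even though $\min(|Y|,|Z|)\ge n/3$, one side could hold far more than $k$ variables while the other holds exactly $n/3$, so I must make sure the fooling indices $s$ range over a set of size $\Omega(\min(k,n/3))$ that is simultaneously realizable on both sides — i.e. both $s \le |Y|$ and $k-s \le |Z|$ must have enough slack. Verifying that such an interval of valid $s$ always has length $\Omega(\min(k,n/3))$ under the hypothesis $k<n/2$ and the balance condition is the one genuinely delicate estimate; everything else is the routine fooling-set-to-rectangle-cover implication already packaged by the definition of $\ccd$.
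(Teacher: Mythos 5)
Your proof is correct and follows essentially the same route as the paper's: both use the family of accepting assignments with $s$ ones on the $Y$-side and $k-s$ ones on the $Z$-side, observe that the cross combination of two such assignments has weight $k+(s-s')>k$ and hence cannot lie in any rectangle of a cover, and conclude that every balanced partition forces $\Omega(\min(k,n/3))$ rectangles. If anything, your explicit check that the index range $\max(0,k-|Z|)\le s\le\min(k,|Y|)$ still has length $\Omega(\min(k,n/3))$ for skewed-but-balanced partitions is more careful than the paper's write-up, which tacitly assumes $k-i\le |Z|$ for all chosen indices $i$.
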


\begin{proof}
Let $s = \min(k, \frac{n}{3})$.
 Consider an arbitrary partition $Y,Z$ with $\frac{n}{3} \le |Y| \le \frac{2n}{3}$. We show that every rectangle cover of $C_n^k$ must have $s$ rectangles. To this end, choose assignments $(a_0, b_0), \ldots, (a_s, b_s)$ such that $a_i:Y\rightarrow \{0,1\}$ assigns $i$ variables to $1$ and $b_i:Z\rightarrow \{0,1\}$ assigns $k-i$ variables to $1$. Note that every $(a_i, b_i)$ satisfies $C_n^k$. We claim that no rectangle $r_1(Y) \land r_2(Z)$ in a rectangle cover of $C_n^k$ can have models $(a_i, b_i)$ and $(a_j, b_j)$ for $i \ne j$. To see this, assume that such a model exists and that $i<j$. Then the assignment $(a_j, b_i)$ is also a model of the rectangle since $a_j$ satisfies $r_1(Y)$ and $b_i$ satisfies $r_2(Z)$. But $(a_j, b_i)$ contains more than $k$ variables assigned to $1$, so the rectangle $r_1(Y) \land r_2(Z)$ cannot appear in a rectangle cover of $C_n^k$. Thus, every rectangle cover of $C_n^k$ must have a different rectangle for every model $(a_i, b_i)$ and thus at least $s$ rectangles. This completes the proof for this case.
\end{proof}

A symmetric argument shows that for $k> n/2$ we have the lower bound $\cc(C_n^k) = \Omega(\log(\min(n-k, n/3)))$.
Observing that $k<n$ for non-trivial cardinality constraints, we get the following from Theorem~\ref{thm:ijcai}.

\begin{corollary}
Clausal encodings of smallest primal treewidth for $C_n^k$ have primal treewidth $\Theta(\log(s))$ for $s=\min(k, n-k)$. The same statement is true for dual and incidence treewidth and signed incidence cliquewidth.
For incidence cliquewidth, modular treewidth and mim-width, there are clausal encodings of $C_n^k$ of constant width.
\end{corollary}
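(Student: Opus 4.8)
The plan is to assemble the statement entirely from results already established: the upper bounds come from Observation~\ref{obs:cardinality} combined with the width-transfer theorems of the previous section, while the lower bounds come from the preceding Proposition fed through Corollary~\ref{cor:twlower}. First I would fix $s=\min(k,n-k)$ and record the upper bound, namely that Observation~\ref{obs:cardinality} already produces a clausal encoding of $C_n^k$ of primal treewidth $O(\log s)$.

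For the measures in $\{\twp,\twd,\twi,\scw\}$ I would establish matching $\Theta(\log s)$ bounds. The lower bound is the crux. The Proposition gives $\cc(C_n^k)=\Omega(\log(\min(k,n/3)))$ for $k<n/2$ and the symmetric $\Omega(\log(\min(n-k,n/3)))$ for $k>n/2$, so I would check that in every regime this quantity equals $\Omega(\log s)$; the only delicate case is $n/3<k<n/2$ (and its mirror), where $\log(n/3)$, $\log k$, and $\log s$ are all $\Theta(\log n)$ and hence agree up to constants. Plugging $\cc(C_n^k)=\Omega(\log s)$ into Corollary~\ref{cor:twlower} then yields $\min\{\twp,\twd,\twi,\scw\}\ge b\cdot\cc(C_n^k)=\Omega(\log s)$ for every encoding. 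For the matching upper bounds I would start from the primal-treewidth $O(\log s)$ encoding of Observation~\ref{obs:cardinality}: incidence treewidth follows from $\twi(F)\le\twp(F)+1$, and dual treewidth and signed incidence cliquewidth follow from the corresponding $\twp\Rightarrow w_2$ directions of Theorem~\ref{thm:main}~b) (the latter resting on Corollary~\ref{cor:scw}). Combining the two directions gives $\Theta(\log s)$ for all four measures.

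For $\{\cw,\mtw,\mimw\}$ only the constant-width upper bound is claimed, and here I would exploit the logarithmic gain of Theorem~\ref{thm:cliquegood}. Since the encoding of Observation~\ref{obs:cardinality} has primal treewidth $O(\log s)\le O(\log n)$, it has primal treewidth at most $c\log n$ for an absolute constant $c$, so taking $k=c$ in Theorem~\ref{thm:cliquegood} produces an encoding of modular treewidth and incidence cliquewidth $O(c)=O(1)$. Mim-width then follows from the absolute bound $\mimw(G)\le c'\cdot\cw(G)$ used in the proof of Theorem~\ref{thm:main}, so $C_n^k$ also has encodings of constant mim-width.

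The main obstacle is conceptual rather than computational. One must (i) verify that the communication-complexity lower bound of the Proposition genuinely matches $\log s$ across \emph{all} parameter regimes, in particular near $k=n/2$, rather than only when $k\le n/3$; and (ii) be careful that for $\{\cw,\mtw,\mimw\}$ it is precisely the logarithmic scaling of Theorem~\ref{thm:cliquegood} that collapses the $O(\log s)$ primal-treewidth bound to a genuine constant. This is exactly why a matching $\Omega(\cdot)$ lower bound is neither claimed nor available for these three measures, since Corollary~\ref{cor:cwlower} only yields bounds weaker by a factor of $\log n$.
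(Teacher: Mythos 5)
Your proposal is correct and follows exactly the route the paper intends: the paper presents this corollary as an immediate assembly of Observation~\ref{obs:cardinality} (upper bound), the preceding Proposition fed through the communication-complexity machinery of Theorem~\ref{thm:ijcai}/Corollary~\ref{cor:twlower} (lower bound for all four measures in the class $\{\twp,\twd,\twi,\scw\}$, with the width-transfer results of Theorem~\ref{thm:main}~b) and Corollary~\ref{cor:scw} giving the matching upper bounds), and Theorem~\ref{thm:cliquegood} together with $\mimw \le O(\cw)$ for the constant-width claims. Your explicit verification that the Proposition's bound $\Omega(\log(\min(k,n/3)))$ agrees with $\Omega(\log s)$ in the regime $n/3 < k < n/2$ is a detail the paper leaves implicit, and you handle it correctly.
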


\subsection{The Permutation Function}

We now consider the permutation function~\perm{} which has the $n^2$ input variables $X_n=\{x_{ij} \mid i,j\in [n]\}$ thought of as a matrix in these variables. \perm{} evaluates to $1$ on an input $a$ if and only if $a$ is a permutation matrix, i.e., in every row and in every column of $a$ there is exactly one $1$. 

\begin{example}
 The function \constraint{PERM}$_2$ has the variables $x_{11}, x_{12}, x_{21}, x_{22}$ which we interpret organized as the matrix 
 $\begin{pmatrix}
   x_{11}& x_{12}\\ x_{21}& x_{22}
  \end{pmatrix}$. 
The only inputs on which \constraint{PERM}$_2$ evaluates to $1$ are 
 $\begin{pmatrix}
   1& 0\\ 0& 1
  \end{pmatrix}$ and
 $\begin{pmatrix}
   0& 1\\ 1& 0
  \end{pmatrix}$. 
Inputs on which \constraint{PERM}$_2$ evaluates to $0$ are for example 
 $\begin{pmatrix}
   1& 1\\ 1& 0
  \end{pmatrix}$ (the first row has more than one $1$-entry) and 
 $\begin{pmatrix}
   0& 1\\ 0& 0
  \end{pmatrix}$ 
(the first column has no $1$-entry).
\end{example}

\perm{} is known to be hard in several versions of branching programs, see~\cite{Wegener00}. In~\cite{BriquelKM11}, it was shown that clausal encodings of \perm{} require tree\-width~$\Omega(n/\log(n))$. We here give an improvement by a logarithmic factor.

\begin{lemma}\label{lem:ccperm}
 For every v-tree $T$ on variables $X_n$, there is a node $t$ of $T$ such that \[\ccd(\text{\perm}, Y, Z) = \Omega(n)\] where $Y=\var(T_t)$ and $Z= X\setminus Y$.
\end{lemma}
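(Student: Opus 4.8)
The plan is to establish a communication complexity lower bound for \perm{} that holds for \emph{every} balanced partition of the variables, since a v-tree always yields a node $t$ with a balanced split as observed in Corollary~\ref{cor:bestcase}. Concretely, I would fix an arbitrary node $t$ of the v-tree with $n^2/3 \le |Y| \le 2n^2/3$, where $Y = \var(T_t)$, and argue that for any such balanced partition $(Y,Z)$ of the $n^2$ matrix entries, any rectangle cover of \perm{} respecting $(Y,Z)$ must contain $2^{\Omega(n)}$ rectangles. The combinatorial heart of the matter is the classical fooling-set style argument for the permutation-matrix function: I want to exhibit a large family of accepting inputs (permutation matrices) such that no single combinatorial rectangle respecting $(Y,Z)$ can contain two of them.

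The key step is to find enough rows and columns that are ``split'' by the partition, so that the adversary can swap the $Y$-part and $Z$-part of two different permutations and produce a non-permutation that the rectangle would wrongly accept. First I would observe that, because the partition is balanced, a constant fraction of rows (say at least $\Omega(n)$ of them) must each contain entries in both $Y$ and $Z$; otherwise one side would be forced to hold almost all $n^2$ entries, contradicting balance. Restricting attention to these mixed rows, I would build a family of permutation matrices $\{\sigma_1,\dots,\sigma_m\}$ with $m = 2^{\Omega(n)}$ that differ only on the mixed rows, chosen so that for $i \ne j$ the ``hybrid'' matrix obtained by taking the $Y$-entries of $\sigma_i$ and the $Z$-entries of $\sigma_j$ is \emph{not} a permutation matrix. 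For such a family, if a rectangle $r_1(Y)\land r_2(Z)$ accepted both $\sigma_i$ and $\sigma_j$, then it would also accept the hybrid (since $r_1$ sees only $Y$ and $r_2$ only $Z$), which is false for \perm{}; hence each $\sigma_i$ needs its own rectangle, forcing $2^{\Omega(n)}$ rectangles and therefore $\ccd(\text{\perm}, Y, Z) = \Omega(n)$.

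The main obstacle I anticipate is the simultaneous control of two things: ensuring the hybrids fail to be permutation matrices \emph{and} keeping the family exponentially large, all while the adversary has no control over \emph{which} cells land in $Y$ versus $Z$ — only that $\Omega(n)$ rows are mixed. The cleanest route is probably to pair up the mixed rows and, within each pair, set up a ``choice'' (two permutation patterns restricted to the four relevant cells) such that swapping the $Y$/$Z$ halves between the two choices in the pair breaks the column-uniqueness or row-uniqueness condition. Each independent pair then contributes a free binary choice, giving $2^{\Omega(n)}$ permutations whose pairwise hybrids violate the permutation property. Getting the combinatorics of this pairing right, so that each swap genuinely produces a repeated or missing $1$ in some column, is the delicate part; the balance bound guaranteeing enough mixed rows is comparatively routine. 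Having shown the bound for an arbitrary balanced partition, the lemma follows by applying it to the balanced v-tree node, exactly as in the proof of Corollary~\ref{cor:bestcase}.
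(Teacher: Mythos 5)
Your plan has a genuine gap, and it appears already in the first combinatorial step. You claim that any variable-balanced partition $(Y,Z)$ of the $n^2$ cells must have $\Omega(n)$ \emph{mixed} rows, ``otherwise one side would be forced to hold almost all $n^2$ entries.'' This is false: take $Y$ to be the union of the first $\lceil n/3\rceil$ full rows and $Z$ the remaining full rows. This partition is perfectly balanced, is realizable as $\var(T_t)$ for a v-tree node, and has \emph{zero} mixed rows -- unmixed rows can be distributed to both sides, so balance forces nothing about mixing. (Nondeterministic communication is still $\Omega(n)$ for this partition, since a rectangle cover essentially has to name which set of $n/3$ columns the first block of rows occupies, but your argument produces no fooling set at all here.) A second, independent gap sits in the pairing construction itself, which you acknowledge as ``delicate'' but do not carry out: for a pair with first row $i_1$ you need two columns $a\neq b$ with $(i_1,a)\in Y$ and $(i_1,b)\in Z$, and these columns must be distinct across all pairs so that your matrices are genuine permutation matrices. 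Mixedness alone does not guarantee this: the adversary can give every mixed row exactly one $Y$-cell, all lying in the \emph{same} column (pad $Y$ with full rows to restore balance), and then at most one pair can ever be formed. Repairing this requires a global row-versus-column case analysis (when rows fail in this way, the columns become usable, and vice versa), which is precisely the hard part and is missing.

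It is worth seeing how the paper sidesteps both issues. It never selects a node balanced in the \emph{number of variables}; instead, it uses the fact that every model has exactly $n$ ones, so for each model $a$ some node $t_a$ splits the \emph{ones of $a$} in a $1/3$--$2/3$ balanced way, and by pigeonhole some single node $t$ does this for at least $(n-1)!$ models. Then, instead of a fooling set, it uses a counting argument: all models of one rectangle must induce the same row set $I(a)=\{i \mid x_{i,\pi_a(i)}\in Y\}$ and the same column set $I'(a)$ (otherwise a hybrid would fail to be a permutation -- this is the only place the rectangle-swapping idea is needed), so each rectangle contains at most $k!(n-k)!\le (\frac n3)!(\frac{2n}3)!$ of these models, forcing at least $\frac1n\binom{n}{n/3}=2^{\Omega(n)}$ rectangles. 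This avoids any structural claim about how the partition meets rows and columns, which is exactly where your proposal breaks down.
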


\begin{proof}
The proof is a variation of arguments in~\cite{BriquelKM11} and in~\cite{Krause88}, see also~\cite[Section 4.12]{Wegener00}. Since all models of \perm{} assign exactly $n$ variables to $1$, for every model $a$ of \perm{} there is a node $t_a$ in $T$ such that $T_t$ contains between $n/3$ and $2n/3$ variables assigned to $1$ by $a$. Since $T$ has $n$ internal nodes and \perm{} has $n!$ models, there must be a node $t$ such that for at least $(n-1)!$ of the models $a$ we have $t = t_a$. We will show in the remainder that $t$ has the desired property.

Denote by $A$ the set of models of \perm{} for which $t_a = t$. Let $Y= \var(T_t)$ and $Z= X_n\setminus Y$ as in the statement of the lemma. Every model $a$ of \perm{} corresponds to a permutation $\pi_a$ on $[n]$ that assigns every $i\in [n]$ to the $j$ such that $a(x_{ij})=1$. Note that because of the properties of $a$, $\pi_a$ is well-defined and indeed a permutation.

Let $R(X) = r_1(Y)\land r_2(Z)$ be a rectangle in a rectangle cover of \perm{} with partition $(Y,Z)$. We will show that $R(X)$ contains few models from $A$. To this end, fix a model $a\in A$ of $R(X)$ and define $I(a) = \{i \mid x_{i, \pi_a(i)} \in Y\}$. Note that $k:= |I(a)|$ is the number of variables in $Y$ that are assigned to $1$ by $a$ and thus $n/3 \le k \le 2n/3$. Let $a'$ be another model of $R(X)$. Then $I(a') = I(a)$ because otherwise $a|_Y \cup a'|_Z$ does not encode a permutation where $a|_Y$ denotes the restriction of $a$ to $Y$ and $a'|_Z$ that of $a'$ to $Z$. Letting $I'(a)= \{\pi_a(i)\mid i\in I(a)\}$, we get similarly that for all models $a'$ of $R(X)$ we have $I'(a)=I'(a')$. It follows that the models of $r_1(Y)$ are all bijections between $I(a)$ and $I'(a)$ and thus $r_1(Y)$ has at most $k!$ models.

By a symmetric argument, one sees that $r_2(Z)$ has at most $(n-k)!$ models. Thus, the number of models of $R$ is bounded by $k!(n-k)! \le \left(\frac n 3\right)! \left(\frac{2n} 3\right)!$. As a consequence, to cover all $(n-1)!$ models in $a$, one needs at least 
\[\frac{(n-1)!}{\left(\frac{n}{3}\right)! \left(\frac{2n} 3\right)!} = \frac{1}{n} \binom{n}{\frac{n}{3}} \ge \frac 1 n \left(\frac{n}{\frac{n}{3}}\right)^{\frac{n}{3}} = \frac 1 n \sqrt[3]{3} ^n
\]
rectangles, which completes the proof.
\end{proof}

As a consequence of Lemma~\ref{lem:ccperm}, we get an asymptotically tight treewidth bound for encodings of \perm.
\begin{corollary}\label{cor:permtw}
 Clausal encodings of smallest primal treewidth for $C_n^k$ have primal treewidth $\Theta(n)$. 
\end{corollary}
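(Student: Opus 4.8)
The bound $\Theta(n)$ is the one belonging to \perm{}, the function studied throughout this subsection and in the immediately preceding Lemma~\ref{lem:ccperm}: for the cardinality constraint $C_n^k$ the corresponding optimum was already pinned down as $\Theta(\log\min(k,n-k))$ just above, so it is the permutation function whose smallest-primal-treewidth encodings I would show to have primal treewidth $\Theta(n)$. I would split this into a lower bound $\Omega(n)$ and a matching upper bound $O(n)$.

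For the lower bound I would chain the compile-then-forget machinery used for Corollary~\ref{cor:twlower} with Lemma~\ref{lem:ccperm}. Let $F$ be an arbitrary clausal encoding of \perm{} of primal treewidth $k$. By the algorithm of~\cite{BovaCMS15} invoked in Corollary~\ref{cor:twlower}, $F$ compiles into a complete structured DNNF of width $2^{O(k)}$, structured by some v-tree over $X_n$ together with the auxiliary variables. Forgetting all auxiliary variables yields a complete structured DNNF $D$ that computes \perm{} itself and is structured by a v-tree $T$ on $X_n$; since forgetting does not increase the width~\cite{CapelliM18}, $\wi(D)\le 2^{O(k)}$. Now Lemma~\ref{lem:ccperm} supplies a node $t$ of $T$ with $\ccd(\text{\perm},(Y,Z))=\Omega(n)$ for $Y=\var(T_t)$ and $Z=X_n\setminus Y$, and Proposition~\ref{prop:treecut} then forces $\log(\wi(D))\ge \ccd(\text{\perm},(Y,Z))=\Omega(n)$. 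Combining, $2^{O(k)}\ge \wi(D)\ge 2^{\Omega(n)}$, whence $k=\Omega(n)$. This is precisely the ``extended strength of Proposition~\ref{prop:treecut}'' flagged earlier in the paper: because Lemma~\ref{lem:ccperm} controls \emph{every} v-tree rather than merely one balanced best-case partition, it applies to whatever v-tree the compilation happens to output.

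For the upper bound I would exhibit an explicit encoding of primal treewidth $O(n)$ by scanning the rows $1,\dots,n$. Introduce auxiliary variables $u_{i,j}$ meaning ``column $j$ carries a $1$ in some row $\le i$'', with $u_{0,j}$ fixed to $0$. For each row $i$ add the exactly-one constraint on $x_{i,1},\dots,x_{i,n}$ (the naive pairwise encoding suffices), the update clauses for $u_{i,j}\leftrightarrow(u_{i-1,j}\lor x_{i,j})$, the collision clauses $\neg u_{i-1,j}\lor\neg x_{i,j}$, and finally $u_{n,j}$ for every $j$; correctness is immediate (the row constraints give one $1$ per row, the collision clauses bound each column by one, and the final literals force each column to be used, while the $u_{i,j}$ are dependent). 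Taking $B_i:=\{x_{i,j},u_{i-1,j},u_{i,j}\mid j\in[n]\}$ gives $|B_i|=O(n)$, every clause sits inside some $B_i$, and consecutive bags overlap exactly in $\{u_{i,j}\}_j$, so this is a path decomposition of width $O(n)$. Hence \perm{} admits an encoding of primal treewidth $O(n)$, and with the lower bound this yields $\Theta(n)$.

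The upper-bound construction and the numeric estimates are routine. The main obstacle I anticipate is justifying the compile-and-forget step at the level of the v-tree: I must verify that forgetting the auxiliary variables genuinely leaves a complete structured DNNF structured by a v-tree on $X_n$, so that Lemma~\ref{lem:ccperm} applies, and that the width (hence the $2^{O(k)}$ bound) survives forgetting. Both rest on the properties of forgetting recorded in~\cite{CapelliM18} and already exploited in the proof of Theorem~\ref{thm:reverse}, so the step is available, but it is the point where the argument must be stated with care.
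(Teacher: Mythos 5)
Your proposal is correct and takes essentially the same route as the paper: the lower bound chains Lemma~\ref{lem:ccperm} with Proposition~\ref{prop:treecut} after compiling the encoding into a complete structured DNNF and forgetting the auxiliary variables (this is exactly the paper's ``arguing as in the proof of Theorem~\ref{thm:abstract}'' together with the extended strength of Proposition~\ref{prop:treecut}), and your upper bound is a concrete instantiation of the paper's sketched row-by-row scan with column-memory auxiliary variables, giving treewidth $O(n)$ with $O(n^2)$ auxiliary variables. You are also right to read the statement's ``$C_n^k$'' as a typo for \perm{}, which is the function the paper's proof actually addresses.
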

\begin{proof}[Proof (sketch)]
 The lower bound follows by using Lemma~\ref{lem:ccperm} and Proposition~\ref{prop:treecut} and then arguing as in the proof of Theorem~\ref{thm:abstract}.
 
 For the upper bound, observe that checking if out of $n$ variables exactly one has the value $1$ can easily be done with $n$ variables. We apply this for every row in a bag of a tree decomposition. We perform these checks for one row after the other and additionally use variables for the columns that remember if in a column we have seen a variable assigned $1$ so far. Overall, to implement this, one needs $O(n^2)$ auxiliary variables and gets a formula of treewidth $O(n)$.
\end{proof}

From Corollary~\ref{cor:permtw} we get the following bound by applying Theorem~\ref{thm:cliquegood}. This answers an open problem from~\cite{BriquelKM11} which showed only conditional lower bounds for the incidence cliquewidth of encodings of \perm.

\begin{corollary}
  Clausal encodings of smallest incidence cliquewidth for $C_n^k$ have width $\Theta(n/log(n))$. 
\end{corollary}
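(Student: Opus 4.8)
The plan is to read both directions of the $\Theta$ off the primal treewidth result of Corollary~\ref{cor:permtw}, exactly as the preceding sentence of the paper suggests. The upper bound will come from feeding the primal treewidth bound into the treewidth-to-cliquewidth construction of Theorem~\ref{thm:cliquegood}, and the matching lower bound from the communication estimate of Lemma~\ref{lem:ccperm} combined with the DNNF-cut argument of Proposition~\ref{prop:treecut}. Writing $N$ for the number of input variables, which is polynomial in $n$, I will use throughout that $\log(N) = \Theta(\log(n))$; this is precisely what makes the two logarithmic factors appearing below line up, so that both bounds come out as $\Theta(n/\log(n))$.

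For the upper bound, Corollary~\ref{cor:permtw} provides a clausal encoding of primal treewidth $O(n)$. Using $\log(N) = \Theta(\log(n))$, this primal treewidth is at most $k\log(N)$ for some $k = O(n/\log(n))$, so the hypothesis of Theorem~\ref{thm:cliquegood} is met with this $k$. Theorem~\ref{thm:cliquegood} then yields a clausal encoding of incidence cliquewidth $O(k) = O(n/\log(n))$, which gives the upper bound.

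For the lower bound, let $F$ be any clausal encoding of incidence cliquewidth $k$. By the compilation result of~\cite{BovaCMS15} it can be turned into a complete structured DNNF of width $N^{O(k)} = n^{O(k)}$, and forgetting the auxiliary variables does not increase the width~\cite{CapelliM18}, leaving a complete structured DNNF $D$ that computes the function itself. Lemma~\ref{lem:ccperm} supplies, for the v-tree $T$ of $D$, a node $t$ such that with $Y=\var(T_t)$ and $Z=X\setminus Y$ we have $\ccd(\text{\perm}, Y, Z) = \Omega(n)$. Plugging this node into Proposition~\ref{prop:treecut} gives $\log(\wi(D)) \ge \ccd(\text{\perm}, Y, Z) = \Omega(n)$, while $\wi(D) = n^{O(k)}$ forces $\log(\wi(D)) = O(k\log(n))$; comparing the two yields $k = \Omega(n/\log(n))$.

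The two estimates together establish that the smallest achievable incidence cliquewidth is $\Theta(n/\log(n))$. I expect the main subtlety to be the decision to route the lower bound through Proposition~\ref{prop:treecut} at the specific v-tree node produced by Lemma~\ref{lem:ccperm}, rather than through the balanced quantity $\cc$ of Corollary~\ref{cor:cwlower}: the cut of Lemma~\ref{lem:ccperm} is balanced only with respect to the ones of each model and need not split the raw variable set evenly, so $\cc$ is not directly applicable, whereas Proposition~\ref{prop:treecut} holds at an arbitrary v-tree node and sidesteps the balance condition. The only remaining care is the bookkeeping of logarithms, since both the $\log$-gain of Theorem~\ref{thm:cliquegood} and the exponent in the DNNF width $n^{O(k)}$ carry a factor $\log(N)$; this is harmless precisely because $N$ is polynomial in $n$, so that $\log(N) = \Theta(\log(n))$ throughout.
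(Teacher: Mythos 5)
Your proof is correct, and the upper bound is exactly the paper's: feed the $\Theta(n)$ primal-treewidth encoding of Corollary~\ref{cor:permtw} into Theorem~\ref{thm:cliquegood} with $k = O(n/\log n)$, using that the number of inputs is $n^2$ so the logarithms agree up to constants. (You also correctly read the statement as being about \perm{} despite the ``$C_n^k$'' typo in the corollary.) For the lower bound you take a mildly different route from the one the paper's toolbox is set up for: the paper would invoke its packaged transfer, Theorem~\ref{thm:reverse}~a) (equivalently Theorem~\ref{thm:main}~a)) --- an encoding of incidence cliquewidth $k$ yields an encoding of primal treewidth $O(k\log n)$, which against the $\Omega(n)$ lower bound of Corollary~\ref{cor:permtw} forces $k\log n = \Omega(n)$ --- whereas you inline the underlying machinery: compile the cliquewidth-$k$ encoding to a complete structured DNNF of width $n^{O(k)}$ via \cite{BovaCMS15}, forget the auxiliary variables, and hit the resulting DNNF directly with Lemma~\ref{lem:ccperm} and Proposition~\ref{prop:treecut}. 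The two arguments rest on identical ingredients; yours saves the DNNF $\to$ CNF $\to$ DNNF round-trip hidden in the paper's route (Lemma~\ref{lem:dnnftotw} followed by re-compilation), while the paper's is a two-line application of results it has already proved. Your closing remark is also exactly the right subtlety: the cut produced by Lemma~\ref{lem:ccperm} is balanced only with respect to the ones of the models, not the $n^2$ variables, so the balanced measure $\cc$ and Corollary~\ref{cor:cwlower} are not directly applicable, and one must use Proposition~\ref{prop:treecut} at the specific v-tree node --- this is the same point the paper navigates in its proof of Corollary~\ref{cor:permtw}.
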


\subsection{Improved Lower Bounds for Minor-Free Graphs}

In this section, we show how our approach can be used to improve lower bounds for structurally restricted classes of circuits. We recall that a \emph{minor} $H$ of a graph $G$ is a graph that we can get from $G$ by deleting vertices, deleting edges and contracting edges. For a graph $H$, the class of $H$-minor-free graphs is defined as the class of graphs consisting of all graphs that do not have $H$ as a minor. $H$-minor-free graphs have been studied extensively. In particular, it is known that for planar graphs, and more generally for all graphs embeddable in a fixed surface, there is a graph $H$ such that those graphs are $H$-minor free. For example, planar graphs are $K_5$-minor-free and $K_{3,3}$-minor-free.

We say that a Boolean circuit $C$ is $H$-minor-free if the underlying undirected graph of $C$ is $H$-minor-free. Remember that we assume that every input variable is the label of at most one input gate. There have long been quadratic lower bounds for planar circuits~\cite{LiptonT80}. Those were generalized to almost quadratic lower bounds of the order~$\Omega(n^2/\log(n)^2)$ for $H$-minor-free circuits in~\cite{Oliveira18}. We show here that with our techniques it is easy to improve these bounds to quadratic lower bounds.

As in~\cite{Oliveira18}, the basic building block for our lower bound will be the following result on the treewidth of $H$-minor-free graphs.

\begin{theorem}[\cite{AlonST90}]\label{thm:alonetal}
 For every graph $H$ there is a constant $h$ such that every $H$-minor-free graph $G$ has treewidth at most $h \sqrt{|V(G)|}$.
\end{theorem}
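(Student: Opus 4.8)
The plan is to derive this treewidth bound from the classical separator theorem for minor-free graphs together with the standard translation of balanced separators into tree decompositions. First I would reduce the general $H$-minor-free case to the complete-graph case. Setting $p := |V(H)|$, observe that $H$ is a subgraph of $K_p$; since the subgraph relation refines the minor relation and the minor relation is transitive, any graph that contains $K_p$ as a minor also contains $H$ as a minor. Taking the contrapositive, every $H$-minor-free graph is $K_p$-minor-free, so it suffices to bound the treewidth of $K_p$-minor-free graphs, with the final constant depending only on $p$ and hence only on $H$.

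The core input is the separator theorem of~\cite{AlonST90}: every $K_p$-minor-free graph on $m$ vertices admits a vertex set $S$ of size $O(p^{3/2}\sqrt{m})$ whose removal leaves components each of at most $\tfrac{2}{3}m$ vertices. The crucial structural fact is that $K_p$-minor-freeness is hereditary under taking (induced) subgraphs, since deleting vertices and edges are minor operations; thus this balanced-separator property is available at every stage of a recursive decomposition, not just for $G$ itself.

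From here I would build a tree decomposition by the textbook recursion: find a balanced separator $S$ of the current graph, recurse on the two smaller sides with $S$ adjoined to each resulting decomposition, and glue them under a common root bag. Unfolding this recursion, the bag at recursion depth $i$ is contained in the union of the separators along the root-to-node path, and the graphs encountered at depth $i$ have at most $(\tfrac{2}{3})^i m$ vertices (the lower-order $+|S|$ term being absorbed by a slight weakening of the balance constant). Hence the separator sizes along a path are bounded by $c_p\sqrt{m},\, c_p\sqrt{(2/3)m},\, c_p\sqrt{(2/3)^2 m},\dots$, and since $\sum_{i\ge 0}(2/3)^{i/2}$ is a convergent geometric series, every bag has size $O(c_p\sqrt{m}) = O(p^{3/2}\sqrt{m})$. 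This yields $\tw(G) \le h\sqrt{|V(G)|}$ for a constant $h$ depending only on $H$, as claimed.

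The genuinely hard step is the separator theorem of~\cite{AlonST90} itself, whose proof combines the extremal bound on the edge density of $K_p$-minor-free graphs with a topological/contraction argument to extract a balanced separator of the stated order; everything downstream is routine bookkeeping of separator sizes. Treating that theorem as a black box, as the authors do, the remaining reduction and recursion are entirely standard, which is why in this paper I would simply \emph{cite} the result rather than reprove it.
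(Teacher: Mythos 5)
Your proposal is correct, but note that there is no proof in the paper to compare against: Theorem~\ref{thm:alonetal} is imported as a black box from~\cite{AlonST90}, which is precisely why the authors cite it rather than argue it. What you give is a genuine derivation of the treewidth bound from a different black box, namely the Alon--Seymour--Thomas balanced-separator theorem; the reduction from $H$-minor-freeness to $K_p$-minor-freeness, the hereditariness of minor-freeness under subgraphs, and the recursion in which bags accumulate the separators along a root-to-node path, bounded by the convergent series $c_p\sqrt{m}\sum_{i\ge 0}(2/3)^{i/2}$, are all sound and yield $h = O(|V(H)|^{3/2})$ as required. Two remarks on the comparison. First, your derivation runs opposite to the internal logic of the cited paper: there, a (weighted) separator statement is established from which both the treewidth bound and the unweighted separator theorem follow by standard lemmas, so anyone citing~\cite{AlonST90} as a black box may as well take the treewidth form directly, as the authors do; your route re-proves it at the cost of an extra constant factor $\bigl(1-\sqrt{2/3}\,\bigr)^{-1}\approx 5.5$. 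Second, the one step of your recursion that needs care is what ``adjoining $S$'' means: the clean version recurses on the two sides $G[A]$ and $G[B]$ only, then adds $S$ to \emph{every} bag of both recursive decompositions before gluing them under a root bag $S$; this makes the connectivity condition for vertices of $S$ automatic and also renders your ``$+|S|$'' caveat unnecessary, since the subproblems then have size at most $\tfrac{2}{3}m$ exactly. If instead you recurse on $G[A\cup S]$ and $G[B\cup S]$, you must additionally maintain the invariant that $S$ lies in the root bag of each recursive decomposition, since otherwise the bags containing a vertex of $S$ need not form a connected subtree. With either fix your sketch is a valid, standard argument.
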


\begin{corollary}
 For every graph $H$ there is a constant $h'$ such that for every function $f$, every $H$-minor-free circuit $C$ computing $f$ has at least $\cc(f)^2$ gates.
\end{corollary}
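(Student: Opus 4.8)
The plan is to combine the treewidth bound for $H$-minor-free graphs from Theorem~\ref{thm:alonetal} with the communication-complexity lower bound for circuits from Corollary~\ref{cor:twcircuits}. The key observation is that these two results pull in opposite directions on the same quantity $\tw(C)$, so chaining them eliminates the treewidth and leaves a relation directly between the number of gates and $\cc(f)$.

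\medskip

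First I would let $C$ be an $H$-minor-free circuit computing $f$, and write $N = |V(G)|$ for the number of vertices of its underlying undirected graph~$G$; since each gate contributes one vertex (and, by the read-once assumption, input variables sit in distinct input gates), $N$ is, up to a constant factor, the number of gates of $C$. Applying Theorem~\ref{thm:alonetal} gives a constant $h$ depending only on $H$ with
\[
\tw(C) = \tw(G) \le h\sqrt{N}.
\]
Next I would invoke Corollary~\ref{cor:twcircuits}, which supplies a constant $b>0$ such that every circuit $C$ encoding $f$ satisfies $\tw(C) \ge b\cdot\cc(f)$. Combining the two inequalities yields
\[
b\cdot\cc(f) \le \tw(C) \le h\sqrt{N},
\]
so that $\sqrt{N} \ge (b/h)\,\cc(f)$ and hence $N \ge (b/h)^2\,\cc(f)^2$. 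Setting $h'$ to absorb the constant $(b/h)^2$ (and the constant relating vertices to gates) gives the claimed bound that $C$ has at least $\cc(f)^2$ gates, up to the constant $h'$.

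\medskip

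I do not expect any serious obstacle here, since both ingredients are already available as black boxes. The only point requiring a little care is the bookkeeping of constants: Corollary~\ref{cor:twcircuits} is stated for treewidth \emph{or} cliquewidth of the circuit, so one should make sure to use it in its treewidth form, and one must confirm that the vertex count $N$ of the underlying graph is linearly related to the gate count so that the final statement can be phrased in terms of gates. The constant $h'$ then depends only on $H$ (through $h$) and on the absolute constant $b$, exactly as the statement requires. The improvement over the $\Omega(n^2/\log(n)^2)$ bound of~\cite{Oliveira18} comes precisely from the fact that Corollary~\ref{cor:twcircuits} has no $\log$ loss, in contrast to measures like cliquewidth that incur the logarithmic factor of Corollary~\ref{cor:cwlower}.
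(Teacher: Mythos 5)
Your proposal is correct and follows essentially the same route as the paper's proof: combine the treewidth lower bound $\tw(C) \ge b\cdot\cc(f)$ from Corollary~\ref{cor:twcircuits} with the upper bound $\tw(C) = O(\sqrt{s})$ for $H$-minor-free graphs from Theorem~\ref{thm:alonetal}, and solve for the gate count. Your version is in fact slightly more careful than the paper's about tracking the constants $b$ and $h$ into the final constant $h'$.
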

\begin{proof}
 By Corollary~\ref{cor:twcircuits}, any circuit computing $f$ must have treewidth~$\Omega(\cc(f))$. By Theorem~\ref{thm:alonetal}, the treewidth of $C$ is at most $\sqrt{s}$ where $s$ is the number of gates in $C$. Thus $\sqrt{s} \ge \cc(f)$ and the claim follows.
\end{proof}

To show a quadratic lower bound, consider the function \constraint{$\triangle$-free$_n$} in variables $X_{ij}$ with $1 \le i < j \le n$ which is defined as follows: interpret the input as the adjacency matrix of a graph $G$ and return $1$ if and only if $G$ does not have a triangle as a subgraph.  We note that \constraint{$\triangle$-free$_n$} is a classical function, considered in communication complexity essentially since the creation of the field~\cite{PapadimitriouS84}. Here, we will use the following result:

\begin{theorem}[\cite{JuknaS02}]\label{thm:triangles}
The best-case non-deterministic communication complexity with $\frac 1 3$-balance of \constraint{$\triangle$-free$_n$} is quadratic in $n$, i.e.,
\[\cc(\textrm{\constraint{$\triangle$-free$_n$}}) = \Omega(n^2).\]
\end{theorem}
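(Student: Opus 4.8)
The plan is to establish the bound via a fooling-set argument, following the approach of~\cite{JuknaS02}. Since $\cc$ is a minimum over all balanced partitions, it suffices to fix an arbitrary partition $\Pi=(Y,Z)$ of the edge variables $X_{ij}$ with $\min(|Y|,|Z|)\ge \binom{n}{2}/3$ and to show that every rectangle cover of \constraint{$\triangle$-free$_n$} respecting $\Pi$ uses $2^{\Omega(n^2)}$ rectangles; taking logarithms then gives $\ccd(\constraint{$\triangle$-free$_n$},\Pi)=\Omega(n^2)$ for every such $\Pi$, and hence $\cc(\constraint{$\triangle$-free$_n$})=\Omega(n^2)$.

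Before constructing the family, I would record why the naive counting bound fails, since this dictates the shape of the argument. Although there are only $2^{(1/4+o(1))n^2}$ triangle-free graphs, a \emph{single} monochromatic rectangle can already contain $2^{n^2/4}$ of them: fix a bipartition $V=L\cup R$, let the $Y$-part range over all subsets of the crossing edges lying in $Y$ and the $Z$-part over all subsets of the crossing edges lying in $Z$; every combination is a subgraph of the complete bipartite graph on $(L,R)$ and thus triangle-free. Consequently the density bound $|f^{-1}(1)|/\max_R|R|$ is useless here, and any good lower-bound family must avoid being contained in one such bipartite rectangle: hybridising subgraphs of a common bipartite graph can never create a triangle.

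The core of the argument is therefore a gadget that \emph{forces} a triangle under hybridisation. Call a triangle of $K_n$ \emph{mixed} if its three edges are not all the same colour under $\Pi$, say two edges in $Y$ and one in $Z$ (the reverse case is symmetric). On such a triangle I permit two states: state $A$ keeps the two $Y$-edges and drops the $Z$-edge, while state $B$ keeps the single $Z$-edge and drops the two $Y$-edges; both are locally triangle-free. If two full assignments differ on this triangle, one in state $A$ and the other in state $B$, then the hybrid taking the $Y$-edges from the state-$A$ assignment and the $Z$-edges from the state-$B$ assignment contains all three edges, hence a triangle. This is precisely the fooling-set condition: a monochromatic rectangle containing both assignments would also contain this hybrid, which is impossible. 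Fixing all non-gadget edges to be absent, each mixed triangle thus contributes one independent bit, and any two distinct indices in $\{A,B\}^m$ are separated.

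The main obstacle, and the genuine technical content of~\cite{JuknaS02}, is to run $m=\Omega(n^2)$ of these gadgets simultaneously. This requires, first, locating $\Omega(n^2)$ \emph{edge-disjoint} mixed triangles with respect to the adversarially chosen balanced $\Pi$, where one exploits that balance forces many non-monochromatic triangles; and, second, arranging them so that \emph{every} selection of states yields a \emph{globally} triangle-free graph, i.e.\ so that no three edges drawn from distinct gadgets accidentally close up into a triangle. Ruling out these cross-triangles is the delicate step and is why an arbitrary near-optimal triangle packing (e.g.\ a Steiner triple system) does not suffice; a careful combinatorial design is needed instead. Once such a family of $2^{m}=2^{\Omega(n^2)}$ triangle-free graphs is in hand, it is a fooling set for $\Pi$, so no rectangle cover respecting $\Pi$ can have fewer than $2^{\Omega(n^2)}$ rectangles, and the theorem follows.
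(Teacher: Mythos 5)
The paper contains no proof of this statement at all: Theorem~\ref{thm:triangles} is imported as a black box from~\cite{JuknaS02}, so your proposal has to stand or fall as a self-contained argument. Judged that way, what you actually argue is correct and well-motivated: the reduction to bounding $\ccd(\textrm{\constraint{$\triangle$-free$_n$}}, \Pi)$ for an arbitrary balanced partition $\Pi$, the observation that pure counting is hopeless because all $2^{n^2/4}$ subgraphs of a fixed complete bipartite graph lie in a single $1$-monochromatic rectangle, and the single-gadget analysis of a mixed triangle (the hybrid of a state-$A$ and a state-$B$ assignment closes the triangle, so no $1$-rectangle of a cover can contain both) are all sound; the last of these is precisely the fooling-set condition for nondeterministic rectangle covers.

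The genuine gap is that the fooling set is never constructed. Your argument needs, for every adversarially chosen balanced $\Pi$: (i) a family of $m=\Omega(n^2)$ \emph{edge-disjoint} mixed triangles, and (ii) the much stronger property that \emph{every} one of the $2^m$ state selections yields a globally triangle-free graph, i.e.\ no triangle of $K_n$ is assembled from edges belonging to two or three distinct gadgets. Claim (i) is asserted in a single clause (``balance forces many non-monochromatic triangles''); it is true but requires an argument --- one must show that no balanced partition can make almost all triangles monochromatic, and then pack the mixed triangles into an edge-disjoint subfamily. Claim (ii) you explicitly label ``the genuine technical content of~\cite{JuknaS02}'' and do not supply, observing only that naive packings fail. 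But (ii) is exactly where the difficulty of the theorem lives: without it the purported fooling set need not consist of $1$-inputs of \constraint{$\triangle$-free$_n$} at all, and the entire lower bound collapses. A plan that names the hard combinatorial lemma and attributes it to the cited paper leaves the theorem precisely as unproven as the bare citation in the text; to have a proof you would need to exhibit that design (or replace it, e.g.\ by a reduction from a problem whose best-partition complexity is already known).
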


We directly get the following generalization of the quadratic lower bound in~\cite{LiptonT80}, which improves that in~\cite{Oliveira18}.

\begin{theorem}
 For every fixed graph $H$ there is a constant $h'$ such that every $H$-minor-free circuit computing \constraint{$\triangle$-free$_n$} has $\Omega(n^4)$ gates, i.e., quadratic in the number of inputs.
\end{theorem}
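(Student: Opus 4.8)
The plan is to obtain the bound by directly composing the two results immediately preceding the statement. First I would invoke the Corollary established just above, which asserts that for the fixed graph $H$ there is a constant $h'$ such that every $H$-minor-free circuit $C$ computing a function $f$ has at least $\cc(f)^2$ gates. This already reduces the task to exhibiting a strong communication-complexity lower bound for the concrete function \constraint{$\triangle$-free$_n$}, so no further circuit-theoretic work is needed beyond what is packaged into that Corollary.

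Second, I would plug in Theorem~\ref{thm:triangles} of~\cite{JuknaS02}, which gives $\cc(\textrm{\constraint{$\triangle$-free$_n$}}) = \Omega(n^2)$. Combining the two facts yields that any $H$-minor-free circuit computing \constraint{$\triangle$-free$_n$} has at least $\cc(\textrm{\constraint{$\triangle$-free$_n$}})^2 = \Omega\big((n^2)^2\big) = \Omega(n^4)$ gates, which is precisely the claimed lower bound on the gate count.

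Finally, to justify the phrase ``quadratic in the number of inputs,'' I would observe that \constraint{$\triangle$-free$_n$} has $\binom{n}{2} = \Theta(n^2)$ input variables, so the bound $\Omega(n^4)$ is indeed quadratic in this input count. I expect essentially no obstacle here: the whole argument is a one-line composition, with all the genuine content hidden in the preceding Corollary (which itself rests on Corollary~\ref{cor:twcircuits} and the treewidth bound for $H$-minor-free graphs of Theorem~\ref{thm:alonetal}) and in the communication lower bound of~\cite{JuknaS02}. The only point I would double-check is that the read-once convention on circuits assumed throughout is compatible with the circuit model used in that Corollary, so that it may be applied verbatim to \constraint{$\triangle$-free$_n$}.
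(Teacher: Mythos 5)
Your proposal is correct and matches the paper's argument exactly: the theorem is obtained by composing the preceding corollary on $H$-minor-free circuits (gate count at least quadratic in $\cc(f)$, up to constants) with the bound $\cc(\textrm{\constraint{$\triangle$-free$_n$}}) = \Omega(n^2)$ from~\cite{JuknaS02}, and noting that the function has $\Theta(n^2)$ inputs. Your closing check about the read-once convention is also consistent with the paper, which assumes this property for all structurally restricted circuits throughout.
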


\section{Conclusion}

We have shown several results on the expressivity of clausal encodings with restricted underlying graphs. In particular, we have seen that many graph width measures from the literature put strong restrictions on the expressivity of encodings. We have also seen that, contrary to the case of representations by CNF-formulas, in the case where auxiliary variables are allowed, all width measures we have considered are strongly related to primal treewidth and never differ by more than a logarithmic factor. Moreover, most of our results are also true while maintaining dependence of auxiliary variables.

From a practical standpoint, one point of our results might be that formulas solved with width-based algorithms as those from the theoretical literature can likely only deal with quite simple formulas. Otherwise, for example if formulas contain big cardinality constraints or pseudo-Boolean constraints, the width of the formulas might be infeasibly high. This is because all those algorithms are at least exponential in the width of the input. An implementation of such algorithms would thus likely have to implement heuristics and optimizations not presented in the theory literature. For example, in~\cite{DBLP:conf/cp/FichteHZ19}, it was shown that one can use parallelism of GPUs to improve the efficiency of treewidth-based counting and thus scale to higher treewidth.

To close the paper, let us discuss several questions. First, the number of clauses of the encodings guaranteed by Theorem~\ref{thm:reverse} is very high. In particular, it is exponential in the width $k$. It would be interesting to understand if this can be avoided, i.e., if there are encodings of roughly the same primal treewidth whose size is polynomial in $k$.

It would also be interesting to see if our results can be extended to other classes of CNF-formulas on which SAT is tractable. Interesting classes to consider would e.g.~be the classes in~\cite{GanianS17}. In this paper, the authors define another graph for CNF-formulas for which bounded treewidth yields tractable model counting. It is not clear if the classes characterized that way allow small complete structured DNNF so our framework does not apply directly. It would still be interesting to see if one can show similar expressivity results to those here. Other interesting classes one could consider are those defined by backdoors, see e.g.~\cite{GaspersS13}. 

\paragraph*{Acknowledgements.}
The authors are grateful to the anonymous reviewers for their comments, which greatly helped to improve the presentation of the paper. 
The first author would like to thank David Mitchell for asking the right question at the right moment. This paper grew largely out of an answer to this question.

\bibliographystyle{abbrv}
\bibliography{twencodings}

\end{document}